\title{Between hard and soft thresholding: optimal iterative thresholding algorithms}
\author{Haoyang Liu and Rina Foygel Barber}
\newtheorem*{theorem*}{Theorem}
\newtheorem{theorem}{Theorem}
\newtheorem{lemma}{Lemma}
\newcommand{\reg}{\mathsf{R}}
\newcommand{\loss}{\mathsf{f}}
\newcommand{\thr}{\Psi_s}
\newcommand{\mthr}{\widetilde\Psi_s}
\newcommand{\thrsig}{\Psi_{s;\sigma}}
\newcommand{\mthrsig}{\widetilde\Psi_{s;\sigma}}
\newcommand{\RC}{\gamma}
\newcommand{\mRC}{\widetilde\gamma}
\newcommand{\RTc}{\thr^{\textnormal{RT},c}}
\newcommand{\RTone}{\thr^{\textnormal{RT},1}}
\newcommand{\RTzero}{\thr^{\textnormal{RT},0}}
\newcommand{\RT}{\thr^{\textnormal{RT}}}
\newcommand{\mRT}{\mthr^{\textnormal{RT}}}
\newcommand{\LQ}{\thr^{\ell_q}}
\newcommand{\LQuniv}{\thr^{\ell_{2/3}}}
\newcommand{\HT}{\thr^{\textnormal{HT}}}
\newcommand{\mHT}{\mthr^{\textnormal{HT}}}
\newcommand{\ST}{\thr^{\textnormal{ST}}}
\newcommand{\R}{\mathbb{R}}
\newcommand{\inner}[2]{\langle #1 , #2 \rangle}
\newcommand{\norm}[1]{\| #1 \|}
\newcommand{\fronorm}[1]{\| #1 \|_{\textnormal{F}}}
\newcommand{\Fronorm}[1]{\left\| #1 \right\|_{\textnormal{F}}}
\newcommand{\rank}{\textnormal{rank}}
\newcommand{\Xcal}{\mathcal{X}}
\newcommand{\Ccal}{\mathcal{C}}
\begin{document}

\maketitle

\begin{abstract}
Iterative thresholding algorithms seek to optimize a differentiable objective function over a sparsity or rank constraint by alternating between gradient steps that reduce the objective, and thresholding steps that enforce the constraint. This work examines the choice of the thresholding operator, and asks whether it is possible to achieve stronger guarantees than what is possible with hard thresholding. We develop the notion of relative concavity of a thresholding operator, a quantity that characterizes the worst-case convergence performance of any thresholding operator on the target optimization problem. Surprisingly, we find that commonly used thresholding operators, such as hard thresholding and soft thresholding, are suboptimal in terms of worst-case convergence guarantees. Instead, a general class of thresholding operators, lying between hard thresholding and soft thresholding, is shown to be optimal with the strongest possible convergence guarantee among all thresholding operators. Examples of this general class includes $\ell_q$ thresholding with appropriate choices of $q$, and a newly defined {\em reciprocal thresholding} operator. We also investigate the implications of the improved optimization guarantee in the statistical setting of sparse linear regression, and show that this new class of thresholding operators attain the optimal rate for computationally efficient estimators, matching the Lasso.
\end{abstract}

\section{Introduction}\label{sec:intro}

We consider the general problem of sparse optimization, where we seek to optimize a likelihood function or loss function subject to a sparsity constraint,
\[\min_{x\in\R^d, \norm{x}_0\leq s}\loss(x).\]
Here $\loss:\R^d\rightarrow \R$ is the target function that we would like to minimize, while the constraint $\norm{x}_0\leq s$ requires that the solution vector $x$ has at most $s$ many nonzero entries. Similarly, we may work with a matrix parameter $X\in\R^{n\times m}$ and search for a low-rank solution,
\[\min_{X\in\R^{n\times m},\rank(X)\leq s}\loss(X).\]
Optimization problems over a sparsity constraint or a rank constraint are ubiquitous in high-dimensional statistics and machine learning. Sparsity of a vector parameter $x$ represents the idea that we can model the data using a small fraction of the available features, which, for instance, may correspond to covariates in a regression model or to basis expansion terms in a nonparametric function estimation problem. Similarly, a rank constraint on a matrix parameter $X$ might correspond to an underlying factor model with a small number of factors. We will focus on problems where $\loss$ is a differentiable function, as is often the case for many likelihood models and other loss functions.

In this work, we will study the iterative thresholding approach, where gradient steps that lower the value of the target function $\loss$ are alternated with thresholding steps to enforce the sparsity constraint---for instance, {\em hard thresholding} sets all but the largest $s$ entries to zero, while {\em soft thresholding} shrinks all values towards zero equally until the sparsity constraint is satisfied. (The same ideas apply to a rank constraint, by thresholding or shrinking singular values instead of vector entries. For simplicity, we will primarily discuss the sparse minimization problem, and will return to the low-rank problem later on.)

For sparse minimization of a differentiable target function $\loss(x)$, many existing algorithms can be broadly described as iterating steps of the following form:
\begin{equation}\label{eqn:iter_intro}\begin{cases}
\text{Gradient step: $x'_t = x_{t-1} - \eta_t \cdot \nabla \loss(x_{t-1})$ for some step size $\eta_t$},\\
\text{Sparsity step: $x_t$ = some sparse (or nearly sparse) approximation to $x'_t$}.
\end{cases}\end{equation}
Our aim in this work is to characterize the type of thresholding operators that are likely to be most successful at converging to a good solution, i.e.~to a value of $\loss(x)$ that is as low as possible. Is an iterative thresholding algorithm most likely to succeed if we use hard thresholding, soft thresholding, or yet another form of thresholding to enforce the sparsity constraint?

In this work, we find that the worst-case performance of a thresholding operator, relative to a broad class of target functions $\loss$ that we may want to minimize, is fully characterized by a simple measure that we call the {\em relative concavity}. The relative concavity studies the behavior of the sparse thresholding map $x'_t\mapsto x_t$ in the iterative algorithm~\eqref{eqn:iter_intro}, viewed as an approximate projection onto the space of $s$-sparse vectors. Using relative concavity as a tool to evaluate and compare different thresholding operators, we find that commonly used thresholding operators, for example hard thresholding and soft thresholding, are indeed suboptimal. Instead, we characterize a general class of thresholding operators, lying between hard thresholding and soft thresholding, that we show to be optimal. This class includes $\ell_q$ norm thresholding, where $q\in(0,1)$ is chosen adaptively relative to the particular problem; furthermore, choosing $q=2/3$ is ``universal'' in the sense that it is nearly optimal across all sparse thresholding problems. We also develop the {\em reciprocal thresholding} operator, which enjoys the same optimality guarantees as $\ell_q$ thresholding, but with a closed-form equation for the iterative thresholding step. These simple and efficient iterative thresholding methods are then applied to the statistical setting of sparse linear regression problem:
\begin{equation}
y=X\theta_0+z,
\end{equation}
and are shown to match the Lasso in terms of the resulting guarantee on estimating the true mean vector $X\theta_0$.

\section{Background: sparse minimization}

Before defining relative concavity and the reciprocal thresholding operator, we first review some of the recent literature on hard thresholding and related methods, and define the convexity and smoothness properties of the objective function $\loss$ that we will assume throughout this work.

\subsection{Restricted strong convexity and restricted smoothness}
In many problems in high-dimensional statistics, we aim to optimize loss functions that may be very poorly conditioned in general, but nonetheless exhibit convergence properties of a well-conditioned function when working only with sparse or approximately sparse vectors. This behavior is captured in the notions of restricted strong convexity and restricted smoothness (see e.g.~\citet{negahban2009unified,loh2013regularized} for background).

A differentiable function $\loss:\R^d\rightarrow\R$ satisfies {\em restricted strong convexity}
with parameter $\alpha$ at sparsity level $s$, abbreviated as $(\alpha,s)$-RSC, if
\[\loss(y)\geq \loss(x) + \inner{\nabla \loss(x)}{y-x} + \frac{\alpha}{2}\norm{x-y}^2_2\text{ for all $x,y\in\R^d$ with $\norm{x}_0\leq s,\norm{y}_0\leq s$.}\]
Similarly, $\loss$ satisfies {\em restricted smoothness}
with parameter $\beta$ at sparsity level $s$, abbreviated as $(\beta,s)$-RSM, if
\[\loss(y)\leq \loss(x) + \inner{\nabla \loss(x)}{y-x} + \frac{\beta}{2}\norm{x-y}^2_2\text{ for all $x,y\in\R^d$ with $\norm{x}_0\leq s,\norm{y}_0\leq s$.}\]
Our results will focus on $\kappa = \beta/\alpha$, the {\em condition number} of the function $\loss$ (at the given sparsity level $s$).

\subsection{Iterative hard thresholding}\label{sec:IHT}
Recent work by \citet{jain2014iterative} studies the {\em iterative hard thresholding algorithm},
which alternates between taking a gradient step, $x - \eta\nabla\loss(x)$,
and projecting onto the sparsity constraint. Specifically,
given a target sparsity level $s$ and an initial point $x_0\in\R^d$,
the iterative
step of the algorithm is defined by
\begin{equation}\label{eqn:IHT}
x_t = \HT\big(x_{t-1} - \eta \nabla \loss(x_{t-1})\big),
\end{equation}
where $\HT$ is the ``hard thresholding'' operator, which truncates any vector $z\in\R^d$ to its $s$ largest entries,
\[\big(\HT(z)\big)_i = \begin{cases}
z_i, & i\in S, \\ 0, & i\not\in S,
\end{cases}\]
where $S\subset\{1,\dots,d\}$ indexes the $s$ largest-magnitude entries of $z$.\footnote{To be fully precise, in the case of a tie between different entries of $z$, we may need to choose which entries to keep and which to set to zero. This choice will not matter from the point of view of our theoretical analysis, and from this point on, we will assume that we have fixed some map $z\mapsto S$, mapping each vector $z\in\R^d$ to a set $S\subset\{1,\dots,d\}$ corresponding to the indices of the $s$ largest entries, so that $|S|=s$ and $\min_{i\in S}|z_i| \geq \max_{j\not\in S}|z_j|$, for every $z$. For instance, in the case of a tie between $z_i$ and $z_j$ for the position of the $s$th largest-magnitude entry, we might follow the rule that we choose to keep entry $i$ if $i<j$ and to keep entry $j$ otherwise. Since the exact choice of the rule for breaking ties is not relevant for our results here, we will implicitly assume it to be fixed for the remainder of this paper.}

\paragraph{Restricted optimality for iterative hard thresholding} It is well known that, due to the nonconvexity of the sparsity constraint $\norm{x}_0\leq s$, the iterative hard thresholding algorithm cannot be guaranteed to find the global minimum, $\min_{\norm{x}_0\leq s}\loss(x)$---at least, not without strong assumptions. In other words, it may be the case that $\lim_{t\rightarrow \infty}\loss(x_t)$ is strictly larger than $\min_{\norm{x}_0\leq s}\loss(x)$.
However, \citet{jain2014iterative}'s analysis of the iterative hard thresholding algorithm~\eqref{eqn:IHT} proves that IHT achieves a weaker optimization guarantee, converging to a loss value that is at least as small as the best value attained under a more restricted constraint $\norm{x}_0\leq s'$ where $s'<s$. More precisely, \citet[Theorem 1]{jain2014iterative} prove that, for an objective function $\loss$ satisfying $(\alpha,s)$-RSC and $(\beta,s)$-RSM,
\begin{equation}\label{eqn:jain_IHT}
\loss(x_t) \leq \min_{\norm{y}_0\leq s/(32\kappa^2)}\left\{ \loss(y) + \left(1 - \frac{1}{12\kappa}\right)^t\cdot (\loss(x_0)-\loss(y))\right\},
\end{equation}
where $\kappa=\beta/\alpha$, and where the step size is taken to be $\eta \propto 1/\beta$. In other words, their result proves linear convergence to the bound
\[\lim_{t\rightarrow \infty}\loss(x_t) \leq \min_{\norm{y}_0\leq s/(32\kappa^2)}\loss(y),\]
meaning that while IHT may not find the global minimum of $\loss(x)$ relative to the $s$-sparsity constraint, it is nonetheless guaranteed to perform at least as well as the best $s/(32\kappa^2)$-sparse solution. An analogous result is proved for the low-rank setting, thresholding singular values instead of vector entries.

In this work, we will refer to this type of result as a {\em restricted optimality} guarantee, where the output of an $s$-sparse optimization algorithm is guaranteed to perform well relative to a more restrictive $s'$-sparsity constraint, for some $s'<s$. In particular, we will be interested in the sparsity ratio $s'/s$---the ratio between the sparsity level $s$ used in the algorithm, versus the level $s'$ appearing in the guarantee. Ideally, we would like this ratio to be as close to $1$ as possible, for the strongest possible guarantee.

\subsection{Related literature}

\paragraph{Iterative thresholding}
There exists a vast literature on the properties of iterative thresholding algorithms, especially iterative hard thresholding, regarding the optimization properties and statistical guarantees of these algorithms. Recent results in this area include the work of \citet{blumensath2009iterative,jain2014iterative,chen2015fast,bhatia2015robust,jain2016structured,cai2016optimal,kyrillidis2014matrix}.

Accelerated forms of the iterative hard thresholding algorithm are studied in \citet{kyrillidis2011recipes,blumensath2012accelerated,khanna2017iht}. In particular, \citet{khanna2017iht} finds substantial theoretical and empirical improvement
over the original non-accelerated version of the algorithm. \citet{nguyen2017linear} studies iterative hard thresholding in the context of stochastic gradient descent, where at each step $t$ we only have access to a noisy vector that approximates the true current gradient, $\nabla\loss(x_t)$. The works mentioned here also consider thresholding algorithms for the low-rank setting, truncating singular values instead of vector entries. More broadly, \citet{nguyen2017linear}'s work considers approximate thresholding procedures and more general definitions of sparsity.

To the best of our knowledge, the question of optimality among thresholding operators has not been addressed before, and it is the goal of this work to provide a framework to identify the worst-case convergence behavior of {\em all} thresholding operators and to find the ones that enjoy the optimal restricted optimality guarantee.

\paragraph{Penalized and constrained optimization methods}
The sparse optimization problem can alternately be approximated by a penalized minimization problem,
\[\min_{x\in\R^d}\left\{\loss(x) + \lambda\reg(x)\right\},\]
or a constrained optimization problem,
\[\min_{x\in\R^d}\left\{\loss(x) \ : \ \reg(x) \leq c\right\},\]
where $\reg(x)$ is a sparsity-promoting regularizer, and $\lambda$ and $c$ are tuning parameters controlling the penalization or constraint. Of course, choosing $\reg(x) = \norm{x}_0$ would reduce to the original target optimization problem, but these minimizations are generally only feasible to solve if $\reg(x)$ is some relaxation of the sparsity constraint/penalty. For example, the Lasso~\citep{tibshirani1996regression} uses a convex regularizer, $\reg(x)=\norm{x}_1$, which enjoys many strong guarantees of accurate estimation of the true sparse signal $x$ and of its support. More recently, many nonconvex penalties have been proposed that reduce the shrinkage bias of the Lasso, at the cost of a more challenging optimization problem, such as the SCAD~\citep{fan2001variable} and MCP~\citep{zhang2010nearly} penalties. The $\ell_q$ norm, for $q\in(0,1)$, has also been extensively studied as a compromise between the convex but biased $\ell_1$ norm (as in the Lasso), and the theoretically optimal but computationally infeasible $\ell_0$ norm (i.e.~the sparsity constraint, $\norm{x}_0\leq s$). Results for the $\ell_q$ norm include work by \citet{chartrand2007exact,foucart2009sparsest,kabashima2009typical,lai2011unconstrained}. \citet{zheng2015does}'s recent work studies the $\ell_q$ norm using the framework of approximate message passing to characterize its superior performance relative to the convex $\ell_1$ norm. While the resulting optimization problem is nonconvex for these alternatives to the $\ell_1$ norm, \citet{loh2013regularized} show that restricted strong convexity in the objective function $\loss$ is sufficient to outweigh bounded concavity in the penalty, to ensure successful optimization within a small error tolerance.

The penalized or constrained formulations of the sparse minimization problem may initially appear to be fundamentally different from the iterative thresholding approach. However, these penalized or constrained problems are often optimized with proximal gradient descent or projected gradient descent algorithms---specifically, for a penalty, the proximal gradient descent algorithm iterates the steps
\[\begin{cases}
\textnormal{Gradient step: }x'_t = x_{t-1} - \eta_t\cdot\nabla\loss(x_{t-1})\text{ for some step size $\eta_t$},\\
\textnormal{Proximal step: }x_t =\arg\min_{x\in\R^d} \left\{\frac{1}{2}\norm{x - x'_t}^2_2 + \eta_t \lambda \reg(x)\right\},\end{cases}\]
while for a constraint, projected gradient descent iterates the steps
\[\begin{cases}
\textnormal{Gradient step: }x'_t = x_{t-1} - \eta_t\cdot\nabla\loss(x_{t-1})\text{ for some step size $\eta_t$},\\
\textnormal{Projection step: }x_t =\arg\min_{x\in\R^d} \left\{\frac{1}{2}\norm{x - x'_t}^2_2 \ : \ \reg(x)\leq c\right\}.\end{cases}\]
Since $\reg(x)$ is a sparsity-promoting regularizer, each iteration $x_t$ will therefore be sparse or approximately sparse. In this way, the penalized loss or constrained loss formulations of the sparse minimization problem can be viewed as analogous to the family of iterated thresholding algorithms, where the thresholding step is replaced by penalizing or constraining a regularizer $\reg(x)$ that is a relaxation of the sparsity constraint. (We will discuss the regularized problem more in Section~\ref{sec:softthresh}.)

\section{Convergence of iterative thresholding}\label{sec:iterative_thresholding}

In this section, we examine the performance of gradient
descent with iterative thresholding, for various choices of the thresholding
operator $\thr$. Specifically, after initializing at any point $x_0\in\R^d$,
the algorithm proceeds by alternating between taking a gradient descent step,
and applying a thresholding operator:
\begin{equation}\label{eqn:iterative_thresh}
x_t = \thr\big(x_{t-1} - \eta_t \nabla \loss(x_{t-1})\big),
\end{equation}
where $\thr: \R^d \rightarrow \{x\in\R^d:\norm{x}_0\leq s\}$ is some thresholding operator that enforces $s$-sparsity at each step.

\paragraph{Step size choice}
Throughout the paper, we will primarily study this generalized iterative thresholding algorithm
under the choice of a universal fixed step size $\eta = 1/\beta$, where
$\beta$ is the restricted smoothness parameter for the function $\loss$. When $\beta$ is unknown, we will also consider the following adaptive choice of step size based on exact line search:
\begin{equation}\label{eqn:iterative_thresh_eta_adaptive}
\begin{cases}
\text{Define }\widetilde{x}_t(\eta) = \thr\big(x_{t-1} - \eta \nabla \loss(x_{t-1})\big),\\
\text{Choose }\eta_t = \max\left\{\eta\geq 0 : \loss(\widetilde{x}_t(\eta))\leq \loss(x_{t-1}) + \inner{\widetilde{x}_t(\eta) - x_{t-1}}{\nabla\loss(x_{t-1})} + \frac{1}{2\eta}\norm{\widetilde{x}_t(\eta) - x_{t-1}}^2_2\right\},\\
\text{Set }x_t = \widetilde{x}_t(\eta_t).
\end{cases}
\end{equation}
Note that, since $x_{t-1}$ and $\widetilde{x}_t(\eta)$ are both $s$-sparse, the curvature condition
\begin{equation}\label{eqn:stepsize_check}
\loss(\widetilde{x}_t(\eta))\leq \loss(x_{t-1}) + \inner{\widetilde{x}_t(\eta) - x_{t-1}}{\nabla\loss(x_{t-1})} + \frac{1}{2\eta}\norm{\widetilde{x}_t(\eta) - x_{t-1}}^2_2\
\end{equation}
is necessarily satisfied for any $\eta\leq \frac{1}{\beta}$ due to the restricted smoothness property. Therefore we will always have $\eta_t \geq \frac{1}{\beta}$. Intuitively, the rule not only helps us get rid of the need to know $\beta$, but also allows the algorithm to take larger step size for more progress when possible. In practice, we would consider using a backtracking line search, that is, starting from a large step size and iteratively shrinking it until condition~\eqref{eqn:stepsize_check} is satisfied. In this way, condition~\eqref{eqn:stepsize_check} is similar to the classical Armijo rule for backtracking line search. For simplicity of our theoretical result we do not treat inexact linesearch in the following.

\paragraph{Restricted optimality}
Given an iterative algorithm that keeps the sparsity of the iterations at $s$, as discussed in Section~\ref{sec:IHT}, we cannot hope to achieve {\em global optimality} (i.e.~a guarantee that $\loss(x_t)$ is nearly as good as the best $s$-sparse solution, $\min_{\norm{x}_0\leq s}\loss(x)$), but we can instead prove guarantees of {\em restricted optimality}, that is $\lim_{t\rightarrow\infty}\loss(x_t)\leq \min_{\norm{x}_0\leq s'}\loss(x)$, for some tighter sparsity constraint $s'\leq s$. We will assess a thresholding operator $\thr$ based on its ability to guarantee restricted optimality relative to a sparsity level $s'$ that is as close to $s$ as possible, i.e.~a sparsity ratio $\rho=s'/s$ that is as close to $1$ as possible.

\subsection{Relative concavity of a thresholding operator}
Let $s\in\{1,\dots,d\}$ be any fixed sparsity level and let $\rho\in[0,1]$.
We define the {\em relative concavity} of an $s$-sparse thresholding
operator $\thr$ relative to sparsity proportion $\rho$ as
\[\RC_{s,\rho}(\thr) = \sup\left\{\frac{\inner{y-\thr(z)}{z-\thr(z)}}{\norm{y-\thr(z)}^2_2} \ : \ y,z\in\R^d, \  \norm{y}_0\leq \rho s, \ y\neq \thr(z)\right\}.\]
Note that $\frac{\inner{y-\thr(z)}{z-\thr(z)}}{\norm{y-\thr(z)}^2_2}$ is the coefficient of projection when projecting $z-\thr(z)$ onto $y-\thr(z)$, and measures how much these two vectors align. To understand the term ``relative concavity'' in the name,
we note that if $\thr$ were a projection operator to some convex constraint set $\mathcal{C}$, then we would have $\inner{y-\thr(z)}{z-\thr(z)}\leq 0$
for any $y\in\mathcal{C}$, by the properties of convex projections.
For sparse estimation, the constraint $\norm{x}_0\leq s$ is not convex; any positive values of $\inner{y-\thr(z)}{z-\thr(z)}$ with $\norm{y}_0\leq s$ measure the extent to which the thresholding operator $\thr$ behaves {\em differently} from a convex projection. By taking a more restrictive constraint on $y$, namely $\norm{y}_0\leq \rho s$ rather than $\norm{y}_0\leq s$, we reduce this measure of concavity; the relative concavity of $\thr$ will be smaller for lower values of $\rho$.

This notion of relative concavity is closely related to the {\em local concavity coefficients} developed in \citet{barber2017gradient} for the purpose of studying projected gradient descent with an arbitrary nonconvex constraint. We will compare the two later on, after presenting our main theorems.

\subsection{Relative concavity and iterative thresholding}
We now examine how the relative concavity of $\thr$ relates
to the convergence behavior of iterative thresholding with a fixed step size. The main message, casted informally, is this:
\begin{quote}
  Given sparsity levels $s$ and $s'=\rho s$, and an $s$-sparse thresholding operator $\thr$, the  condition $\RC_{s,\rho}(\thr)\leq \frac{1}{2\kappa}$ is both necessary and sufficient for restricted optimality to hold relative to sparsity level $s'$.
\end{quote}

\paragraph{Stationary points}
Before giving our formal results, we start with a warm-up---supposing that $x$ is a stationary point of the iterative thresholding algorithm with step size $\eta = \frac{1}{\beta}$, what guarantees can we give about $\loss(x)$? If $\loss$ satisfies $(\alpha,s)$-RSC, then we know that
\[\loss(y)\geq \loss(x) + \inner{y-x}{\nabla\loss(x)} + \frac{\alpha}{2}\norm{x-y}^2_2\]
for any $s$-sparse $y$. Furthermore, writing $z = x - \eta\nabla\loss(x)$, we know that $\thr(z)=x$ since $x$ is a stationary point. Therefore,
\begin{equation}\label{eqn:stationary} \inner{y-x}{\nabla\loss(x)} =- \beta\inner{y-x}{z-x} \geq -\beta\RC_{s,\rho}(\thr)\norm{x-y}^2_2\geq -\frac{\alpha}{2}\norm{x-y}^2_2,\end{equation}
as long as $y$ is $\rho s$-sparse and the relative concavity satisfies $\RC_{s,\rho}(\thr)\leq \frac{1}{2\kappa}$. In other words, this condition on relative concavity is sufficient to ensure that
\[\loss(x)\leq \min_{\norm{y}_0\leq \rho s}\loss(y)\text{ for any stationary point $x$}.\]
Conversely, if $\RC_{s,\rho}(\thr)>\frac{1}{2\kappa}$, Theorem~\ref{thm:lowerbd} below will construct a stationary point $x$ that {\em fails} to satisfy $\loss(x)\leq\min_{\norm{y}_0\leq\rho s}\loss(y)$.

\paragraph{Convergence results}
Next we turn to results for the iterated thresholding algorithm initialized at an arbitrary $s$-sparse point $x_0$ (for example, initialized at zero). Our first theorem accounts for the sufficiency of the condition.
\begin{theorem}\label{thm:upperbd}
Consider any objective function $\loss:\R^d\rightarrow \R$, any sparsity levels $s\geq s'$, and any $s$-sparse thresholding operator $\thr$. Assume the objective function $\loss$ satisfies $(\alpha,s)$-RSC and $(\beta,s)$-RSM. Let $\rho = s'/s$ and $\kappa = \beta/\alpha$, and assume that
\[\RC_{s,\rho}(\thr) < \frac{1}{2\kappa}.\]
Then, for any $s$-sparse $x_0\in\R^d$ and any $s'$-sparse $y\in\R^d$, the iterated thresholding algorithm~\eqref{eqn:iterative_thresh} initialized at $x_0$ and run with fixed step size $\eta=1/\beta$ satisfies
\[\min_{t=1,\dots,T} \loss(x_t) \leq \loss(y) + \left(\frac{1 - 1/\kappa}{1-2\RC_{s,\rho}(\thr)}\right)^T \cdot \frac{\beta}{2} \norm{x_0 - y}^2_2\]
for each $T\geq 1$. The same result holds for the iterative thresholding algorithm with adaptive step size~\eqref{eqn:iterative_thresh_eta_adaptive}.
\end{theorem}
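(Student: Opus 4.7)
The plan is to build a one-step recursion of the form $\norm{x_t - y}_2^2 \leq r \norm{x_{t-1} - y}_2^2$, where $r = \frac{1-1/\kappa}{1-2\RC_{s,\rho}(\thr)}$, that is valid at any step $t$ where $\loss(x_t) > \loss(y)$, and then iterate. Writing $z_t = x_{t-1} - \eta\nabla\loss(x_{t-1})$ and $x_t = \thr(z_t)$, I would start by applying $(\beta,s)$-RSM between the $s$-sparse iterates $x_{t-1}$ and $x_t$, and $(\alpha,s)$-RSC between $x_{t-1}$ and the $s'$-sparse benchmark $y$ (both are $s$-sparse since $s'\leq s$). Subtracting the two inequalities yields
\[\loss(x_t) - \loss(y) \leq \inner{\nabla\loss(x_{t-1})}{x_t - y} + \tfrac{\beta}{2}\norm{x_t - x_{t-1}}_2^2 - \tfrac{\alpha}{2}\norm{x_{t-1} - y}_2^2.\]

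The next step is to rewrite the gradient as $\nabla\loss(x_{t-1}) = \beta(x_{t-1}-z_t)$ (this is exactly what step size $\eta=1/\beta$ gives), split $x_{t-1} - z_t = (x_{t-1} - x_t) + (x_t - z_t)$, and collapse the pieces involving $x_{t-1} - x_t$ via the polarization identity. Explicitly, $\beta\inner{x_{t-1}-x_t}{x_t-y} + \tfrac{\beta}{2}\norm{x_t - x_{t-1}}_2^2 = \tfrac{\beta}{2}\norm{x_{t-1}-y}_2^2 - \tfrac{\beta}{2}\norm{x_t-y}_2^2$, leaving the residual term $\beta\inner{x_t - z_t}{x_t - y}$. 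At this point I invoke the definition of relative concavity: since $y$ is $\rho s$-sparse and $\thr(z_t)=x_t$,
\[\beta\inner{x_t - z_t}{x_t - y} = \beta\inner{y - x_t}{z_t - x_t} \leq \beta\RC_{s,\rho}(\thr)\norm{y - x_t}_2^2.\]
Combining these pieces gives the clean inequality
\[\loss(x_t) - \loss(y) \leq \tfrac{\beta-\alpha}{2}\norm{x_{t-1}-y}_2^2 - \tfrac{\beta(1-2\RC_{s,\rho}(\thr))}{2}\norm{x_t-y}_2^2.\]

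Now I would split into cases. If $\loss(x_t) \leq \loss(y)$ for some $t \leq T$, the bound in the theorem is immediate. Otherwise the left side is nonnegative for all $t \leq T$, so rearranging yields the contraction $\norm{x_t - y}_2^2 \leq r \norm{x_{t-1} - y}_2^2$ with $r = \frac{1-1/\kappa}{1-2\RC_{s,\rho}(\thr)} < 1$ by hypothesis. Iterating gives $\norm{x_{T-1}-y}_2^2 \leq r^{T-1}\norm{x_0-y}_2^2$, and plugging back into the displayed inequality with $t=T$ and using $\frac{\beta-\alpha}{\beta} = 1 - 1/\kappa = r(1-2\RC_{s,\rho}(\thr))\leq r$ yields $\loss(x_T) - \loss(y) \leq \frac{\beta}{2} r^T \norm{x_0-y}_2^2$, which implies the stated $\min_{t\leq T}$ bound.

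For the adaptive step size version, the linesearch rule~\eqref{eqn:stepsize_check} is designed exactly so that the RSM-style descent inequality still holds with $\beta$ replaced by $1/\eta_t$, and since $\eta_t \geq 1/\beta$ we have $1/\eta_t \leq \beta$; repeating the argument with $1/\eta_t$ in place of $\beta$ gives a one-step contraction factor $(1-\alpha\eta_t)/(1-2\RC_{s,\rho}(\thr)) \leq r$, and the prefactor in the final bound can be loosened to $\beta/2$. The main conceptual step—and really the only nontrivial part—is recognizing the right way to split $x_{t-1} - z_t$ so that the polarization identity cleans up the $\beta$ terms and leaves precisely one residual inner product that is exactly controlled by the definition of relative concavity; everything else is routine bookkeeping.
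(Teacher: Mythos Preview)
Your derivation of the key one-step inequality
\[\loss(x_t) - \loss(y) \leq \tfrac{\beta-\alpha}{2}\norm{x_{t-1}-y}_2^2 - \tfrac{\beta(1-2\RC_{s,\rho}(\thr))}{2}\norm{x_t-y}_2^2\]
is correct and is exactly what the paper obtains (written there as $\frac{1}{2\eta_t}[(1-1/\kappa)\norm{x_{t-1}-y}_2^2 - (1-2\RC_{s,\rho}(\thr))\norm{x_t-y}_2^2]$); the ingredients---RSC at $y$, RSM (or the linesearch condition) at $x_t$, and the definition of relative concavity applied to $\inner{y-x_t}{z_t-x_t}$---are identical.

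Where you diverge is in passing from the one-step inequality to the $T$-step bound. The paper takes a weighted telescoping sum with weights $2\eta_t\, r^{T-t}$ and then observes that the resulting weighted average of $\loss(x_t)$ dominates $\min_t\loss(x_t)$. You instead split into cases: either some $\loss(x_t)\leq\loss(y)$ (done), or else the left side is nonnegative at every step, giving the contraction $\norm{x_t-y}_2^2\leq r\norm{x_{t-1}-y}_2^2$ which you iterate and then feed back at $t=T$. Both routes are valid and yield the same bound; the paper's telescoping avoids the case split and treats all steps symmetrically, while your contraction argument is arguably more transparent about why $r<1$ is the governing rate. Your treatment of the adaptive step size is also correct and mirrors the paper's: the linesearch condition replaces $\beta$ by $1/\eta_t\leq\beta$ in the descent inequality, and $\eta_t\geq1/\beta$ then gives a per-step ratio $\frac{1-\alpha\eta_t}{1-2\RC_{s,\rho}(\thr)}\leq r$. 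The final step $1-1/\kappa\leq r$ uses $\RC_{s,\rho}(\thr)\geq 0$, which the paper also uses tacitly and which follows easily from the definition (or from Lemma~\ref{lem:RC_lowerbd}).
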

\noindent In other words, the condition $\RC_{s,\rho}(\thr)<\frac{1}{2\kappa}$ guarantees restricted optimality on the class of $\kappa$-conditioned objective functions at sparsity proportion $\rho$. Next, we examine the necessity of the bound on $\RC_{s,\rho}(\thr)$. The following result proves that, if $\RC_{s,\rho}(\thr)>\frac{1}{2\kappa}$, then there exists an objective function $\loss(x)$ on which the restricted optimality guarantee fails, when we run iterative thresholding with fixed step size $\eta = \frac{1}{\beta}$.
\begin{theorem}\label{thm:lowerbd}
Consider any sparsity levels $s\geq s'$, any $s$-sparse thresholding operator $\thr$, and any constants $\beta \geq \alpha >0$. Let $\rho=s'/s$ and $\kappa = \beta/\alpha$, and assume that
\[\RC_{s,\rho}(\thr)>\frac{1}{2\kappa}.\]
Then there exists an objective function $\loss(x)$ that satisfies $(\alpha,s)$-RSC and $(\beta,s)$-RSM, and an $s$-sparse $x_0\in\R^d$ and $s'$-sparse $y\in\R^d$, such that the iterated thresholding algorithm~\eqref{eqn:iterative_thresh} run with step size $\eta=1/\beta$ and initialization point $x_0$ satisfies
\[\lim_{t\rightarrow \infty} \loss(x_t) > \loss(y).\]
\end{theorem}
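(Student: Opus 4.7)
The plan is to reverse-engineer the warm-up stationary-point calculation leading to equation~\eqref{eqn:stationary}: whenever the relative concavity strictly exceeds $\frac{1}{2\kappa}$, that chain of inequalities can be violated in the wrong direction, so I should be able to design a quadratic objective $\loss$ for which the iteration gets stuck at a stationary point that strictly underperforms some $s'$-sparse witness.

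First I would unpack the hypothesis: since $\RC_{s,\rho}(\thr) > \frac{1}{2\kappa}$, by the definition of the supremum there exist $z \in \R^d$ and an $s'$-sparse $y \in \R^d$ with $y \neq \thr(z)$ such that, writing $x^* := \thr(z)$, $u := z - x^*$, and $w := y - x^*$,
\[c := \frac{\inner{w}{u}}{\norm{w}^2_2} > \frac{1}{2\kappa}.\]
Note $x^*$ is automatically $s$-sparse since it lies in the image of $\thr$, and $\norm{w}_2>0$ because $y\neq x^*$. Now I would build the objective as $\loss(x) = \frac{1}{2}(x-v)^\top A(x-v)$ with
\[A := \alpha\cdot\frac{ww^\top}{\norm{w}^2_2} + \beta\left(I - \frac{ww^\top}{\norm{w}^2_2}\right)\quad\text{and}\quad v := x^* + \beta A^{-1} u.\]
By construction, $A$ has eigenvalues $\alpha$ (along $w$) and $\beta$ (on its orthogonal complement), so $\alpha I \preceq A \preceq \beta I$ globally and $\loss$ is $(\alpha,s)$-RSC and $(\beta,s)$-RSM. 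Moreover $\nabla\loss(x^*) = A(x^*-v) = -\beta u$, hence
\[\thr\!\left(x^* - \tfrac{1}{\beta}\nabla\loss(x^*)\right) = \thr(x^* + u) = \thr(z) = x^*,\]
so $x^*$ is a fixed point of the iterative thresholding update at step size $1/\beta$.

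Next I would verify $\loss(y)<\loss(x^*)$ directly. Setting $a := x^* - v$, so $Aa = -\beta u$, and observing that $y-v = a+w$, the identities $w^\top A w = \alpha \norm{w}^2_2$ and $\inner{u}{w} = c\norm{w}^2_2$ give
\[2\bigl(\loss(y)-\loss(x^*)\bigr) = 2a^\top A w + w^\top A w = -2\beta\inner{u}{w} + \alpha\norm{w}^2_2 = (\alpha - 2\beta c)\norm{w}^2_2 < 0,\]
where the final inequality uses $c > 1/(2\kappa) = \alpha/(2\beta)$. Initializing $x_0 := x^*$, every iterate remains at $x^*$, so $\lim_{t\to\infty}\loss(x_t) = \loss(x^*) > \loss(y)$, as required.

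The main obstacle is aligning three requirements simultaneously: the Hessian must make the gradient at $x^*$ point exactly along $-u$ (so thresholding traps the iterate), be flat enough along $w$ that $y$ beats $x^*$, and still obey $\alpha I \preceq A \preceq \beta I$. The rank-one construction above is essentially canonical for this: placing the small eigenvalue $\alpha$ precisely along $w$ maximizes the possible drop from $\loss(x^*)$ to $\loss(y)$, and the hypothesis $c > 1/(2\kappa)$ is then exactly what forces the sign of $\alpha - 2\beta c$ to be negative.
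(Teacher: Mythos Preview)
Your proof is correct and is essentially the same construction as the paper's. Both build a quadratic objective whose Hessian has the small eigenvalue $\alpha$ in the direction $y-x^*$ and eigenvalues in $[\alpha,\beta]$ elsewhere, with the gradient at $x^*$ engineered to equal $-\beta(z-x^*)$ so that $x^*$ is a fixed point; the only cosmetic differences are that the paper writes the quadratic as a linear-plus-quadratic form centered at $x^*$ (and allows the remaining eigenvalues $a_2,\dots,a_d$ to be arbitrary in $[\alpha,\beta]$, whereas you fix them all at $\beta$), and the paper introduces an auxiliary $\delta$ to handle the supremum, whereas you directly pick a near-supremum witness with $c>\tfrac{1}{2\kappa}$.
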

\noindent This result is proved by constructing an objective function $\loss$ and an $s$-sparse point $x_0$, such that $\loss(x_0)>\loss(y)$, but $x_0$ is a stationary point of the iterated thresholding algorithm, i.e.~by initializing at $x_0$, we obtain $x_t=x_0$ for all $t\geq 1$. This proves that the iterated thresholding algorithm does not satisfy restricted optimality (at the given sparsity levels), since it is trapped at an $s$-sparse point $x_0$ whose objective value is strictly worse than that of the $s'$-sparse point $y$.

\paragraph{Local vs global guarantees}
We have seen that the condition $\RC_{s,\rho}(\thr)<\frac{1}{2\kappa}$ on the relative concavity, is sufficient to ensure a restricted optimality result, without any initialization conditions---that is, this is a global result, rather than a result that holds only in some neighborhood of the optimal solution. We can compare this framework to the local concavity coefficients of \citet{barber2017gradient}, where the convergence guarantee is of a local type.

In the present work, to achieve our convergence result via relative concavity, we require that, for any $z\in\R^d$ and for $x=\thr(z)$, we have
$\inner{y-x}{z-x}\leq \RC_{s,\rho}(\thr)\norm{x-y}^2_2$ for all $\rho s$-sparse  $y$. For a stationary point $x$ of the iterated thresholding algorithm with step size $\eta=\frac{1}{\beta}$, we would have $z = x - \eta\nabla\loss(x)$, and so the requirement above can be rewritten as
\begin{equation}\label{eqn:compare_RC_LCC_1}\inner{y-x}{-\nabla\loss(x)}\leq \beta\cdot \RC_{s,\rho}(\thr)\cdot \norm{x-y}^2_2 \text{ for all $y$ with $\norm{y}_0\leq \rho s$},\end{equation}
 and the term $\beta\RC_{s,\rho}(\thr)$ on the right-hand side is bounded as $\beta\RC_{s,\rho}(\thr)< \beta\cdot\frac{1}{2\kappa} = \frac{\alpha}{2}$ according to the conditions of Theorem~\ref{thm:upperbd}.

In contrast, \citet{barber2017gradient}'s local concavity coefficient framework requires that, for any $z\in\R^d$ and any $y\in\Ccal$, $\inner{y-x}{z-x} \leq\gamma_x(\Ccal) \cdot \norm{z-x}\cdot \norm{y-x}^2_2$ where $x=P_{\Ccal}(z)$ is the projection of $z$ to the constraint set $\Ccal$.
At a stationary point $x$ of projected gradient descent, we have $z = x - \eta\nabla\loss(x)$, and so equivalently,
\begin{equation}\label{eqn:compare_RC_LCC_2}\inner{y-x}{-\nabla\loss(x)} \leq\gamma_x(\Ccal) \cdot \norm{\nabla\loss(x)}\cdot \norm{y-x}^2_2\text{ for all $y\in\Ccal$ }.\end{equation}
\citet{barber2017gradient}'s main results prove convergence to the global minimum over $\Ccal$, as long as the algorithm is initialized in a neighborhood within which the condition $\gamma_x(\Ccal)\norm{\nabla\loss(x)}<\frac{\alpha}{2}$ holds uniformly.\footnote{The norm $\norm{\cdot}$ measuring the magnitude of the gradient $\nabla \loss(x)$ is not necessarily the $\ell_2$ norm---it is typically chosen to be smaller than the $\ell_2$ norm, for instance, the $\ell_{\infty}$ norm in the case of sparse estimation---but this is not relevant to the comparison here.}

Comparing the relative concavity framework~\eqref{eqn:compare_RC_LCC_1} with \citet{barber2017gradient}'s local concavity coefficient framework~\eqref{eqn:compare_RC_LCC_2}, we see that in both settings, $\inner{y-x}{-\nabla\loss(x)}$ is required to strictly less than $\frac{\alpha}{2}\norm{y-x}^2_2$. The difference is that:
\begin{itemize}
\item \citet{barber2017gradient}'s work requires this bound to hold for all $y$ in the constraint set $\Ccal$, but only for $x$ in some neighborhood the global optimum. If the algorithm is initialized in this neighborhood, then global optimality is guaranteed.
\item Our present work requires this bound to hold only for a more restricted set of $y$'s, i.e.~with the restricted sparsity level $\norm{y}_0\leq s'=\rho s$, but for all $x$ in the constraint set of $s$-sparse vectors. Regardless of where the algorithm is initialized, we obtain a restricted optimality guarantee.
\end{itemize}
Overall, by requiring the concavity bound to hold only for a more restricted set of $y$'s, our new result is able to avoid initialization conditions, at the cost of obtaining restricted optimality rather than global optimality as the final guarantee.

\section{Upper and lower bounds on relative concavity}\label{sec:RC}
We have now seen that the relative concavity $\RC_{s,\rho}(\thr)$
fully characterizes the worst-case performance of the thresholding operator $\thr$
in the gradient descent algorithm, with a convergence guarantee in Theorem~\ref{thm:upperbd} and a matching lower bound in Theorem~\ref{thm:lowerbd} (assuming a fixed step size). In this next section, we turn to the question of investigating the relative concavity in greater detail, in order to determine which thresholding operators are most likely to lead to successful optimization. Along the way, we will focus on the following questions:
\begin{itemize}
  \item What is the relative concavity of commonly used thresholding operators, for example, hard thresholding and soft thresholding?
  \item What is the best (i.e.~lowest) possible relative concavity $\RC_{s,\rho}(\thr)$ among all thresholding operators $\thr$, and which thresholding operators are optimal?
\end{itemize}
\noindent Throughout this section, for providing upper and lower bounds on $\RC_{s,\rho}(\thr)$, we will assume without comment that $s,s'\in\{1,\dots,d\}$ are two sparsity levels satisfying $1\leq s'\leq s\leq d$ and $s+s'\leq d$, and we will define $\rho = s'/s$ as usual.

\subsection{Relative concavity of hard and soft thresholding}\label{sec:relativeconcavity_HT_ST}

First, we consider hard thresholding, $\thr = \HT$. The following result
computes the relative concavity for the hard thresholding operator:
\begin{lemma}\label{lem:RC_HT}
The relative concavity of hard thresholding is given by
\[\RC_{s,\rho}(\HT) = \frac{\sqrt{\rho}}{2}\]
for every sparsity proportion $\rho\in(0,1]$.
\end{lemma}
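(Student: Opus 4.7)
My plan is to compute the supremum directly by exploiting the rigid structure that hard thresholding imposes. Fix any $z\in\R^d$ and let $S=\text{supp}(\HT(z))$, so $|S|=s$ and $\min_{i\in S}|z_i|\geq \max_{j\notin S}|z_j|$. Then $\HT(z)$ is supported on $S$ while $z-\HT(z)$ is supported on $S^c$. For any $y$ with $\norm{y}_0\leq \rho s$ and support $T$, decompose $T=T_1\sqcup T_2$ with $T_1=T\cap S$ and $T_2=T\setminus S$. A term-by-term expansion gives
\[\inner{y-\HT(z)}{z-\HT(z)} = \sum_{i\in T_2} y_i z_i,\qquad \norm{y-\HT(z)}^2_2 = \sum_{i\in T_1}(y_i-z_i)^2 + \sum_{i\in T_2} y_i^2 + \sum_{i\in S\setminus T_1} z_i^2.\]

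Next I optimize over $y$. Since the numerator is independent of $y|_{T_1}$, I may set $y_i=z_i$ on $T_1$ to kill the first term in the denominator without loss of generality. Cauchy-Schwarz gives $\sum_{i\in T_2} y_iz_i \leq \sqrt{(\sum_{i\in T_2}y_i^2)(\sum_{i\in T_2}z_i^2)}$, and the AM-GM inequality $u+w\geq 2\sqrt{uw}$ applied to the remaining denominator pieces yields
\[\frac{\inner{y-\HT(z)}{z-\HT(z)}}{\norm{y-\HT(z)}^2_2} \leq \frac{1}{2}\sqrt{\frac{\sum_{i\in T_2} z_i^2}{\sum_{i\in S\setminus T_1} z_i^2}}.\]
Here the hard-thresholding property enters: every entry of $z$ restricted to $T_2\subseteq S^c$ has magnitude no larger than every entry on $S\setminus T_1\subseteq S$, so comparing averages,
\[\frac{\sum_{i\in T_2} z_i^2}{\sum_{i\in S\setminus T_1} z_i^2} \leq \frac{|T_2|}{|S\setminus T_1|} = \frac{|T_2|}{s-|T_1|} \leq \frac{s'-|T_1|}{s-|T_1|} \leq \frac{s'}{s} = \rho,\]
where the final step uses $s'\leq s$. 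This proves $\RC_{s,\rho}(\HT)\leq \sqrt{\rho}/2$.

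For the matching lower bound I construct an explicit near-extremizer. Choose disjoint index sets $S,T\subseteq\{1,\dots,d\}$ with $|S|=s$, $|T|=s'$ (possible since $s+s'\leq d$). For small $\epsilon>0$, let $z_i=1$ on $S$, $z_i=1-\epsilon$ on $T$, and zero elsewhere, so that $\HT(z)$ is supported on $S$; then take $y=c\cdot z|_T$ for a scalar $c$ to be optimized. A direct computation shows the ratio equals $c(1-\epsilon)s'/(c^2 s'+s)$, which is maximized at $c=\sqrt{s/s'}$ and gives the value $(1-\epsilon)\sqrt{\rho}/2$; sending $\epsilon\downarrow 0$ completes the proof.

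The main obstacle is keeping the index-set bookkeeping through the inequality chain clean, and confirming that the AM-GM step is tight on the extremizer (it is: the construction yields $\sum_{T_2} y_i^2 = s = \sum_{S\setminus T_1} z_i^2$). Edge cases ($T_1=S$, which forces $y=\HT(z)$ and is thus excluded by the supremum; or $z$ vanishing on $S\setminus T_1$, which also vanishes the numerator since $T_2\subseteq S^c$) do not affect the argument.
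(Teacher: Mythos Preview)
Your proof is correct and follows essentially the same approach as the paper's: both decompose the inner product and norm according to the supports $S$ and $S'$, exploit the entrywise ordering $\min_{i\in S}|z_i|\geq\max_{j\notin S}|z_j|$, and optimize the resulting one-parameter ratio. The only cosmetic differences are that the paper introduces the threshold level $\tau$ explicitly and optimizes $\frac{t\sqrt{s'-\ell}}{t^2+s-\ell}$ over $t$, whereas you package the same optimization as Cauchy--Schwarz plus AM--GM; and for the lower bound the paper takes $z=\mathbf{1}_d$ directly (relying on the fixed tie-breaking rule) while you use an $\epsilon$-perturbation to force a unique support and then let $\epsilon\to 0$.
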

\noindent In particular, with Lemma~\ref{lem:RC_HT}, the condition $\RC_{s,\rho}(\HT)<\frac{1}{2\kappa}$ becomes $\rho<\frac{1}{\kappa^2}$. In light of Theorems~\ref{thm:upperbd} and~\ref{thm:lowerbd}, we see that for iterative hard thresholding algorithm, $\rho<\frac{1}{\kappa^2}$ is necessary and sufficient to guarantee restricted optimality with sparsity level $s$ and $s'$, tightening the condition
obtained in \citet{jain2014iterative} where they prove restricted optimality with the sparsity proportion $\rho = \frac{1}{32\kappa^2}$.

We might wonder whether the highly discontinuous nature of the hard thresholding function might not be ideal---by smoothing out the discontinuity, could we attain better performance? However, we find that any continuous thresholding operator with respect to the Euclidean distance in $\R^d$ is necessarily worse than hard thresholding:
\begin{lemma}\label{lem:RC_contin}
For any continuous map $\thr: \R^d \rightarrow \{x\in\R^d:\norm{x}_0\leq s\}$, its relative concavity satisfies
\[\RC_{s,\rho}(\thr)\geq 1\]
for every sparsity proportion $\rho\in(0,1]$.
\end{lemma}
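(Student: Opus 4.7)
The plan is to split the argument based on whether $\thr$ acts as the identity on the $\rho s$-sparse set.

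First, suppose there exists some $z_0 \in \R^d$ with $\norm{z_0}_0 \leq \rho s$ and $\thr(z_0) \neq z_0$. Taking $y = z = z_0$ in the supremum defining $\RC_{s,\rho}(\thr)$ then gives
\[
\frac{\inner{z_0 - \thr(z_0)}{z_0 - \thr(z_0)}}{\norm{z_0 - \thr(z_0)}^2_2} = 1,
\]
which immediately yields $\RC_{s,\rho}(\thr) \geq 1$.

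Otherwise $\thr$ is the identity on every $\rho s$-sparse vector. In this case my plan is to produce a point $z^{\ast} \in \R^d$ at which $a^{\ast} := \thr(z^{\ast})$ has support of size strictly less than $\rho s$ while some coordinate $j \notin \text{supp}(a^{\ast})$ has $(z^{\ast} - a^{\ast})_j \neq 0$. Given such a $z^{\ast}$, setting $y = a^{\ast} + c e_j$ for a small scalar $c$ of the same sign as $(z^{\ast} - a^{\ast})_j$ yields a $\rho s$-sparse vector with $y \neq a^{\ast}$, and the ratio reduces to $(z^{\ast} - a^{\ast})_j / c$, which can be made at least $1$ (in fact arbitrarily large) by taking $|c|$ sufficiently small.

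To locate such a $z^{\ast}$, I would exploit the continuity of $\thr$ together with the stratified structure of the $s$-sparse set $\bigcup_{|T| = s} V_T$, where $V_T = \{x \in \R^d : \text{supp}(x) \subseteq T\}$: pick two vectors $w_1, w_2$ for which the image curve $t \mapsto \thr((1-t) w_1 + t w_2)$ cannot remain inside any single $V_T$, so that by the case-2 identity hypothesis the curve starts at $w_1$, ends at $w_2$, and continuity forces it to pass through a lower-dimensional intersection $V_{T \cap T'}$, at which $\thr$ has support strictly smaller than $\rho s$. The main obstacle is this topological step: one must ensure the construction really does drop the support of $\thr(z^{\ast})$ strictly below $\rho s$ and that the required coordinate of $z^{\ast} - a^{\ast}$ is nonzero, uniformly in $\rho \in (0,1]$; for small $\rho$ (where $2 \rho s \leq s$) two disjoint $\rho s$-supports already fit inside a single $V_T$, so a more elaborate choice of endpoints (for instance, rescaling to drive the image into boundary strata, or replacing the straight segment by a path threading through many $V_T$'s) is needed to force the desired degeneration.
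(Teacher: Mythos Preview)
Your Case~1 is clean and correct. The problem is Case~2, and the gap there is not just a matter of choosing better endpoints: your strategy of forcing $|\textnormal{supp}(\thr(z^\ast))| < \rho s$ can fail outright. Take $\rho s = 1$; then you are asking for some nonzero $z^\ast$ with $\thr(z^\ast) = 0$, and there is no reason such a point must exist. If $\thr(z) \neq 0$ for all $z \neq 0$, no path through the strata $V_T$ will ever drive the image support down to zero, regardless of how the endpoints are threaded. So the Case~2 plan is not merely incomplete---in this regime it is aiming at a conclusion that can simply be false.

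The paper closes this gap with a different case split and a genuinely topological tool. Its Case~1 is: there exists $z \neq 0$ with $\thr(z) = 0$ (then take $y = \epsilon\,\textnormal{sign}(z_i)\,\mathbf{e}_i$ for an index with $z_i\neq 0$ and let $\epsilon \to 0$, giving ratio $|z_i|/\epsilon \to \infty$). Its Case~2 is the complement, $\thr(z) \neq 0$ for all $z \neq 0$, which lets one define the continuous map $g : \mathbb{S}^{d-1} \to \mathbb{S}^{d-1}$ by $g(z) = \thr(z)/\norm{\thr(z)}_2$. Since every value of $g$ is $s$-sparse, $g$ misses the dense point $\mathbf{1}_d/\sqrt{d}$; composing with the stereographic projection from that point gives a continuous map $\mathbb{S}^{d-1} \to \R^{d-1}$, and the Borsuk--Ulam theorem yields $z$ with $g(z) = g(-z)$, hence (after possibly swapping $z \leftrightarrow -z$) $\inner{z}{\thr(z)} \leq 0$. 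Taking $y = 0$, which is $\rho s$-sparse for every $\rho$, then gives
\[
\frac{\inner{y - \thr(z)}{z - \thr(z)}}{\norm{y - \thr(z)}_2^2} \;=\; 1 - \frac{\inner{z}{\thr(z)}}{\norm{\thr(z)}_2^2} \;\geq\; 1.
\]
The Borsuk--Ulam step is the missing ingredient; a path-through-strata argument does not supply it.
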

\noindent In particular, since $\kappa\geq 1$, the condition $\RC_{s,\rho}(\thr)<\frac{1}{2\kappa}$ never holds if $\thr$ is continuous. Comparing to Theorem~\ref{thm:lowerbd}, we see that no continuous operator can guarantee restricted optimality at any sparsity ratio $\rho$, even in the ideal setting where $\loss$ is well-conditioned.
This includes soft thresholding at a fixed sparsity level, i.e., the map $\ST$ that shrinks all entries of $z$ equally until the desired sparsity level is reached:
\[\big(\ST(z)\big)_i = \begin{cases}
z_i - \lambda, & z_i>\lambda, \\
 0, & |z_i|\leq \lambda,\\
z_i + \lambda,&z_i< -\lambda,
\end{cases}\text{ \quad taking $\lambda\geq 0$ to be the smallest value s.t.~$\norm{\ST(z)}_0\leq s$.}\]
In practice, it is much more common to implement soft thresholding at a fixed $\lambda$, rather than at a fixed $s$.
We will discuss the fixed-$\lambda$ formulation of soft thresholding later on, in Section~\ref{sec:softthresh}.

\subsection{Optimal value of relative concavity}
In this section we turn to the question of optimality: what is the optimal value of relative concavity among all thresholding operators at a given sparsity proportion $\rho$? We will establish that
\[\inf_{\thr: \R^d \rightarrow \{x\in\R^d:\norm{x}_0\leq s\}} \RC_{s,\rho}(\thr)=\frac{\rho}{1+\rho}.\]
That is, the lowest relative concavity among all thresholding operators at a given sparsity proportion $\rho$ is exactly $\frac{\rho}{1+\rho}$.
Since this is much smaller than $\frac{\sqrt{\rho}}{2}$ when $\rho$ is small, we see that hard thresholding is suboptimal.

We start with the following lower bound for all thresholding operators:
\begin{lemma}\label{lem:RC_lowerbd}
For any map $\thr: \R^d \rightarrow \{x\in\R^d:\norm{x}_0\leq s\}$ and any sparsity proportion $\rho\in(0,1]$,
the relative concavity is lower-bounded as
\[\RC_{s,\rho}(\thr)\geq \frac{\rho}{1+\rho}.\]
\end{lemma}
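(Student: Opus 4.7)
My plan is to exhibit, for any thresholding operator $\thr$, an explicit witnessing pair $(y,z)$ that drives the ratio in the definition of $\RC_{s,\rho}(\thr)$ to at least $\rho/(1+\rho)$. The first move is to take $z := \sum_{i=1}^{s+s'} e_i$ (valid since $s+s' \leq d$) and set $x := \thr(z)$. Because $|\textnormal{supp}(x) \cap \{1,\ldots,s+s'\}| \leq \norm{x}_0 \leq s$, there are at least $s'$ indices in $\{1,\ldots,s+s'\}$ lying outside the support of $x$; pick any $s'$ of them to form a set $I_0$. Then each member of the one-parameter family $y_c := c \sum_{i \in I_0} e_i$ for $c \in \R$ is $s'$-sparse and hence an admissible witness.

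A direct entry-wise computation---in which any coordinates of $x$ lying outside $\{1,\ldots,s+s'\}$ contribute the same thing to both numerator and denominator and therefore cancel---yields
\[\inner{y_c - x}{z - x} = cs' - u + v, \qquad \norm{y_c - x}_2^2 = c^2 s' + v,\]
where $u := \sum_{i \in \textnormal{supp}(x) \cap \{1,\ldots,s+s'\}} x_i$ and $v := \norm{x}_2^2$. Next I would maximize the ratio $(cs'-u+v)/(c^2 s' + v)$ over $c \in \R$: the first-order condition is a quadratic whose nonnegative root gives the closed form
\[R_{\max} = \frac{s'}{2\bigl(\sqrt{(u-v)^2 + v s'} + u - v\bigr)},\]
with the degenerate case $v = 0$ (i.e.\ $x=0$) producing $R_{\max} = +\infty$ by letting $c \to 0^+$.

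The final step is to prove $R_{\max} \geq \rho/(1+\rho) = s'/(s+s')$, and this is where I expect the main technical obstacle to lie---mostly bookkeeping through a rearrangement and a squaring. After the dust settles, the inequality reduces to the scalar bound
\[u \leq \frac{s+s'}{4} + \frac{sv}{s+s'},\]
which I would close using two classical inequalities: Cauchy--Schwarz gives $u \leq \sqrt{|J| v} \leq \sqrt{sv}$ (with $J$ the support intersection above, so $|J| \leq s$), while AM--GM gives $(s+s')/4 + sv/(s+s') \geq 2\sqrt{(s+s')/4 \cdot sv/(s+s')} = \sqrt{sv}$, chaining together as required. Tightness in the entire chain, and hence in the lemma, occurs exactly when $v = s(1+\rho)^2/4$ and $u = s(1+\rho)/2$, i.e., when $x$ is the uniform $s$-sparse vector of magnitude $(1+\rho)/2$ supported inside $\{1,\ldots,s+s'\}$---a configuration that presumably foreshadows the optimal thresholding operators characterized in the subsequent sections.
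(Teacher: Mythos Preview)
Your proposal is correct and follows essentially the same route as the paper: take $z$ a constant vector, set $x=\thr(z)$, choose $y$ as a scaled indicator on indices outside $\textnormal{supp}(x)$, and reduce via Cauchy--Schwarz to a one-variable bound (the paper uses $z=\mathbf{1}_d$ and plugs in a specific scalar $t$ rather than maximizing over $c$, but the structure is identical). One small point to make explicit in your write-up is that $s+s'-2(u-v)\geq 0$ holds before you square, which follows from $u-v\leq \sqrt{sv}-v\leq s/4$.
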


\noindent To show that this lower bound is indeed tight, we will consider $\ell_q$ thresholding and establish upper bound for its relative concavity that matches this lower bound with proper choices of $q$. $\ell_q$ thresholding encourage sparsity without exerting too much shrinkage by constraining the $\ell_q$ norm of the vector after thresholding for some $q\in (0,1)$. To be precise, let
\[P_{\ell_q}(z;t) = \arg\min\left\{\norm{x-z}_2 : \norm{x}_q\leq t\right\}\]
denote projection to the $\ell_q$ ball, where $\norm{x}_q =\left(\sum_i |x_i|^q\right)^{1/q}$ is the $\ell_q$ ``norm'' (in fact a nonconvex function since $q<1$). Then define
\[\LQ(z) = P_{\ell_q}(z;t(z)),\text{ where }
t(z) = \sup\left\{t : \norm{P_{\ell_q}(z;t)}_0\leq s\right\}.\]
In words, $\LQ(z)$ projects $z$ to an $\ell_q$ ball whose radius is chosen to be as large as possible while still ensuring $s$-sparsity.\footnote{Note that $P_{\ell_q}(z;t)$ may be non-unique. To be fully precise, we define $\LQ(z)$ by first fixing some map $z\mapsto S$, the possibly non-unique support of its largest $s$ entries, and then defining $t(z)$ and choosing the possibly non-unique projection $P_{\ell_q}(z;t(z))$ in such a way that the nonzero entries in the projection are exactly on this support.} The following result computes the relative concavity for $\ell_q$ thresholding:
\begin{lemma}\label{lem:RC_LQ}
The relative concavity of $\ell_q$ thresholding $\LQ$ is equal to
\[\RC_{s,\rho}(\LQ) = \frac{\frac{\rho}{\min\{1,(\frac{2-q}{q})^2(1-\rho)\}}}{\frac{4q(1-q)}{(2-q)^2}(1+\sqrt{1+(\frac{2-q}{q})^2\frac{\rho}{\min\{1,(\frac{2-q}{q})^2(1-\rho)\}}})}\]
for every sparsity proportion $\rho\in (0,1)$.
In particular, if we choose
\[q = \frac{2(1-\rho)}{3-\rho},\]
then the resulting thresholding operator attains the lowest possible relative concavity,
\[\RC_{s,\rho}(\LQ) = \frac{\rho}{1+\rho},\text{ for } q = \frac{2(1-\rho)}{3-\rho}.\]
In addition, the universal choice $q=2/3$ yields relative concavity equal to,
\[\RC_{s,\rho}(\LQuniv) =  \frac{\frac{\rho}{\min\{1,4(1-\rho)\}}}{\frac{1}{2} + \frac{1}{2}\sqrt{1+\frac{4\rho}{\min\{1,4(1-\rho)\}}}}\leq \frac{\rho}{\min\{1,4(1-\rho)\}},\text{ for all $\rho\in(0,1)$}.\]
\end{lemma}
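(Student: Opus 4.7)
My plan is to (i) use the KKT conditions to describe $\LQ(z)$ explicitly; (ii) reduce the supremum defining $\RC_{s,\rho}(\LQ)$ to a scalar optimization via a uniformity-of-extremals argument; and then (iii) carry out the resulting two-regime case analysis and match it to the displayed formula.

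For step (i), writing $x=\LQ(z)$ and $S=\mathrm{supp}(x)$ and applying the KKT stationarity conditions for the $\ell_q$ projection restricted to the pre-specified support, I would obtain a Lagrange multiplier $\lambda>0$ with $z_i-x_i=\lambda q|x_i|^{q-1}\mathrm{sign}(x_i)$ on $S$. The ``largest-$t$'' requirement then forces $|x_i|\ge m:=(2\lambda(1-q))^{1/(2-q)}$ on $S$ together with $|z_i|\le\tau:=\tfrac{2-q}{2(1-q)}\,m$ off $S$, and the identity $\lambda q m^{q-1}=\tfrac{qm}{2(1-q)}$ shows that the ``residual'' $|z_i-x_i|$ equals $\tfrac{qm}{2(1-q)}$ on $S$ at the threshold and is at most $\tau$ on $S^c$; the ratio $(2-q)/q$ between these two scales drives the entire computation.

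For step (ii), I would split $T=\mathrm{supp}(y)$ into $T_1=T\cap S$ and $T_2=T\setminus S$, sign-align $y$ with $x$ (cross-signs only shrink the numerator), and scale $y_i$ by $m$ to reach the bound
\[
\frac{\inner{y-x}{z-x}}{\|y-x\|_2^2} \le \frac{q\sum_{T_1}p_i+(2-q)\sum_{T_2}v_i-q\ell}{2(1-q)\bigl(\sum_{T_1}p_i^2+\sum_{T_2}v_i^2+\ell\bigr)},
\]
with $p_i=y_i/m-1$, $v_i=y_i/m$, $\ell=|S\setminus T|$, equality being attainable in the extremal configuration $x_i=m$ on $S$ and $|z_i|=\tau$ on $T_2$. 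This reduction to the uniform extremal is the main obstacle: after an inner maximization in $y$ one gets the closed form $(\sqrt{B^2+AC}-B)/(2A)$ with $A=\sum_S x_i^2$, $B=\lambda q\sum_S|x_i|^q$, $C=\sum_{T_2}z_i^2$, and one has to check by direct differentiation that this quantity is monotone decreasing in $A$ and in $B$ and increasing in $C$, and that the uniformly scaled family $x_i=mt$ with $t\ge 1$ strictly decreases the bound (by differentiating $\bigl(\sqrt{q^2+\rho(2-q)^2 t^{2-2q}}-q\bigr)/(4(1-q)t^{2-q})$), pinning the worst case at $t=1$ and $|z_i|=\tau$.

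For step (iii), maximizing the displayed ratio in $p_i,v_i$ under the budget $k_1+k_2=\rho s$ by first-order conditions yields uniform optima and the quadratic-in-$R^*$ solution $R^*=\bigl(\sqrt{q^2+M/\ell}-q\bigr)/(4(1-q))$ with $M=k_1 q^2+k_2(2-q)^2$. A direct derivative gives $\tfrac{d}{dk_1}(M/\ell)=\tfrac{s[\rho(2-q)^2-4(1-q)]}{(s-k_1)^2}$, of fixed sign, so the adversary's optimum is an endpoint: $k_1=0$ (hence $T\cap S=\emptyset$, $M/\ell=\rho(2-q)^2$) when $\rho(2-q)^2\le 4(1-q)$; otherwise $k_1=\rho s$ (hence $T\subset S$, $M/\ell=\rho q^2/(1-\rho)$). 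This dichotomy is precisely $\min\{1,((2-q)/q)^2(1-\rho)\}$, and rationalizing $R^*$ against the lemma's denominator $\tfrac{4q(1-q)}{(2-q)^2}\bigl(1+\sqrt{1+((2-q)/q)^2 R}\bigr)$ reproduces the stated formula in both regimes; the matching lower bound comes from the extremal $(z,y)$ pair constructed along the way. For $q=\tfrac{2(1-\rho)}{3-\rho}$, substitution gives $2-q=\tfrac{4}{3-\rho}$, $1-q=\tfrac{1+\rho}{3-\rho}$, $q^2+\rho(2-q)^2=\tfrac{4(1+\rho)^2}{(3-\rho)^2}$, hence $R^*=\rho/(1+\rho)$; the side check $\rho\le\tfrac{(1+\rho)(3-\rho)}{4}$ reduces to $(\rho-1)(\rho+3)\le 0$, confirming that the first regime applies. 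The universal $q=2/3$ case is substitution into the same formula, and the concluding inequality $\le\rho/\min\{1,4(1-\rho)\}$ follows from $\sqrt{1+4R}-1\le 2R$.
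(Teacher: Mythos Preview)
Your direct computation is a genuinely different route from the paper's. The paper does not argue with KKT conditions for the $\ell_q$ projection at all: it first proves a general result (Lemma~\ref{lem:unifying_RC}) about the class of operators $\thrsig$ defined through a nonincreasing shrinkage function $\sigma$ with $\sigma(1)\in(0,1)$ and the monotonicity condition $t\mapsto \sigma(t)(t-\sigma(t))$ nondecreasing, and then simply observes that $\LQ$ is $\thrsig$ with $\sigma(1)=q/(2-q)$. Substituting this into the formula of Lemma~\ref{lem:unifying_RC} yields Lemma~\ref{lem:RC_LQ} in one line. The advantage of the paper's organization is that the ``reduction to the uniform extremal'' you identify as the main obstacle is handled once and for all, uniformly in the support split, via three pointwise bounds $|(z-x)_i|\le\tau\sigma(1)$ on $S$, $|x_i|\ge\tau(1-\sigma(1))$ on $S$, $x_i(z-x)_i\ge\tau^2\sigma(1)(1-\sigma(1))$ on $S$, together with a single Cauchy--Schwarz inequality on the $S'$ block; no separate monotonicity-in-$(A,B,C)$ or scaled-family differentiation is needed.

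Your argument is sound in outline but has a gap in step~(ii). The closed form $(\sqrt{B^2+AC}-B)/(2A)$ that you obtain after the inner maximization in $y$, with $A=\sum_{S}x_i^2$, $B=\lambda q\sum_{S}|x_i|^q$, $C=\sum_{T_2}z_i^2$, is correct only when $T_1=\emptyset$. For general $T_1$, repeating the maximization in $y$ gives the same algebraic form but with
\[
A=\sum_{S\setminus T}x_i^2,\qquad B=\lambda q\sum_{S\setminus T}|x_i|^q,\qquad C=(\lambda q)^2\sum_{T_1}|x_i|^{2q-2}+\sum_{T_2}z_i^2,
\]
i.e.\ the sums in $A,B$ run over $S\setminus T$ and $C$ acquires a $T_1$ contribution. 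With these corrected quantities your monotonicity argument goes through (the expression is still decreasing in $A,B$ and increasing in $C$; each $|x_i|\to m$ decreases $A,B$ and increases the $T_1$ part of $C$, and $|z_i|\to\tau$ on $T_2$ increases $C$), so the extremal conclusion $x_i\equiv m$ on $S$, $|z_i|\equiv\tau$ on $T_2$ survives and your step~(iii), which I checked, is then correct. But as written, your $(A,B,C)$ monotonicity and the ``uniformly scaled family'' differentiation only justify the $k_1=0$ regime, whereas your step~(iii) correctly shows that the adversary sometimes prefers $k_1=\rho s$; that branch is not covered by the stated reduction. The fix is either to use the corrected $A,B,C$ above, or to adopt the paper's pointwise-bound-plus-Cauchy--Schwarz argument, which bypasses the issue entirely.
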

\noindent Now we provide some explanation for this result. If we are allowed to choose $q$ depend on $\rho$, then the choice $q=\frac{2(1-\rho)}{3-\rho}$ would lead to a relative concavity of $\frac{\rho}{1+\rho}$, which exactly matches the lower bound in Lemma~\ref{lem:RC_lowerbd}. Of course this specific choice of $q$ is chosen for a specific sparsity proportion $\rho$ and might not work well for other values of the sparsity proportion. To avoid this drawback or the need to tune the parameter $q$, one can have the universal choice $q=2/3$. Due to the expression for $\RC_{s,\rho}(\LQuniv)$, we see that $\RC_{s,\rho}(\LQuniv)\approx \rho$ when $\rho$ is small, thus nearly matching the lower bound $\frac{\rho}{1+\rho}$.

In particular, with the optimal value of relative concavity $\RC_{s,\rho}=\frac{\rho}{1+\rho}$, the condition $\RC_{s,\rho}<\frac{1}{2\kappa}$ becomes $\rho<\frac{1}{2\kappa-1}$. In light of  Theorem~\ref{thm:upperbd} and Theorem~\ref{thm:lowerbd}, we see that $\rho<\frac{1}{2\kappa-1}$ is both necessary and sufficient for restricted optimality to hold with sparsity proportion $\rho$.  Compare this with the condition $\rho<\frac{1}{\kappa^2}$ required by hard thresholding, we see that the dependence on condition number is greatly improved!

\subsection{A general class of thresholding operators}\label{sec:generalclass}
Now that we have seen that $\ell_q$ thresholding operators enjoy good properties in terms of relative concavity, we can ask whether there are other thresholding operators of such optimal and near-optimal properties. In this section we address this problem by showing $\ell_q$ thresholding can be characterized as a special case of a larger class of thresholding operators, which all enjoy the same optimality properties in the sense of their relative concavity. Consider any nonincreasing function
\[\sigma: [1,\infty)\rightarrow [0,1],\]
which we call the ``shrinkage function'', which will determine the amount of shrinkage on each entry of a vector $z$ at the thresholding step.
Defining the support $S$ and thresholding level $\tau=\max_{i\not\in S}|z_i|$ as before, we then define the thresholding operator $\thrsig$ as
\[\big(\thrsig(z)\big)_i = \begin{cases}
z_i- \tau\sigma\big(|z_i|/\tau\big),&i\in S,\\
0,&i\not\in S.\end{cases}\]
In other words, for entry $i\in S$, $\sigma\big(|z_i|/\tau\big)$ determines the {\em relative} amount of shrinkage on this entry. The intuitive meaning of $\sigma$ is illustrated in Figure~\ref{fig:explaindef}. (If $\tau = 0$, i.e.~$z$ is already $s$-sparse, then we would simply take $\thrsig(z)=z$; we will ignore this case from this point on.)

\begin{figure}[t]
\centering
\includegraphics[width=0.7\textwidth]{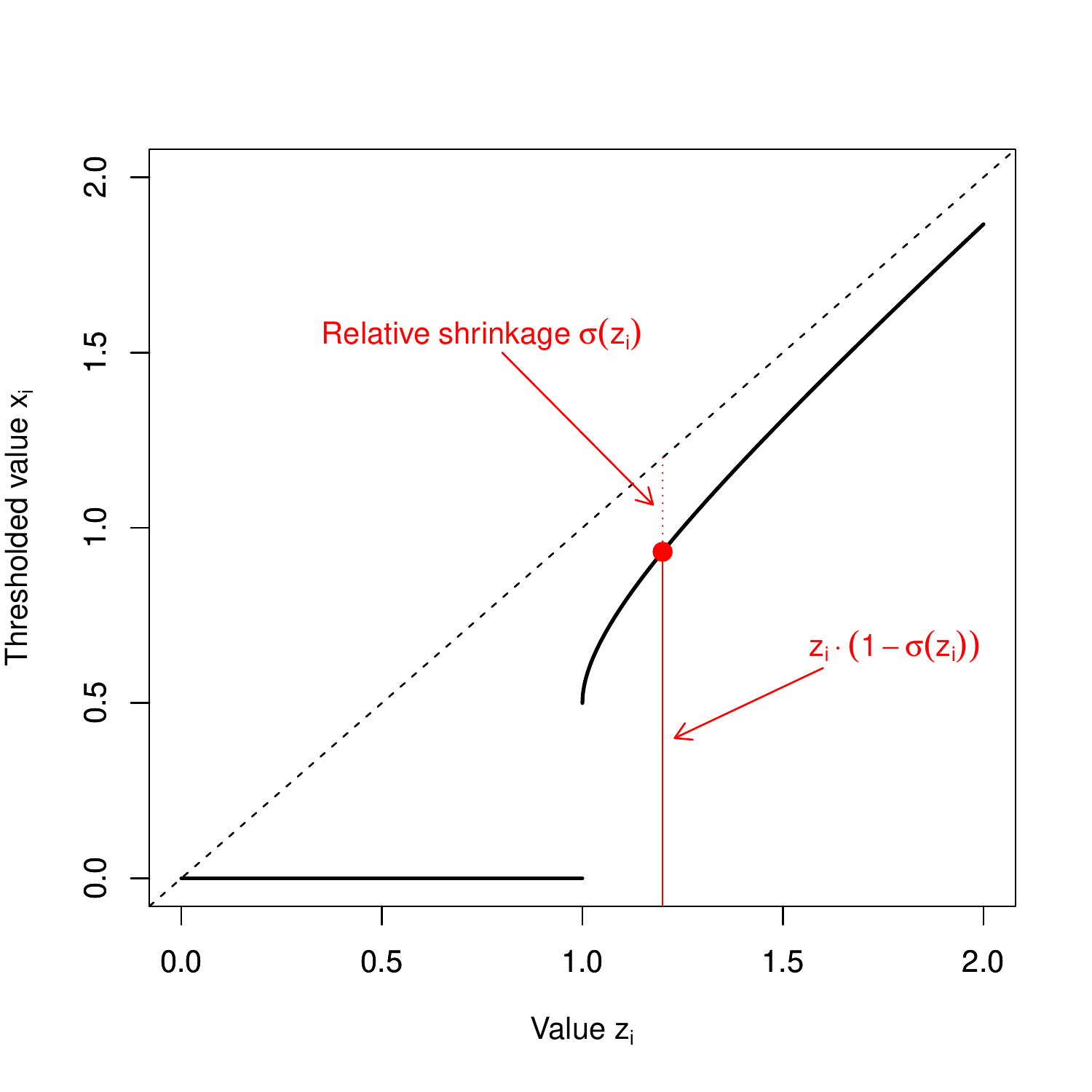}
\caption{An illustration of the definition of the relative shrinkage function $\sigma$ (for simplicity we use thresholding level $\tau=1$ in this illustration). Here we use the relative shrinkage function $\sigma(t) = \frac{t-\sqrt{t^2-1}}{2}$, corresponding to the reciprocal thresholding operator defined in Section~\ref{sec:RT}.}
\label{fig:explaindef}
\end{figure}

Note that since $\sigma$ is nondecreasing, the maximum shrinkage occurs when $|z_i|=\tau$ exactly; the amount of shrinkage in this setting is governed by $\sigma(1)$.

We can now examine the relationship of the choice of $\sigma$ to the relative concavity:
\begin{lemma}\label{lem:unifying_RC}
For any nonincreasing shrinkage function $\sigma:[1,\infty)\rightarrow[0,1]$ such that $0<\sigma(1)<1$ and
\begin{equation}\label{eqn:sigmono}
t\mapsto \sigma(t)(t-\sigma(t))\text{ is nondecreasing over $t\geq 1$},\end{equation}
the thresholding operator $\thrsig$ has relative concavity
\[\RC_{s,\rho}(\thrsig) = \frac{\frac{\rho}{\min\{1,(1-\rho)/\sigma(1)^2\}}}{2\sigma(1)\big(1-\sigma(1)\big)\left(1 + \sqrt{1 + \frac{\rho/\sigma(1)^2}{\min\{1,(1-\rho)/\sigma(1)^2\}}}\right)}.\]
In particular, the resulting operator attains the lowest possible relative concavity,
\[\RC_{s,\rho}(\thrsig) = \frac{\rho}{1+\rho},\]
if and only if $\sigma(1) = \frac{1-\rho}{2}$.
If instead we take a universal shrinkage level $\sigma(1)= \frac{1}{2}$, then the relative concavity is given by
\[\RC_{s,\rho}(\thrsig) =
\frac{\frac{\rho}{\min\{1,4(1-\rho)\}}}{\frac{1}{2} + \frac{1}{2}\sqrt{1+\frac{4\rho}{\min\{1,4(1-\rho)\}}}}
\leq \frac{\rho}{\min\{1,4(1-\rho)\}}.\]
\end{lemma}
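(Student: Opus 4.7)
I plan to compute $\RC_{s,\rho}(\thrsig)$ by reducing the defining supremum to an explicit optimization which, after a clean change of variables, factorizes neatly. Since $\thrsig(cz) = c\,\thrsig(z)$ for every $c > 0$, both numerator and denominator of the defining ratio scale by $c^2$ under $(y,z) \mapsto (cy, cz)$, so I may normalize $\tau := \max_{i \notin S} |z_i| = 1$, where $S$ is the (tie-broken) top-$s$ support of $z$. Writing $x = \thrsig(z)$, $u = z - x$, $v = y - x$, one has $u_i = \sigma(|z_i|)\,\mathrm{sign}(z_i)$ and $x_i = (|z_i| - \sigma(|z_i|))\,\mathrm{sign}(z_i)$ on $S$, while $u_i = z_i$ and $x_i = 0$ off $S$. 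Partition the indices into $A = S \cap \mathrm{supp}(y)$, $B = S \setminus \mathrm{supp}(y)$, $C = \mathrm{supp}(y) \setminus S$, $D = (S \cup \mathrm{supp}(y))^c$; then $v_D = 0$, $v_B = -x_B$ is forced, and $v_A, v_C$ are free, subject only to $|A| + |C| \leq \rho s$.

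For fixed $z$ and partition, Cauchy--Schwarz reduces the ratio at norm $R = \sqrt{\norm{v_A}^2 + \norm{v_C}^2}$ to $(aR - b)/(R^2 + c)$ with $a^2 = \norm{u_A}^2 + \norm{u_C}^2$, $b = \inner{u_B}{x_B} \geq 0$, and $c = \norm{x_B}^2$, the bound being attained by aligning $(v_A, v_C)$ positively with $(u_A, u_C)$; maximizing over $R \geq 0$ then yields the closed form
\[
\sup_{R \geq 0} \frac{aR - b}{R^2 + c} \;=\; \frac{a^2}{2\bigl(b + \sqrt{b^2 + a^2 c}\bigr)},
\]
which is monotonically increasing in $a^2$ and decreasing in $b$ and $c$. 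The hypotheses on $\sigma$ then pin down the extremal $z$ coordinate by coordinate: on $A$, $u_i^2 = \sigma(|z_i|)^2$ is maximized at $|z_i| = 1$ because $\sigma$ is nonincreasing; on $C$, $u_i^2 = |z_i|^2 \leq 1$ is maximized at $|z_i| = 1$; on $B$, $u_i x_i = \sigma(|z_i|)(|z_i| - \sigma(|z_i|))$ is minimized at $|z_i| = 1$ by the nondecreasingness of $t \mapsto \sigma(t)(t - \sigma(t))$, and $x_i^2 = (|z_i| - \sigma(|z_i|))^2$ is also minimized there since $t - \sigma(t)$ is nondecreasing and nonnegative on $[1, \infty)$. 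After a tie-breaking perturbation, the supremum is thus realized by $|z_i| \to 1$ on $S \cup C$ with $z_D = 0$. With $s_* = \sigma(1)$, $\alpha = |A|/s$, $\gamma = |C|/s$, $\alpha \in [0,1]$, $\gamma \geq 0$, $\alpha + \gamma \leq \rho$, this gives $a^2 = s(\alpha s_*^2 + \gamma)$, $b = s(1-\alpha)s_*(1 - s_*)$, $c = s(1-\alpha)(1 - s_*)^2$.

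The key simplification is the substitution $u^2 = 1 - \alpha$, $w^2 = s_*^2 + \gamma$, which produces the factorizations $\alpha s_*^2 + \gamma = w^2 - u^2 s_*^2 = (w - u s_*)(w + u s_*)$ and $b + \sqrt{b^2 + a^2 c} = s(1 - s_*)\,u\,(u s_* + w)$, so the closed form collapses to
\[
f(\alpha, \gamma) \;=\; \frac{w - u s_*}{2(1 - s_*)u},
\]
a function of $w/u$ alone. The feasible region is $u \in [\sqrt{1 - \rho}, 1]$, $w \in [s_*, \sqrt{u^2 + \rho + s_*^2 - 1}]$, and maximizing $w/u$ splits by the sign of $\rho + s_*^2 - 1$: if $s_*^2 \leq 1 - \rho$, the maximum is at $u = 1$, $w = \sqrt{s_*^2 + \rho}$ (so $\alpha = 0$, $\gamma = \rho$); if $s_*^2 > 1 - \rho$, at $u = \sqrt{1 - \rho}$, $w = s_*$ (so $\alpha = \rho$, $\gamma = 0$). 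Rationalizing via $\sqrt{s_*^2 + \rho} - s_* = \rho/(s_* + \sqrt{s_*^2 + \rho})$ and $1 - \sqrt{1 - \rho} = \rho/(1 + \sqrt{1 - \rho})$ in the two regimes produces the claimed unified formula involving $\min\{1, (1-\rho)/s_*^2\}$; the special-case claims for $\sigma(1) = (1-\rho)/2$ and for the universal $\sigma(1) = 1/2$ then follow by direct substitution. The main technical obstacle is the extremal-$z$ reduction in the previous paragraph, which requires invoking all three hypotheses on $\sigma$ simultaneously and carefully respecting the tie-breaking rule for the top-$s$ support; I would handle this by realizing the supremum through a limiting sequence $|z_i| = 1 + \epsilon_i$ with distinct positive $\epsilon_i \to 0$ for $i \in S$ and $|z_i| = 1$ for $i \in C$, appealing to right-continuity of $\sigma$ at $1$ to take the limit.
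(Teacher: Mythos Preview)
Your approach follows the same skeleton as the paper's proof: partition indices into $A = S\cap S'$, $B = S\setminus S'$, $C = S'\setminus S$; apply Cauchy--Schwarz on the $S'$ block; use the monotonicity hypotheses on $\sigma$ to push each $|z_i|$ to the threshold $\tau$; and then optimize over the remaining free parameters. The paper maximizes $\sqrt{(s'-\ell)/(s-\ell) + \ell\,\sigma(1)^2/(s-\ell)}$ over the integer $\ell = |A|$ first and only afterward optimizes over the norm variable $t$, whereas you optimize over $R$ first via the closed form $a^2/\bigl(2(b+\sqrt{b^2+a^2c})\bigr)$ and then over $(\alpha,\gamma)$. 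Your substitution $u^2 = 1-\alpha$, $w^2 = s_*^2+\gamma$ and the resulting factorization to $\dfrac{w-us_*}{2(1-s_*)u}$ is a genuinely cleaner piece of algebra than the paper's route to the same formula, but the underlying decomposition, the use of the three hypotheses on $\sigma$, and the two-regime case split at $s_*^2 = 1-\rho$ are the same.

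There is one real gap in your plan for the matching lower bound. Your limiting construction $|z_i| = 1+\epsilon_i \to 1$ on $S$ requires $\sigma(1+\epsilon_i)\to\sigma(1)$, i.e.\ right-continuity of $\sigma$ at $1$, which the lemma does not assume (a nonincreasing function need not be right-continuous there). The fix is simple and is exactly what the paper does: take $z = \mathbf{1}_d$ directly. All entries are tied, so the fixed tie-breaking rule hands you some $S$ with $|S|=s$; you then \emph{choose} $S'$ (and hence $A$, $C$) relative to that $S$, rather than trying to force $S$ in advance by perturbation. With this construction $|z_i|/\tau = 1$ exactly for $i\in S$, so $\sigma(1)$ enters directly and no limit is needed; the alignment $(v_A,v_C)\parallel(u_A,u_C)$ and the optimal value of $R$ are then realized by an explicit choice of $y$ on $S'$, as in the paper's two-case construction.
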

Examining the definition of this general family of thresholding operators, we can see that $\ell_q$ thresholding corresponds to setting
\[\sigma(t)= t-\left(\text{the larger-magnitude root $x$ of the equation $t = x + \frac{q(2-2q)^{1-q}/(2-q)^{2-q}}{x^{1-q}}$}\right),\]
for which we have $\sigma(1) = \frac{q}{2-q}$ and which satisfies~\eqref{eqn:sigmono}. We also have that $\sigma(1) = \frac{1}{2}$ corresponds to the ``universal'' choices $q=2/3$, and $\sigma(1) = \frac{1-\rho}{2}$ (the optimal value) corresponds to the $\rho$-specific choices $q = \frac{2(1-\rho)}{3-\rho}$. As a consequence, the previous result for $\ell_q$ thresholding, Lemma~\ref{lem:RC_LQ}, is simply special case of this more general lemma. 

On the other hand, the hard thresholding operator $\HT$ can be obtained by setting $\sigma(t)=0$ for all $t\in[1,\infty)$, but this does not satisfy the assumption $\sigma(1)>0$ required in the lemma. However, if we informally consider fixing $\rho>0$ and taking a limit $\sigma(1)\rightarrow 0$ in the upper bound in the lemma, we see
\[\lim_{\sigma(1)\rightarrow 0}\frac{\frac{\rho}{\min\{1,\frac{1-\rho}{\sigma(1)^2}\}}}{2\sigma(1)(1-\sigma(1))\left(1+\sqrt{1+\frac{\rho}{\sigma(1)^2\min\{1,\frac{1-\rho}{\sigma(1)^2}\}}}\right)}= \frac{\sqrt{\rho}}{2},\]
obtaining the relative concavity of hard thresholding calculated earlier.

\subsection{Reciprocal thresholding and minimal shrinkage}\label{sec:RT}
Practically, for two thresholding operators with the same restricted optimality guarantees, i.e. with the exact same value of relative concavity, we may favor the one that exerts smaller amount of shrinkage. Thus it makes sense to ask among the general class of thresholding operators defined in Section~\ref{sec:generalclass}, which operators exert the minimal amount of shrinkage? Consider all operators of the form $\thrsig$, with some fixed value of $\sigma(1)\in (0,1/2]$. For any $\sigma$ satisfying the assumption~\eqref{eqn:sigmono}, for all $t\geq 1$ we have
\[\sigma(t)(t-\sigma(t)) \geq \sigma(1)(1-\sigma(1)).\]
For convenience, we reparametrize this equation by setting $c = 1-2\sigma(1) \in[0,1)$, and so we are considering all nonincreasing functions $\sigma:[1,\infty)\rightarrow[0,1]$ that satisfy $\sigma(1) = \frac{1-c}{2}$ and
\[\sigma(t)(t-\sigma(t)) \geq \sigma(1)(1-\sigma(1)) = \frac{1-c^2}{4}.\]
Thus, we must have
\begin{equation}\label{eqn:sigmalowerbd}
\sigma(t) \geq \frac{t - \sqrt{t^2 - (1-c^2)}}{2}
\end{equation}
for all $t\geq 1$.

This motivates a new family of thresholding operators, {\em reciprocal thresholding with parameter $c$}, which is designed to make the inequality~\eqref{eqn:sigmalowerbd} an equality. To be specific, we define reciprocal thresholding with parameter $c$ to be
\[\RTc=\thrsig\text{ with shrinkage function }
\sigma(t)=\frac{t-\sqrt{t^2-(1-c^2)}}{2}.\]
To apply this operator to some vector $z\in\R^d$, we first let $S\subset\{1,\dots,d\}$ be the indices of the largest $s$ entries of $z$ (with our usual caveat about needing to establish some rule for breaking ties) and let $\tau = \max_{i\not\in S}|z_i|$ be the magnitude of the $(s+1)$-st largest entry of $z$. Then $
\RTc(z)$ operates entry-wise as follows:
\begin{equation}\label{eqn:define_RTc}\big(\RTc(z)\big)_i = \begin{cases}
\textnormal{sign}(z_i)\cdot \left(\frac{1}{2}|z_i| + \frac{1}{2}\sqrt{|z_i|^2 - \tau^2(1-c^2)}\right),&\text{ if $i\in S$,}\\
0,&\text{ if $i\not\in S$.}
\end{cases}\end{equation}
Here the  thresholded value  $\big(\RTc(z)\big)_i$ is equal to the larger-magnitude root $t$ of the equation
\begin{equation}\label{eqn:explain_RTc} z_i = t + \frac{\tau^2 \cdot \frac{1-c^2}{4}}{t},\end{equation}
hence the name ``reciprocal thresholding''.

As before, to avoid the need for selecting $c$ adaptively, we might want to consider some fixed choices. At one extreme, taking $c=1$ yields $\RTone = \HT$, the hard thresholding operator. At the other extreme, taking $c=0$ defines the ``universal'' {\em reciprocal thresholding} operator:
\[\RT=\RTzero.\]
For any $z\in\R^d$, $\RT$ operate entry-wise as:
\begin{equation}\label{eqn:define_RT}\big(\RT(z)\big)_i = \begin{cases}
\textnormal{sign}(z_i)\cdot \left(\frac{1}{2}|z_i| + \frac{1}{2}\sqrt{|z_i|^2 - \tau^2}\right),&\text{ if $i\in S$,}\\
0,&\text{ if $i\not\in S$.}
\end{cases}\end{equation}

The following lemma calculates the relative concavity of $\RTc$ and $\RT$ as a direct consequence of Lemma~\ref{lem:unifying_RC}.
\begin{lemma}\label{lem:RC_RT}
For any sparsity proportion $\rho\in(0,1]$, the thresholding operator $\RTc$ with parameter $c=\rho$ has relative concavity equal to
\[\RC_{s,\rho}(\thr^{\textnormal{RT},\rho})= \frac{\rho}{1+\rho}.\]
The reciprocal thresholding operator $\RT$ has relative concavity equal to
\[\RC_{s,\rho}(\RT)=
\frac{\frac{\rho}{\min\{1,4(1-\rho)\}}}{\frac{1}{2} + \frac{1}{2}\sqrt{1+\frac{4\rho}{\min\{1,4(1-\rho)\}}}} \leq \frac{\rho}{\min\{1,4(1-\rho)\}}\]
for every sparsity proportion $\rho\in(0,1)$.
\end{lemma}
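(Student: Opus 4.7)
The plan is to derive both formulas as direct consequences of Lemma~\ref{lem:unifying_RC} applied to the specific shrinkage function defining $\RTc$. Recall that $\RTc = \thrsig$ where $\sigma(t) = \frac{t - \sqrt{t^2 - (1-c^2)}}{2}$; so the task reduces to verifying the three hypotheses of Lemma~\ref{lem:unifying_RC} for this $\sigma$ and then substituting the appropriate values of $c$.

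The verification has three steps. First, I would check that $\sigma:[1,\infty)\to[0,1]$ is nonincreasing. Differentiating gives $\sigma'(t) = \frac{1}{2}\bigl(1 - t/\sqrt{t^2-(1-c^2)}\bigr)\leq 0$ since $\sqrt{t^2-(1-c^2)}\leq t$. Second, I compute $\sigma(1) = \frac{1 - \sqrt{1-(1-c^2)}}{2} = \frac{1-c}{2}$, which lies strictly between $0$ and $1$ as long as $c\in[0,1)$ (the relevant range for both the claim with $c = \rho \in (0,1)$ and the claim with $c = 0$). Third, and most importantly, I must verify condition~\eqref{eqn:sigmono} that $t \mapsto \sigma(t)(t - \sigma(t))$ is nondecreasing. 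Here the key algebraic identity is that
\[
\sigma(t)\bigl(t - \sigma(t)\bigr) = \frac{t - \sqrt{t^2-(1-c^2)}}{2}\cdot \frac{t+\sqrt{t^2-(1-c^2)}}{2} = \frac{t^2 - (t^2 - (1-c^2))}{4} = \frac{1-c^2}{4},
\]
so this function is in fact \emph{constant} in $t$ and the monotonicity condition holds trivially. This is the one calculation that really needs care, and it is also the calculation that motivates the name ``reciprocal thresholding'': it says $\sigma(t)$ and $t-\sigma(t)$ are reciprocal up to the constant factor $\frac{1-c^2}{4}$, which is exactly the relation~\eqref{eqn:explain_RTc}.

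Having verified the hypotheses, I now apply the formula from Lemma~\ref{lem:unifying_RC} with $\sigma(1) = \frac{1-c}{2}$. For the first claim, taking $c = \rho$ gives $\sigma(1) = \frac{1-\rho}{2}$, which is exactly the value singled out in Lemma~\ref{lem:unifying_RC} as achieving the optimum; the lemma then immediately yields $\RC_{s,\rho}(\thr^{\textnormal{RT},\rho}) = \frac{\rho}{1+\rho}$. For the second claim, taking $c = 0$ gives $\sigma(1) = \frac{1}{2}$, which is the ``universal'' choice treated in the last part of Lemma~\ref{lem:unifying_RC}, and the formula quoted there transfers verbatim to give
\[
\RC_{s,\rho}(\RT) = \frac{\frac{\rho}{\min\{1,4(1-\rho)\}}}{\frac{1}{2} + \frac{1}{2}\sqrt{1 + \frac{4\rho}{\min\{1,4(1-\rho)\}}}} \leq \frac{\rho}{\min\{1,4(1-\rho)\}},
\]
completing both parts. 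The only substantive piece of work is the reciprocal identity $\sigma(t)(t-\sigma(t)) = \frac{1-c^2}{4}$; everything else is bookkeeping and substitution.
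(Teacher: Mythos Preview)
Your proposal is correct and is exactly the paper's approach: the paper's proof of this lemma is a one-line remark that it is a special case of Lemma~\ref{lem:unifying_RC}, and you have filled in the verification of the hypotheses (most notably the constant-product identity $\sigma(t)(t-\sigma(t))=\frac{1-c^2}{4}$) that the paper leaves implicit. The only edge case not covered by Lemma~\ref{lem:unifying_RC} is $\rho=1$, where $c=1$ gives $\sigma(1)=0$; there $\thr^{\textnormal{RT},1}=\HT$ and the value $\RC_{s,1}(\HT)=\tfrac{1}{2}=\tfrac{1}{1+1}$ follows from Lemma~\ref{lem:RC_HT} instead.
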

\noindent Thus, $\RTc$ with $c=\rho$ is exactly optimal among all thresholding operators relative to the sparsity proportion $\rho$ (as is $\LQ$ with $q=\frac{2(1-\rho)}{3-\rho}$), while $\RT$ is near optimal when $\rho$ is small (as is $\LQuniv$).

\subsection{An illustrative comparison}
Through the development in this section, we see that there are three important benchmarks for relative concavity: the bound $\sqrt{\rho}/2$ attained by hard thresholding $\HT$, the bound $\frac{\frac{\rho}{\min\{1,4(1-\rho)\}}}{\frac{1}{2} + \frac{1}{2}\sqrt{1+\frac{4\rho}{\min\{1,4(1-\rho)\}}}}$ attained by reciprocal thresholding $\RT$ and $\ell_{2/3}$ thresholding $\LQuniv$, and the optimal value $\frac{\rho}{1+\rho}$. In this section we provide a comparison between these three values.

The left-hand plot of Figure~\ref{fig:compare_bounds} displays the three values of relative concavity as functions of the sparsity proportion $\rho$. We see that at small values $\rho\approx 0$, the relative concavity of reciprocal thresholding and $\ell_{2/3}$ thresholding is nearly identical to the optimal bound $\frac{\rho}{1+\rho}$, and is substantially better than the relative concavity for hard thresholding, given by $\sqrt{\rho}/2$. At larger values of $\rho$, the relative concavity for hard thresholding is instead lower.

\begin{figure}[t]
\centering
\includegraphics[width=\textwidth]{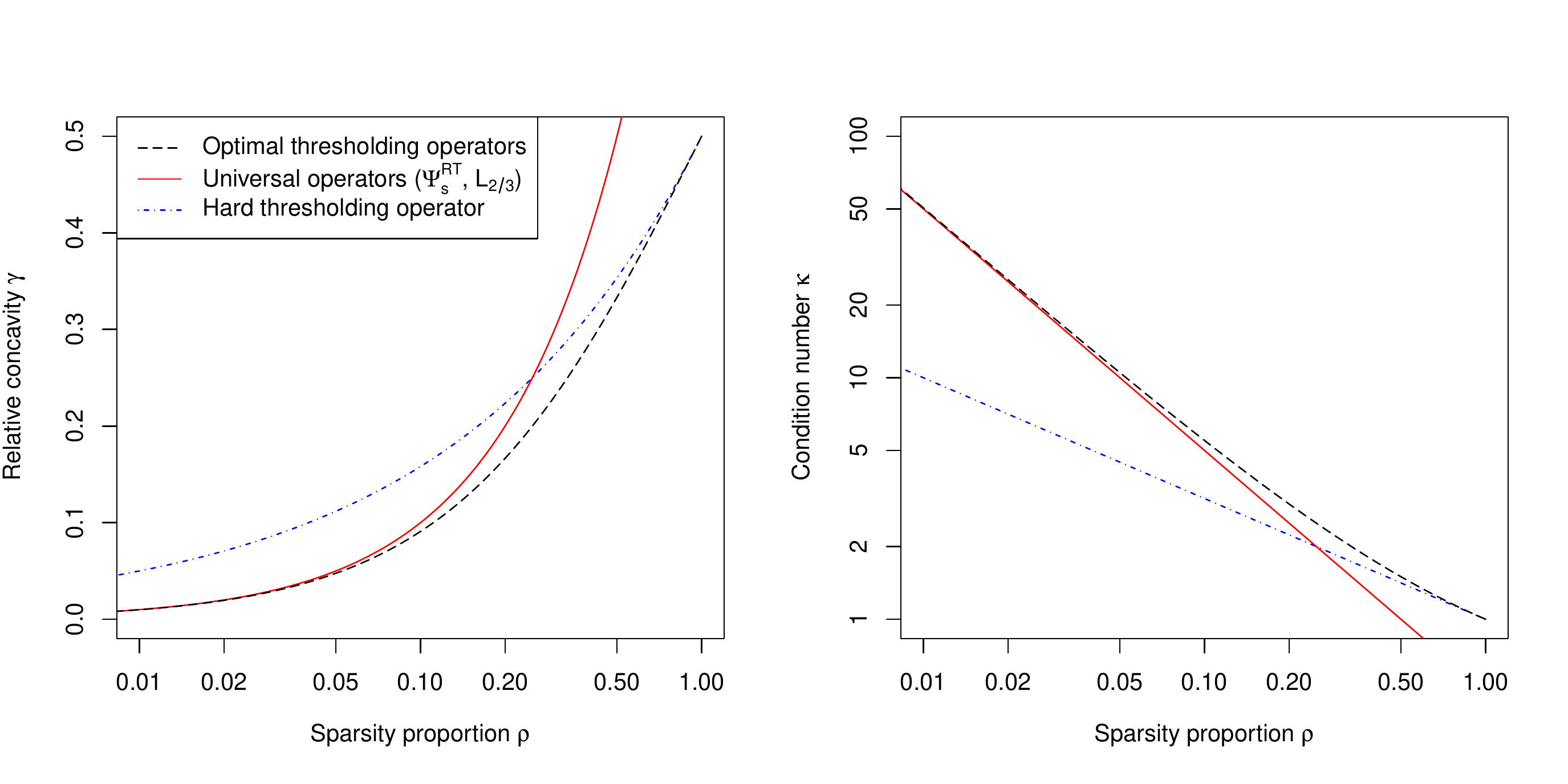}
\caption{A comparison of three values of relative concavity: the optimal relative concavity (attained by, for instance, $\RTc$ with $c=\rho$, or by $\LQ$ with $q=\frac{2(1-\rho)}{3-\rho}$); the relative concavity obtained by the ``universal'' operators, including $\RT$ and by $\ell_{2/3}$ thresholding; and the relative concavity of hard thresholding. The left plot shows the relative concavity as a function of the sparsity proportion $\rho$. The right plot shows the largest possible condition number $\kappa$ of the objective function $\loss$ for which a restricted optimality guarantee can be attained (Theorems~\ref{thm:upperbd} and~\ref{thm:lowerbd} show that $\RC_{s,\rho}(\thr) \leq \frac{1}{2\kappa}$ is necessary and sufficient for a restricted optimality guarantee).}
\label{fig:compare_bounds}
\end{figure}

To view this comparison in another light, given any fixed thresholding operator $\thr$ with certain relative concavity, and given an objective function $\loss$ with condition number $\kappa$, for what sparsity ratio $\rho = s'/s$ is the iterative thresholding algorithm guaranteed to achieve restricted optimality? Using the condition $\RC_{s,\rho}(\thr)\leq \frac{1}{2\kappa}$, for each relative concavity $\RC_{s,\rho}$ we can solve for the largest possible $\kappa$ for which restricted optimality is assured, as a function of $\rho$. 

 This is illustrated in the right-hand plot of Figure~\ref{fig:compare_bounds}, where we see that the reciprocal thresholding operator $\RT$ and the $\ell_{2/3}$ thresholding operator achieve a nearly-optimal sparsity ratio $\rho$ when the condition number $\kappa$ is large and $\rho$ is correspondingly close to zero, while hard thresholding $\HT$ is closer to optimal for $\kappa$ and $\rho$ close to $1$. Thus, we can conclude that reciprocal thresholding and $\ell_{2/3}$ thresholding offer stronger theoretical guarantees when $\kappa>2$, while hard thresholding may be better for very well-conditioned problems where $1\leq \kappa <2$. (Empirically, we have observed that it is often the case that the three perform nearly identically in ``generic'' problems, and only show substantial differences in problems constructed to mimic our lower bound result, Theorem~\ref{thm:lowerbd}, for example, in linear regression problems where a small subset of the features are generated to have covariance structure similar to the construction in Theorem~\ref{thm:lowerbd}.)

\subsection{A closer look at soft thresholding}\label{sec:softthresh}
In many applications, it is common to use a sparsity-inducing penalty rather than an explicit sparsity constraint. For example,
we may solve
\[\widehat{x} = \arg\min_{x\in\R^d}\big\{\loss(x) + \lambda \norm{x}_1\big\},\]
which is known as the Lasso~\citep{tibshirani1996regression} in the context of regression problems.
More generally, we can consider 
\begin{equation}\label{eqn:xhatlambda}\widehat{x}_{\lambda} = \arg\min_{x\in\R^d}\big\{\loss(x) + \lambda \reg(x)\big\},\end{equation}
where $\reg:\R^d\rightarrow\R$ is any proper convex function acting as a regularizer.
This class of problems can be solved iteratively with a proximal gradient method, 
\begin{equation}\label{eqn:iterative_prox}
x_t = \textnormal{Prox}_{\lambda\eta\reg}\big(x_{t-1} - \eta \nabla \loss(x_{t-1})\big),
\end{equation}
where for any $t\geq 0$, the proximal map is defined as
\[\textnormal{Prox}_{t\reg}(z) = \arg\min_{x\in\R^d}\left\{\frac{1}{2}\norm{x-z}^2_2 + t\reg(x)\right\}.\]
Note that convexity of $\reg(x)$ ensures continuity of the proximal map. More properties of the proximal map and the proximal method can be found in \citep{parikh2014proximal}.
Examining the iterations of proximal gradient descent~\eqref{eqn:iterative_prox}, we see that it is very similar
to the iterative thresholding method~\eqref{eqn:iterative_thresh} for a fixed sparsity level $s$ (using some particular
thresholding operator $\thr$); we simply replace the thresholding operator $\thr$ with the proximal map $\textnormal{Prox}_{\lambda\eta\reg}$.

In particular, if we consider $\reg(x) =\norm{x}_1$, the resulting proximal map is known as ``soft thresholding'', and can be 
computed with elementwise shrinkage:
\[\big(\textnormal{Prox}_{t\norm{\cdot}_1}(z) \big)_i =\begin{cases}
z_i - t, & z_i > t, \\
0, & |z_i|\leq t,\\
z_i + t, & z_i< -t.\end{cases}\]
Now, recall that in Section~\ref{sec:relativeconcavity_HT_ST}, we considered a ``soft thresholding'' operator at a {\em fixed} sparsity
level $s$, which we can now rewrite as
\[\ST(z) = \textnormal{Prox}_{t\norm{\cdot}_1}(z) \textnormal{\quad where $t\geq 0$ is the smallest value s.t.~$\norm{ \textnormal{Prox}_{t\norm{\cdot}_1}(z) }_0 \leq s$.}\]
We might ask whether the suboptimal worst-case performance of iterative thresholding with the operator $\ST$, as established
by Lemma~\ref{lem:RC_contin} and Theorem~\ref{thm:lowerbd}, 
is due to the unusual definition of $\ST$, using a fixed sparsity level $s$, rather than the usual form of soft thresholding where
we would iterate~\eqref{eqn:iterative_prox} at a fixed value of $\lambda$ in order to minimize  $\loss(x) + \lambda \norm{x}_1$.

In fact, we will now see that this is not the case---even if we use a fixed $\lambda$ rather than a fixed sparsity level $s$,
we can still find worst-case examples where restricted optimality is not achieved.
\begin{theorem}\label{thm:prox_worstcase}
Let $d\geq 2$, let $\beta\geq \alpha >0$, and let $\reg:\R^d\rightarrow\R$ be a proper convex function 
that satisfies the following assumptions:
\begin{align}
\label{eqn:prox_assump1}&\text{For any $z\in \R^d$ and any $t'>t\geq 0$, if $\norm{\textnormal{Prox}_{t\reg}(z)}_{0}<d$ then $\norm{\textnormal{Prox}_{t'\reg}(z) }_{0}<d$.}\\
\label{eqn:prox_assump2}&\text{There exist $v,w \in \R^d$ that are both dense, i.e., $\norm{v}_0=\norm{w}_0=d$,  with $w \in \partial \reg(v)$.}
\end{align}Then there exists an objective function $\loss(x)$ that satisfies $(\alpha,d)$-RSC and $(\beta,d)$-RSM, and a 1-sparse vector $y\in\R^d$, such that
defining $\widehat{x}_\lambda$ as in~\eqref{eqn:xhatlambda}, 
\[\text{For all $\lambda\geq 0$, either $\norm{\widehat{x}_\lambda}_0=d$ or $\loss(\widehat{x}_\lambda) > \loss(y)$.}\]
\end{theorem}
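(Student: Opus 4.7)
The plan is to choose $\loss$ to be a simple quadratic centered at a point built from the dense pair $(v,w)$ from assumption~\eqref{eqn:prox_assump2}, so that the whole path of regularized minimizers $\widehat{x}_\lambda$ coincides with the proximal map of $\reg$ applied to a single fixed vector; assumption~\eqref{eqn:prox_assump1} will then control the density of the path and an auxiliary monotonicity argument will control its loss value. Concretely, I would take
\[\loss(x)=\tfrac{\beta}{2}\,\norm{x-(v+\mu w)}_2^2\]
for a scalar $\mu>0$ to be tuned at the end. This quadratic is $\beta$-strongly convex and $\beta$-smooth, hence satisfies $(\alpha,d)$-RSC and $(\beta,d)$-RSM for any $\alpha\leq\beta$, and strong convexity makes $\widehat{x}_\lambda$ the unique minimizer of $\loss+\lambda\reg$; inspection of the first-order condition shows $\widehat{x}_\lambda=\textnormal{Prox}_{(\lambda/\beta)\reg}(v+\mu w)$. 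At the specific value $\lambda=\mu\beta$, the subgradient relation $w\in\partial\reg(v)$ verifies that $\widehat{x}_{\mu\beta}=v$, which is dense.

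Applying assumption~\eqref{eqn:prox_assump1} in its contrapositive form to $z=v+\mu w$ at $t'=\mu$ then yields $\norm{\widehat{x}_\lambda}_0=d$ for every $\lambda\in[0,\mu\beta]$, which already handles the density clause of the dichotomy on that range (and incidentally forces $v+\mu w$ itself to be dense, so $\mu$ must avoid the finite bad set $\{-v_i/w_i\}$, easily arranged for $\mu$ large). For $\lambda>\mu\beta$, I would invoke the standard ``cross the two optimality inequalities'' argument for regularization paths: the pair $\loss(\widehat{x}_{\lambda_1})+\lambda_1\reg(\widehat{x}_{\lambda_1})\leq\loss(\widehat{x}_{\lambda_2})+\lambda_1\reg(\widehat{x}_{\lambda_2})$ together with the swap in $\lambda_1,\lambda_2$ add to $\reg(\widehat{x}_{\lambda_1})\geq \reg(\widehat{x}_{\lambda_2})$ for $\lambda_1<\lambda_2$, and reinserting this yields $\loss(\widehat{x}_{\lambda_1})\leq\loss(\widehat{x}_{\lambda_2})$. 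Consequently $\loss(\widehat{x}_\lambda)\geq\loss(v)=\tfrac{\beta\mu^2}{2}\norm{w}_2^2$ for all $\lambda\geq\mu\beta$.

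To finish, I would fix any coordinate $i_0$ and set $y=(v_{i_0}+\mu w_{i_0})\,e_{i_0}$, a $1$-sparse vector. Direct expansion gives $\loss(y)=\tfrac{\beta}{2}\sum_{i\neq i_0}(v_i+\mu w_i)^2$, a degree-two polynomial in $\mu$ whose leading coefficient $\tfrac{\beta}{2}\sum_{i\neq i_0}w_i^2$ is strictly smaller than $\loss(v)$'s leading coefficient $\tfrac{\beta}{2}\norm{w}_2^2$, because $w_{i_0}\neq 0$ by density of $w$. Picking $\mu$ large enough therefore forces $\loss(y)<\loss(v)$, and combining with the previous paragraph yields $\loss(\widehat{x}_\lambda)\geq\loss(v)>\loss(y)$ for every $\lambda\geq\mu\beta$ at which density may fail, completing the dichotomy. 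The main obstacle is the conceptual step of recognizing that a quadratic $\loss$ makes $\widehat{x}_\lambda$ literally a prox evaluation, so that assumption~\eqref{eqn:prox_assump1}---nominally a technical monotonicity property about $\textnormal{Prox}_{t\reg}$---becomes strong enough to constrain the sparsity of the entire regularization path through the single anchor $v$; beyond that, the argument is bookkeeping about choosing $\mu$ outside the finite coordinate-vanishing set and large enough for the dominant-coefficient comparison.
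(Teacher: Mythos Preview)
Your proposal is correct and follows essentially the same approach as the paper's proof: both construct a quadratic $\loss$ centered at $v+\mu w$ (the paper writes $v+cw$ after normalizing $\alpha=1$), identify $\widehat{x}_\lambda$ with a proximal evaluation, use $w\in\partial\reg(v)$ to pin $\widehat{x}_{\mu\beta}=v$, invoke assumption~\eqref{eqn:prox_assump1} to force density for small $\lambda$, and use the crossed optimality inequalities to bound $\loss(\widehat{x}_\lambda)\geq\loss(v)$ for large $\lambda$. The only cosmetic differences are your explicit two-range split versus the paper's single case analysis yielding $(\loss(\widehat{x}_\lambda)-\loss(v))(\lambda-c)\geq 0$, and your choice $y=(v_{i_0}+\mu w_{i_0})e_{i_0}$ versus the paper's $y=cw_i e_i$ at the index maximizing $|w_i|$.
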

\noindent In other words, this result means there is no value of $\lambda$ that produces a solution that is both sparse (at any sparsity level $<d$) and 
has an objective function value at least as good as the best 1-sparse solution $y$.

We remark that 
our conditions~\eqref{eqn:prox_assump1} and~\eqref{eqn:prox_assump2} on the regularizer $\reg$ are satisfied by many common regularizers---for
 example, the $\ell_1$ norm (Lasso), any $\ell_p$ norm for $1\leq p\leq \infty$, the elastic net (a combination
of the $\ell_1$ and $\ell_2$ norms), the weighted $\ell_1$ norm, and many others.
To help interpret the first condition~\eqref{eqn:prox_assump1}, this essentially requires that a sparse solution $\widehat{x}_\lambda$ will stay sparse if we increase
the penalty parameter $\lambda$, as we would expect for any sparsity-promoting regularizer.

This theorem implies that, just as continuous thresholding operators $\thr$ at a fixed level $s$ 
can fail to attain restricted optimality in a worst-case scenario, the same holds for regularization with convex penalties (such as soft
thresholding
with the $\ell_1$ norm). An open question remains here, namely, is there a measure in the style of relative concavity,
which can characterize the worst-case performance of penalty functions $\reg$ (covering both convex and nonconvex penalty functions,
just as relative concavity treats both continuous and non-continuous thresholding operators)?

\section{Iterative thresholding for low-rank matrices}\label{sec:matrix}
We next extend our analysis of iterative thresholding methods to the setting of a low-rank constraint. In fact, our results carry over fully into this setting. Given a rank constraint, $\rank(X)\leq s$, the hard thresholding operator is defined as
\[\mHT: X\mapsto U\cdot\textnormal{diag}(\HT(d))\cdot V^\top,\]
where $X=U\cdot\textnormal{diag}(d)\cdot V^\top$ is the singular value decomposition of $X$.\footnote{In the case of repeated singular values, the singular value decomposition will not be unique, and we assume that we have some mechanism for specifying a specific singular value decomposition. This is analogous to the sparse vector problem, where if the $s$th largest entry in $z$ is not unique, we need to assume some mechanism for breaking ties and choosing the support of the thresholded vector.} That is, hard thresholding is performed on the singular values of the matrix $X$, rather than on its entries. Of course, we can extend this to any thresholding operator---given any $\thr:\R^{\min\{n,m\}}\rightarrow\{x\in\R^{\min\{n,m\}}:\norm{x}_0\leq s\}$, we can ``lift'' this thresholding operator to the matrix setting by defining
\begin{equation}\label{eqn:mthr_lift}\mthr:X\mapsto U\cdot \textnormal{diag}(\thr(d))\cdot V^\top.\end{equation}
Of course, its possible to construct a rank-$s$ thresholding operator $\mthr$ that is not of the form given in~\eqref{eqn:mthr_lift}, for example, if $\mthr$ does not preserve the left and right singular vectors of $Z$.

We next extend our convergence results, Theorems~\ref{thm:upperbd} and~\ref{thm:lowerbd}, to the low-rank setting. In order to do so, we need to define the matrix version of relative concavity---this definition is analogous to the vector case, with rank constraints in place of sparsity constraints:
\[\mRC_{s,\rho}(\mthr) = \sup\left\{\frac{\inner{Y-\mthr(Z)}{Z-\mthr(Z)}}{\fronorm{Y-\mthr(Z)}^2} \ : Y,Z\in\R^{n\times m}, \rank(Y)\leq \rho s, Y\neq \mthr(Z)\right\}.\]
As for the vector case, relative concavity is necessary and sufficient for guaranteeing restricted optimality---in fact, the proofs of these are completely identical to the vector case. For completeness, we state the results here, for the matrix version of the iterated thresholding algorithm:
\begin{equation}\label{eqn:iterative_thresh_matrix}
X_t = \mthr\big(X_{t-1} - \eta_t \nabla \loss(X_{t-1})\big),
\end{equation}
with either fixed step size $\eta_t=1/\beta$ or adaptive step size defined as in~\eqref{eqn:iterative_thresh_eta_adaptive}.
\begin{theorem}\label{thm:matrix_upperbd}
Consider any objective function $\loss:\R^{n\times m}\rightarrow \R$, any ranks $s\geq s'$, and any rank-$s$ thresholding operator $\mthr$. Assume the objective function $\loss$ satisfies $(\alpha,s)$-RSC and $(\beta,s)$-RSM relative to the rank constraint.\footnote{In the low-rank setting, the RSC and RSM conditions are defined with rank in place of sparsity---specifically, we are assuming that $\frac{\alpha}{2}\fronorm{X-Y}^2 \leq \loss(Y)- \loss(X) - \inner{\nabla \loss(X)}{Y-X} \leq \frac{\beta}{2}\fronorm{X-Y}^2$ whenever $\rank(X)\leq s,\rank(Y)\leq s$.}
Let $\rho = s'/s$ and $\kappa = \beta/\alpha$, and assume that
$\mRC_{s,\rho}(\mthr) < \frac{1}{2\kappa}$.
Then, for any $X_0,Y\in\R^{n\times m}$ with $\rank(X_0)\leq s$ and $\rank(Y)\leq s'$, the iterated thresholding algorithm~\eqref{eqn:iterative_thresh_matrix} run with step size $\eta=1/\beta$ and initialization point $X_0$ satisfies
\[\min_{t=1,\dots,T} \loss(X_t) \leq \loss(Y) + \left(\frac{1 - 1/\kappa}{1-2\mRC_{s,\rho}(\mthr)}\right)^T \cdot \frac{\beta}{2} \fronorm{X_0 - Y}^2\]
for each $T\geq 1$.
\end{theorem}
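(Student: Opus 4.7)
The plan is to transport the proof of Theorem~\ref{thm:upperbd} verbatim to the matrix setting, using the fact that $\R^{n\times m}$ equipped with the Frobenius inner product $\inner{A}{B}=\textnormal{tr}(A^\top B)$ is a real inner product space, so every identity used in the vector-case argument (the polarization identity, the parallelogram law) carries over when $\norm{\cdot}_2$ is replaced by $\fronorm{\cdot}$ and $\norm{\cdot}_0$ is replaced by $\rank(\cdot)$. By construction, $\mthr$ always outputs a rank-$s$ matrix, so every iterate $X_t$ satisfies $\rank(X_t)\leq s$, and the comparator $Y$ satisfies $\rank(Y)\leq s'\leq s$. Consequently the $(\alpha,s)$-RSC and $(\beta,s)$-RSM conditions apply to any pair drawn from $\{X_{t-1},X_t,Y\}$, and the matrix relative concavity bound applies with $Y$ as the comparator since $\rank(Y)\leq \rho s$.

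Fix a step $t\geq 1$ and set $Z_t=X_{t-1}-\eta\nabla\loss(X_{t-1})$, so $X_t=\mthr(Z_t)$ and $\nabla\loss(X_{t-1})=\beta(X_{t-1}-Z_t)$. Three ingredients combine: (i) RSM applied to the rank-$s$ pair $(X_{t-1},X_t)$ gives $\loss(X_t)\leq\loss(X_{t-1})+\inner{\nabla\loss(X_{t-1})}{X_t-X_{t-1}}+\tfrac{\beta}{2}\fronorm{X_t-X_{t-1}}^2$; (ii) RSC applied to $(X_{t-1},Y)$ gives $\loss(Y)\geq\loss(X_{t-1})+\inner{\nabla\loss(X_{t-1})}{Y-X_{t-1}}+\tfrac{\alpha}{2}\fronorm{Y-X_{t-1}}^2$; and (iii) the definition of $\mRC_{s,\rho}(\mthr)$ gives $\inner{Y-X_t}{Z_t-X_t}\leq\mRC_{s,\rho}(\mthr)\fronorm{Y-X_t}^2$. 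Subtracting (ii) from (i), rewriting the cross term as $\inner{\nabla\loss(X_{t-1})}{X_t-Y}=-\beta\inner{Y-X_t}{X_{t-1}-X_t}+\beta\inner{Y-X_t}{Z_t-X_t}$, applying (iii) to the second summand, and using the polarization identity $-2\inner{Y-X_t}{X_{t-1}-X_t}=\fronorm{Y-X_{t-1}}^2-\fronorm{Y-X_t}^2-\fronorm{X_{t-1}-X_t}^2$ to cancel the $\fronorm{X_{t-1}-X_t}^2$ contributions produces the one-step estimate
\[\loss(X_t)-\loss(Y)\leq\frac{\beta-\alpha}{2}\fronorm{Y-X_{t-1}}^2-\frac{\beta(1-2\mRC_{s,\rho}(\mthr))}{2}\fronorm{Y-X_t}^2.\]

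If $\loss(X_t)\leq\loss(Y)$ for some $t\leq T$ the claimed bound already holds at that index (the $\min$ is achieved there). Otherwise the left-hand side is positive, so rearranging yields the geometric contraction $\fronorm{Y-X_t}^2\leq\frac{1-1/\kappa}{1-2\mRC_{s,\rho}(\mthr)}\fronorm{Y-X_{t-1}}^2$. Iterating gives $\fronorm{Y-X_{T-1}}^2\leq\bigl(\frac{1-1/\kappa}{1-2\mRC_{s,\rho}(\mthr)}\bigr)^{T-1}\fronorm{Y-X_0}^2$, and plugging this into the one-step estimate at $t=T$ (discarding the non-positive second term there) yields $\loss(X_T)-\loss(Y)\leq\frac{\beta(1-1/\kappa)}{2}\fronorm{Y-X_{T-1}}^2$, which is bounded by $\frac{\beta}{2}\bigl(\frac{1-1/\kappa}{1-2\mRC_{s,\rho}(\mthr)}\bigr)^T\fronorm{X_0-Y}^2$ after absorbing one factor of $(1-2\mRC_{s,\rho}(\mthr))^{-1}\geq 1$.

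The main point to verify---rather than a genuine obstacle---is that no matrix-specific machinery (singular value decomposition arguments, Weyl-type inequalities, or special structure of the lift~\eqref{eqn:mthr_lift}) is needed: all the geometry is inner-product-space geometry, and any matrix-specific behavior of $\mthr$ is already packaged inside the definition of $\mRC_{s,\rho}(\mthr)$. The only bookkeeping is to confirm that each pair of iterates invoked above lies in the appropriate rank class, which is immediate from the sparsity-preserving nature of $\mthr$ and from $\rank(Y)\leq s'\leq s$.
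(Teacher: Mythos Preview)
Your argument is correct and, as the paper indicates, the one-step inequality
\[
\loss(X_t)-\loss(Y)\leq\frac{\beta-\alpha}{2}\fronorm{Y-X_{t-1}}^2-\frac{\beta\bigl(1-2\mRC_{s,\rho}(\mthr)\bigr)}{2}\fronorm{Y-X_t}^2
\]
is exactly the matrix translation of the paper's vector-case estimate (the paper writes it as $\loss(x_t)-\loss(y)\leq\frac{1}{2\eta_t}[(1-\eta_t\alpha)\norm{x_{t-1}-y}^2_2-(1-2\RC)\norm{x_t-y}^2_2]$, which specializes to yours when $\eta_t=1/\beta$). Where you diverge is in how you aggregate across iterations. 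The paper multiplies the $t$-th inequality by the weight $2\eta_t\bigl(\frac{1-1/\kappa}{1-2\mRC}\bigr)^{T-t}$, sums over $t=1,\dots,T$, telescopes, and then lower-bounds a weighted average of $\loss(X_t)$ by $\min_t\loss(X_t)$. You instead split on whether the minimum has already dropped below $\loss(Y)$, and otherwise read off a geometric contraction of $\fronorm{Y-X_t}^2$ directly from the positivity of the left-hand side. Both routes are short and elementary; the paper's telescoping also accommodates the adaptive step size (not needed for this theorem) and does not use the sign of $\mRC$, whereas your case-split argument is arguably more transparent but is slightly less uniform.

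One small gap worth noting: in your final absorption step you invoke $(1-2\mRC_{s,\rho}(\mthr))^{-1}\geq 1$, i.e.\ $\mRC_{s,\rho}(\mthr)\geq 0$. This is true (indeed Lemma~\ref{lem:RC_lowerbd_matrix} gives $\mRC_{s,\rho}(\mthr)\geq\rho/(1+\rho)>0$), but it is not part of the hypotheses of the theorem, so you should cite it explicitly. The paper's telescoping argument sidesteps this entirely.
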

\begin{theorem}\label{thm:matrix_lowerbd}
Consider any ranks $s\geq s'$, any rank-$s$ thresholding operator $\mthr$, and any constants $\beta \geq \alpha >0$. Let $\rho=s'/s$ and $\kappa = \beta/\alpha$, and assume that
$\mRC_{s,\rho}(\mthr)>\frac{1}{2\kappa}$.
Then there exists an objective function $\loss(X)$ that satisfies $(\alpha,s)$-RSC and $(\beta,s)$-RSM relative to the rank constraint, and matrices $X_0,Y\in\R^{n\times m}$ with $\rank(X_0)\leq s$ and $\rank(Y)\leq s'$, such that the iterated thresholding algorithm~\eqref{eqn:iterative_thresh_matrix} run with step size $\eta=1/\beta$ and initialization point $X_0$ satisfies
\[\lim_{t\rightarrow \infty} \loss(X_t) > \loss(Y).\]
\end{theorem}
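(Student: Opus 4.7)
The plan is to mirror the vector-case proof of Theorem~\ref{thm:lowerbd} by constructing an adversarial quadratic objective whose iterates get stuck at a rank-$s$ stationary point $X_0$ while a strictly better rank-$s'$ point $Y$ lies nearby. The first step is to extract the adversarial witness directly from the hypothesis $\mRC_{s,\rho}(\mthr) > \frac{1}{2\kappa}$: by definition of the supremum, there exist matrices $Y^*, Z^* \in \R^{n\times m}$ with $\rank(Y^*) \leq \rho s = s'$ and $Y^* \neq \mthr(Z^*)$ satisfying
\[\inner{Y^* - \mthr(Z^*)}{Z^* - \mthr(Z^*)} > \frac{1}{2\kappa}\fronorm{Y^* - \mthr(Z^*)}^2.\]
I then set $X_0 = \mthr(Z^*)$, which has rank at most $s$, and $Y = Y^*$.

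Next, I design $\loss$ so that $X_0$ is an exact fixed point of the iteration~\eqref{eqn:iterative_thresh_matrix}. The defining requirement is $X_0 - \frac{1}{\beta}\nabla\loss(X_0) = Z^*$, i.e.\ $\nabla\loss(X_0) = \beta(X_0 - Z^*)$. The simplest candidate is the pure quadratic
\[\loss(X) = \beta\inner{X_0 - Z^*}{X} + \frac{\alpha}{2}\fronorm{X - X_0}^2,\]
which matches this gradient at $X_0$ and is globally $\alpha$-strongly convex and $\alpha$-smooth on all of $\R^{n\times m}$; the latter trivially implies both $(\alpha,s)$-RSC and $(\beta,s)$-RSM relative to the rank constraint. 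Since $X_0 = \mthr(Z^*)$ is a fixed point, a one-line induction on $t$ yields $X_t = X_0$ for all $t \geq 1$, so $\lim_{t\to\infty}\loss(X_t) = \loss(X_0)$.

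The remaining step is to verify $\loss(X_0) > \loss(Y)$. Direct substitution into the quadratic reduces this to proving $\beta\inner{Y - X_0}{Z^* - X_0} > \frac{\alpha}{2}\fronorm{Y - X_0}^2$, which follows immediately from the relative concavity witness combined with the identity $\beta/(2\kappa) = \alpha/2$.

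The only thing to check carefully---more a sanity check than an obstacle---is that this construction, phrased entirely in terms of Frobenius inner products and the rank constraint, transfers verbatim from the vector setting. No SVD arguments are needed because $\mthr$ is treated as a black box and enters only through the identity $\mthr(Z^*) = X_0$; the matrix definitions of $\mRC$, RSC, RSM, and the iteration all use exactly the same algebraic ingredients as their vector counterparts, so the argument goes through word for word.
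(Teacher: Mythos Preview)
Your proof is correct and follows essentially the same strategy as the paper's proof of Theorem~\ref{thm:lowerbd} (which the paper states carries over verbatim to the matrix case): pick a witness pair from the definition of relative concavity, build a quadratic whose gradient at $X_0=\mthr(Z^*)$ equals $\beta(X_0-Z^*)$ so that $X_0$ is a fixed point, and then check $\loss(X_0)>\loss(Y)$. Your version is in fact a slight simplification---you take the isotropic quadratic with curvature $\alpha$ everywhere, which spares you the orthogonal change of basis $U$ the paper introduces (and which would be clumsier to set up over $\R^{n\times m}$), and you use a witness exceeding $\tfrac{1}{2\kappa}$ directly rather than the paper's $(1-\delta)$-slack formulation; both shortcuts are sound.
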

\noindent In other words, just as for the sparse optimization problem, the relationship between relative concavity and condition number gives a necessary and sufficient condition for guaranteed convergence. We note that these results apply to {\em any} rank-$s$ thresholding operator $\mthr$, whether or not it can be constructed by ``lifting'' a $s$-sparse thresholding operator as in~\eqref{eqn:mthr_lift}.

Next, how can we calculate relative concavity of a thresholding operator in the matrix setting? For simplicity, from this point on we assume that we are working with ranks $s\geq s'\geq 1$ with $s+s'\leq \min\{n,m\}$. For this question, we will again see that results from the sparse setting transfer to the low-rank setting. First, we have the same lower bound uniformly over all operators:
\begin{lemma}\label{lem:RC_lowerbd_matrix}
For any map $\mthr: \R^{n\times m} \rightarrow \{X\in\R^{n\times m}:\rank(X)\leq s\}$ and any sparsity proportion $\rho\in(0,1]$,
the relative concavity is lower-bounded as
\[\mRC_{s,\rho}(\mthr)\geq \frac{\rho}{1+\rho}.\]
\end{lemma}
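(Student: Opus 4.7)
The plan is to adapt the proof of Lemma~\ref{lem:RC_lowerbd} (the vector case) to the low-rank setting, replacing $s$-sparse supports by rank-$s$ column and row spaces. Fix orthonormal bases $\{e_i\}_{i=1}^n\subset\R^n$ and $\{f_i\}_{i=1}^m\subset\R^m$, take the test matrix $Z = \sum_{i=1}^{s+s'} e_i f_i^\top$---valid because $s+s'\le\min\{n,m\}$, with all nonzero singular values equal to $1$, playing the role of $z=\mathbf{1}_{[s+s']}$ in the vector construction---and set $X = \mthr(Z)$, a matrix of rank at most $s$. The goal is then to produce a rank-$s'$ witness $Y$ (analogous to $y=\mathbf{1}_T$ in the vector proof) achieving $\inner{Y-X}{Z-X}/\fronorm{Y-X}^2 \ge \rho/(1+\rho)$.

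The key dimension count exploits the hypothesis: since $\dim\,\mathrm{col}(X)\le s$ and $\dim\,\mathrm{span}(e_1,\ldots,e_{s+s'}) = s+s'$ in the $n$-dimensional ambient space, inclusion--exclusion gives $\dim(\mathrm{col}(X)^\perp\cap\mathrm{span}(e_1,\ldots,e_{s+s'}))\ge (n-s)+(s+s')-n = s'$, and similarly for the row side. I would pick orthonormal $\{p_i\}_{i=1}^{s'}$ and $\{q_i\}_{i=1}^{s'}$ in these two intersections and set $Y = \alpha\sum_{i=1}^{s'} p_iq_i^\top$, which has rank at most $s'$. Then $\inner{Y}{X}=0$ by orthogonality, so
\[
\frac{\inner{Y-X}{Z-X}}{\fronorm{Y-X}^2} \;=\; \frac{\alpha\,\inner{\sum_i p_iq_i^\top}{Z} \;-\; \inner{X}{Z} \;+\; \fronorm{X}^2}{\alpha^2 s' + \fronorm{X}^2}.
\]
Aligning $\{p_i\}$ with $\{q_i\}$ via the SVD/principal angles of the two intersection subspaces inside $\R^{s+s'}$, bounding $\inner{X}{Z}\le \sqrt{s}\,\fronorm{X}$ by von Neumann's trace inequality (since $X$ has rank at most $s$ while $Z$ has all singular values equal to $1$), and then optimizing over the scalar $\alpha$, the ratio collapses to the same quadratic-in-$\alpha$ manipulation as in the vector proof and produces the bound $\rho/(1+\rho)$.

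The hardest step will be carrying out the alignment of $\{p_i\}$ with $\{q_i\}$ when the two intersection subspaces sit in adverse relative position (principal angles close to $\pi/2$), since this is what controls the numerator term $\inner{\sum_i p_iq_i^\top}{Z} = \sum_i\sum_{j\le s+s'}(p_i^\top e_j)(f_j^\top q_i)$. In such degenerate cases I expect either to appeal to the extra slack from the $\fronorm{X}^2$ summand in both numerator and denominator (which, being added to both, pushes the ratio toward $1$), or to handle separately the edge case in which some rank-$s'$ matrix $Z_0$ satisfies $\mthr(Z_0)\ne Z_0$---for such $Z_0$ the trivial choice $Y = Z_0$ already yields ratio exactly $1\ge\rho/(1+\rho)$, completing the argument.
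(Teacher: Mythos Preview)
Your plan has a genuine gap at exactly the step you flagged as hardest. By asking for \emph{both} $p_i\perp\textnormal{col}(X)$ and $q_i\perp\textnormal{row}(X)$, you are over-constraining the witness: the identity $\inner{Y}{X}=0$ already follows from the column-side condition alone (since $p_i^\top X=0$ for each $i$), so the row-side condition buys you nothing and instead creates the alignment problem. And that problem is real, not just a technicality. Take $s=2$, $s'=1$, $Z=e_1f_1^\top+e_2f_2^\top+e_3f_3^\top$, and suppose $\mthr$ sends $Z$ to $X=e_1(f_1+\epsilon f_2)^\top+\epsilon\, e_2 f_3^\top$ for small $\epsilon>0$. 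Then the intersection on the column side is $\textnormal{span}(e_3)$ while on the row side it is $\textnormal{span}(-\epsilon f_1+f_2)$, so your forced choice is $Y\propto e_3(-\epsilon f_1+f_2)^\top$, giving $\inner{Y}{Z}=0$. A direct computation then yields
\[
\frac{\inner{Y-X}{Z-X}}{\fronorm{Y-X}^2}=\frac{2\epsilon^2}{\alpha^2+1+2\epsilon^2}\le 2\epsilon^2,
\]
which falls below $\rho/(1+\rho)=1/3$ once $\epsilon$ is small. Your fallback (2) does not rescue this, since nothing prevents $\mthr$ from fixing every rank-$1$ matrix while still sending this particular $Z$ to this particular $X$; fallback (1) is too vague to close the gap.

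The fix is simple and is essentially what the paper does: drop the row-side orthogonality on the $q_i$. Either (keeping your $Z$) take $q_i$ to be the image of $p_i$ under the isometry $e_j\mapsto f_j$, so that $\inner{p_iq_i^\top}{Z}=1$ automatically and the $q_i$ inherit orthonormality from the $p_i$; or, more cleanly, follow the paper and choose $Z$ of full rank $\min\{n,m\}$---concretely $Z=\left(\begin{smallmatrix}\mathbf{I}_m\\ \mathbf{0}\end{smallmatrix}\right)$ when $n\ge m$---then take $V_\perp\in\R^{m\times s'}$ orthonormal and orthogonal only to the \emph{row} space of $X$, and set $Y=t\left(\begin{smallmatrix}V_\perp V_\perp^\top\\ \mathbf{0}\end{smallmatrix}\right)$. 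Orthogonality of the row spaces alone already gives $\inner{Y}{X}=0$, while $\inner{Y}{Z}=t\cdot s'$ holds by construction; after the von Neumann bound $\inner{X}{Z}\le\sqrt{s}\,\fronorm{X}$ the computation collapses to the same one-variable optimization as in the vector lemma, with no alignment issue at all.
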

\noindent Furthermore, if we restrict our attention to ``lifted'' thresholding operators of the form~\eqref{eqn:mthr_lift}, the relative concavity of $\thr$ is inherited by the lifted operator $\mthr$---as long as we restrict ourselves to $s$-sparse thresholding operators $\thr$ that satisfy a natural sign condition:
\begin{equation}\label{eqn:thr_sign}
\text{For any $z\in\R^d$ and any $a\in\{\pm 1\}^d$, $\thr\big(\textnormal{diag}(a)\cdot z\big) = \textnormal{diag}(a)\cdot \thr(z)$.}
\end{equation}
This effectively means that $\thr(z)$ preserves the signs of $z$, but the signs of $z$ do not affect the amount of shrinkage in the thresholded vector $\thr(z)$. For example, this requires that $\thr(-z) = - \thr(z)$. Under this assumption, the relative concavity of $\thr$ carries over into the matrix setting.
\begin{lemma}\label{lem:matrix_RC}
Let $\thr$ be a $s$-sparse thresholding operator satisfying the sign condition~\eqref{eqn:thr_sign}, and let $\mthr$ be the lifted thresholding operator defined in~\eqref{eqn:mthr_lift}. Then for every sparsity proportion $\rho\in(0,1]$,
\[\mRC_{s,\rho}(\mthr) = \RC_{s,\rho}(\thr).\]
\end{lemma}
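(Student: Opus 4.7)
The plan is to establish the two inequalities $\mRC_{s,\rho}(\mthr) \geq \RC_{s,\rho}(\thr)$ and $\mRC_{s,\rho}(\mthr) \leq \RC_{s,\rho}(\thr)$ separately, in each case reducing the matrix problem to its vector counterpart via the singular value decomposition.

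For the forward inequality, the approach is to embed any near-extremal vector pair for $\RC_{s,\rho}(\thr)$ as a diagonal matrix pair. Given $y, z \in \R^{\min\{n,m\}}$ with $\norm{y}_0 \leq \rho s$ and $y \neq \thr(z)$, I form $Y, Z \in \R^{n \times m}$ by placing $y$ and $z$ on the (pseudo-)diagonal with zero padding. The sign condition~\eqref{eqn:thr_sign} is exactly what is needed to absorb the signed-permutation ambiguity in the SVD of $Z$ so that $\mthr(Z)$ coincides with the diagonal embedding of $\thr(z)$. Then $\rank(Y) = \norm{y}_0 \leq \rho s$ and $Y \neq \mthr(Z)$, and because both $Z - \mthr(Z)$ and $Y - \mthr(Z)$ are diagonal, the Frobenius inner product and norm collapse to their vector analogues, giving an exact match of matrix and vector ratios.

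For the reverse direction, let $Y, Z \in \R^{n \times m}$ with $\rank(Y) \leq \rho s$ and $Y \neq \mthr(Z)$ be arbitrary, and write the SVD $Z = U \textnormal{diag}(z) V^\top$, so that $\mthr(Z) = U \textnormal{diag}(\thr(z)) V^\top$. By orthogonal invariance of the Frobenius inner product and norm, substituting $\tilde Y = U^\top Y V$ (which preserves rank) reduces to the case where $Z$ and $\mthr(Z)$ are diagonal in the standard basis, and the ratio becomes
\[\frac{\sum_i (\tilde Y_{ii} - \thr(z)_i)(z_i - \thr(z)_i)}{\sum_i (\tilde Y_{ii} - \thr(z)_i)^2 + \sum_{i \neq j} \tilde Y_{ij}^2}.\]
The numerator depends only on $\textnormal{diag}(\tilde Y)$, while off-diagonal entries of $\tilde Y$ only inflate the denominator, so the ratio is bounded above by $\frac{\inner{a - \thr(z)}{z - \thr(z)}}{\norm{a - \thr(z)}^2_2}$, where $a = \textnormal{diag}(\tilde Y)$.

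The main obstacle is that a rank-$\leq \rho s$ matrix need not have a $\rho s$-sparse diagonal — a dense rank-one matrix can have a fully nonzero diagonal — so $a$ is not automatically a valid competitor for the sparse supremum defining $\RC_{s,\rho}(\thr)$. To close this gap, I would argue that the worst case of the matrix ratio is attained when $\tilde Y$ is itself diagonal in the $Z$-SVD basis: intuitively, whatever off-diagonal mass $\tilde Y$ must carry in order to keep its rank at $\rho s$ with a dense diagonal only pads the denominator without feeding the numerator, so the supremum can never be improved by a non-diagonal $\tilde Y$. This can be made precise either by a variational argument on the rank-factored form of $\tilde Y$, or by directly constructing, for each candidate $\tilde Y$, a $\rho s$-sparse vector $y$ (for instance, supported on the top $\rho s$ diagonal entries of $\tilde Y$) whose vector ratio dominates the matrix ratio after carefully accounting for the discarded diagonal contributions against the off-diagonal Frobenius mass. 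Once $\tilde Y$ is reduced to a diagonal of rank $\leq \rho s$, automatically $\norm{\textnormal{diag}(\tilde Y)}_0 \leq \rho s$, and the matrix ratio equals the vector ratio for a valid pair in the $\RC_{s,\rho}(\thr)$ supremum, yielding the matching upper bound.
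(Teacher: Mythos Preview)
Your forward inequality is fine and matches the paper's: embed vectors as diagonal matrices and read off the ratio. Your remark about the sign condition absorbing the SVD ambiguity is correct and in fact slightly more careful than the paper, which just asserts $\mthr(\textnormal{diag}(z))=\textnormal{diag}(\thr(z))$.

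The reverse inequality has a genuine gap. You correctly isolate the obstacle --- the diagonal of a rank-$\rho s$ matrix need not be $\rho s$-sparse --- but neither of your suggested fixes is actually carried out, and the second one (keep the top $\rho s$ diagonal entries and hope the discarded diagonal mass is covered by off-diagonal Frobenius mass) is not obviously true: a rank-one $\tilde Y$ can have arbitrarily large diagonal contributions on more than $\rho s$ coordinates with comparatively small off-diagonal mass in the relevant rows, so the bookkeeping you describe does not go through without a further idea.

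The paper closes this gap cleanly by changing the target of the optimization. Rather than bounding the ratio directly, it fixes $\gamma=\RC_{s,\rho}(\thr)$ and shows $\mathsf{h}(Y):=\gamma\,\fronorm{Y-X}^2-\inner{Y-X}{Z-X}\geq 0$ for all $\rank(Y)\leq s'$. Completing the square,
\[
\mathsf{h}(Y)=\gamma\,\Fronorm{\,Y-\Bigl(X+\tfrac{Z-X}{2\gamma}\Bigr)}^2-\frac{\fronorm{Z-X}^2}{4\gamma},
\]
so minimizing $\mathsf{h}$ over $\rank(Y)\leq s'$ is a best low-rank approximation problem for the fixed matrix $W=X+(Z-X)/(2\gamma)$. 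Since $X$ and $Z$ share singular vectors $U,V$, so does $W$; by Eckart--Young the minimizer is $Y=U\,\textnormal{diag}(y)\,V^\top$ with $\norm{y}_0\leq s'$. For such $Y$ the problem collapses exactly to the vector case, where $\mathsf{h}(Y)\geq 0$ holds by the definition of $\RC_{s,\rho}(\thr)$. This is the ``variational argument'' you gestured at, but the specific mechanism --- complete the square, then invoke Eckart--Young on a matrix that inherits $Z$'s singular vectors --- is the missing idea.
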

\noindent It is obvious that all the thresholding operators we have considered satisfy the sign condition~\eqref{eqn:thr_sign}. Thus, all the results of relative concavity that we have proved in the sparse setting, carry over directly to the low-rank setting. In particular, as for the sparse setting, the hard thresholding operator has relative concavity
\[\mRC_{s,\rho}(\mHT) = \frac{\sqrt{\rho}}{2},\]
while any thresholding operator $\mthrsig$ constructed with some shrinkage function $\sigma$ satisfying $\sigma(1)=1/2$  and the conditions of Lemma~\ref{lem:unifying_RC}, such as the reciprocal thresholding operator, $\mRT$, or $\ell_q$ thresholding with $q=2/3$, $\mthr^{\ell_{2/3}}$, satisfy
\[\mRC_{s,\rho}(\mthrsig)=\mRC_{s,\rho}(\mthr^{\ell_{2/3}})=\mRC_{s,\rho}(\mRT) \leq \frac{\rho}{\min\{1,4(1-\rho)\}}.\]
If the desired rank proportion $\rho=s'/s$ is fixed in advance, then as before, choosing reciprocal thresholding with parameter $c=\rho$, or $\ell_q$ thresholding with $q = \frac{2(1-\rho)}{3-\rho}$, we again obtain the optimal relative concavity of $\frac{\rho}{1+\rho}$. As before, we can conclude that reciprocal thresholding and $\ell_{2/3}$ each offer lower relative concavity than hard thresholding whenever $\rho$ is small---and, correspondingly, are a safer choice for objective functions $\loss$ whose condition number is not close to $1$.

\section{Sparse linear regression}\label{sec:linear_regression}
Now that we have discussed the deterministic optimization setting in depth, it is natural to ask what is the implication of these guarantee for a statistically random setting. In this section, we apply our developed machinery to the concrete statistical setting of sparse linear regression. We work with the Gaussian linear model
\begin{equation}\label{eqn:linearmodel}
y=X\theta_0+z
\end{equation}
where $X\in \R^{n\times p}$ is a fixed design matrix, $\theta_0\in \R^p$ is the true coefficient vector assumed to be fixed and $s_0$-sparse, and $z\sim N(0,\sigma^2\mathbf{I}_n)$ is the noise vector, with fixed unknown noise level $\sigma^2>0$. In this section we will mainly be interested in prediction error, i.e. how well we can estimate the true mean vector $X\theta_0$. One way of capturing the conditioning of the design matrix is by the following definition: at some given sparsity level $s$, we define a set of design matrices $\Xcal(\alpha,\beta,s)$ as
\begin{equation}\label{eqn:Xset}
\Xcal(\alpha,\beta,s)=\left\{X\in\R^{n\times p}:\textnormal{ the map $\theta\mapsto \theta^\top \left(\frac{X^\top X}{2n}\right)\theta$ satisfies $(\alpha,s)$-RSC and $(\beta,s)$-RSM}\right\}.
\end{equation} As usual, we will be interested in the condition number $\kappa = \beta/\alpha$.  A similar definition is the {\em restricted eigenvalue} condition on the design matrix $X$, which constrains $X$ to the following set
\begin{multline}\label{eqn:Xset_RE}
\Xcal_{\textnormal{RE}}(\kappa,s_0)=\bigg\{X\in\R^{n\times p}:\textnormal{ $\max_{j=1,\dots,p}\frac{\norm{X_j}_2}{\sqrt{n}}\leq 1$, and $\theta^\top \left(\frac{X^\top X}{2n}\right)\theta\geq \frac{1}{2\kappa}\norm{\theta}^2_2$}\\\text{ for all $\theta\in\R^d$ with $\norm{\theta}_1\leq 4\max_{|S|=s_0}\norm{\theta_S}_1 $}\bigg\}.
\end{multline} To gain some intuition for when these conditions may hold, for a design matrix $X$ whose rows are i.i.d.~draws from a normal distribution $N(0,\Sigma)$, \citet[Theorem 1]{raskutti2010restricted} show that the population-level eigenvalues of the covariance $\Sigma$ are approximately preserved in the design matrix, at any sparsity level $s\ll \frac{n}{\log(p)}$.

\paragraph{Computational lower bound}
In terms of prediction error, the optimal method, $\ell_0$ constrained least squares method, is not computable. Thus from the lower bound side, it is of interest to ask what is the lowest prediction error achievable in the class of computational feasible estimator. Recently, \citet{zhang2014lower} provide a partial answer to this question, restricting to the class of $s_0$ sparse estimator. Their main result (see Theorem $1$ in \citet{zhang2014lower}) states the following (informally):
\begin{quote}
  Under the assumption that $\emph{NP}\nsubseteq \emph{P}\setminus \emph{poly}$, for any $\delta\in (0,1)$, under some assumption on $n,d,s_0$ and for any $\kappa$ in a wide range, there exists a design matrix $X\in \Xcal_{\textnormal{RE}}(\kappa,s_0)$ such that for any computational efficient methods, the maximum prediction error (over all $s_0$ sparse $\theta_0$) is lower bounded by (up to some constant) $\kappa \cdot \frac{\sigma^2 s_0^{1-\delta}\log(d)}{n}$.
\end{quote}
Thus if we restrict ourselves to all computationally feasible $s_0$ sparse estimator, then the best achievable squared prediction error is of order $\kappa \cdot \frac{\sigma^2 s_0\log(d)}{n}$.

\paragraph{Upper bounds for iterative thresholding methods}
In this section we establish prediction error bounds for iterative thresholding algorithm. First we provide some intuition on how to connect restricted optimality guarantee with statistical performance. It is well known that the global optimum of $\ell_0$-constrained least squared loss, i.e.
\[\widehat{\theta}\in\arg\min_{\norm{\theta}_0\leq s_0}\norm{y-X\theta}^2_2,\]
achieves a squared prediction error scaling as $\frac{\sigma^2s_0\log(d)}{n}$. For iterative thresholding algorithms, since we only have restricted optimality rather than global optimality, we are forced to work over a constraint at a larger sparsity $s\geq s_0$ to guarantee $\norm{y-X
\widehat\theta}^2_2\leq\min_{\norm{\theta}_0\leq s_0}\norm{y-X\theta}^2_2$.
The statistical price one has to pay for this computational strategy is the inflation in noise level corresponding to the inflation in sparsity---that is, we have error on $s$ many nonzero coefficients, rather than $s_0$ many---so the final upper bound for prediction error would scale as $\frac{\sigma^2s\log(d)}{n}$ instead of $\frac{\sigma^2s_0\log(d)}{n}$, where $s$ is chosen to be the smallest sparsity level that guaratees restricted optimality relative to the lower sparsity level $s'=s_0$. Now recall from Section~\ref{sec:RC} that, while hard thresholding offers restricted optimality guarantees at sparsity levels $s\sim \kappa^2s_0$, the optimal and near-optimal thresholding operators (for example reciprocal thresholding and $\ell_{2/3}$ thresholding) improves this scaling to $s\sim \kappa s_0$. This allows us to improve the upper bound for squared prediction error from scaling as $\kappa^2$ to $\kappa$, when we switch our method from iterative hard thresholding, to iterative thresholding with an operator $\thr$ that enjoys a lower relative concavity. Indeed in \citet{jain2014iterative}, it is shown that iterative hard thresholding achieves a prediction error upper bounded by $\kappa^2\cdot\frac{\sigma^2s_0\log(d)}{n}$. In view of our lower bound result Theorem~\ref{thm:lowerbd}, which states that the restircted optimality guarantee is tight, we postulate that the corresponding prediction error bound is also tight for iterative hard thresholding method.

Now we formulate this rigorously. Consider the iterative thresholding algorithm with some thresholding operator $\thr$ applied to the objective function $\loss(\theta) = \frac{1}{2n}\norm{y-X\theta}^2_2$, whose iteration takes the form
\begin{equation}\label{eqn:RT_linearreg}\widehat\theta_t = \thr\big(\widehat\theta_{t-1} + \eta_t \cdot\frac{1}{n} X^\top (y - X\widehat\theta_{t-1})\big).\end{equation}
As usual, for the step size we may choose $\eta_t = 1/\beta$ if $\beta$ is known, or we may choose $\eta_t$ adaptively as in~\eqref{eqn:iterative_thresh_eta_adaptive}. We will work with any thresholding operator $\thr$ satisfying
\begin{equation}\label{eqn:RC_rho_for_regr}\RC_{s,\rho}(\thr)\leq \rho\text{ for all $\rho\in(0,1/2)$}.\end{equation}
From Section~\ref{sec:RC}, we see that on the one hand, this condition rules out hard thresholding and any continuous thresholding operator; on the other hand, it is satisfied by the reciprocal thresholding operator, $\RT$, by $\ell_q$ thresholding with $q=2/3$, $\LQuniv$, and by any shrinkage operator $\thrsig$ where $\sigma(1)=1/2$ and $\sigma$ satisfies the conditions of Lemma~\ref{lem:unifying_RC}. We now present our result for this setting:

\begin{theorem}\label{thm:RT_linearreg}
Suppose that $y = X\theta_0 + N(0,\sigma^2\mathbf{I}_n)$, where $\theta_0$ is $s_0$-sparse, and where $X\in\Xcal(\alpha,\beta,s)$, where $s = C\kappa s_0$ for some $C>2$. Suppose that $\thr$ is any $s$-sparse thresholding operator satisfying~\eqref{eqn:RC_rho_for_regr}.

Let $\widehat\theta_t$ be the estimate produced at step $t$ of the iterative thresholding algorithm~\eqref{eqn:RT_linearreg} initialized at some $s$-sparse $\widehat\theta_0\in\R^d$. Let $\tilde\theta_t \in\arg\min_{\theta\in\{\widehat\theta_1,\dots,\widehat\theta_t\}}\frac{1}{2n}\norm{y-X\theta}^2_2$, that is, $\tilde\theta_t$ is the best estimate seen before time $t$, relative to the loss function $\loss(\theta) = \frac{1}{2n}\norm{y-X\theta}^2_2$.

Then, for any $\delta>0$ and any $t\geq 1$,
\[\frac{1}{n}\norm{X(\tilde{\theta}_t-\theta_0)}^2_2 \leq \kappa\cdot \frac{28C \sigma^2 s_0\log(d) }{n} +\frac{12\sigma^2 \log(1/\delta)}{n} +  \left(\frac{1-1/\kappa}{1-2/C\kappa}\right)^t \cdot2\beta \norm{\widehat\theta_0 -\theta_0}^2_2,\]
with probability at least $1-\delta$.
\end{theorem}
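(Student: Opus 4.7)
The plan is to reduce the statistical statement to the deterministic guarantee of Theorem~\ref{thm:upperbd} applied to the least squares objective $\loss(\theta)=\frac{1}{2n}\norm{y-X\theta}^2_2$ with comparison point $\theta_0$, and then to bound the resulting noise-dependent cross term by a $\chi^2$ tail bound combined with a union bound over supports. Since $X\in\Xcal(\alpha,\beta,s)$, $\loss$ is $(\alpha,s)$-RSC and $(\beta,s)$-RSM by definition. With $s=C\kappa s_0$ I would apply Theorem~\ref{thm:upperbd} with $s'=s_0$, so $\rho=s_0/s=1/(C\kappa)<1/2$; the hypothesis $\RC_{s,\rho}(\thr)\le\rho$ then yields $\RC_{s,\rho}(\thr)\le 1/(C\kappa)<1/(2\kappa)$ since $C>2$, so the theorem gives
\[\loss(\tilde\theta_t)-\loss(\theta_0)\le \left(\frac{1-1/\kappa}{1-2/(C\kappa)}\right)^t\cdot\frac{\beta}{2}\norm{\widehat\theta_0-\theta_0}^2_2.\]

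Next I would translate this loss bound to prediction error. Writing $y=X\theta_0+z$ and expanding the quadratic gives the identity
\[\loss(\theta)-\loss(\theta_0)=\frac{1}{2n}\norm{X(\theta-\theta_0)}^2_2-\frac{1}{n}\inner{z}{X(\theta-\theta_0)},\]
so, setting $a=\norm{X(\tilde\theta_t-\theta_0)}_2$ and applying the above display to $\tilde\theta_t$, I obtain $\frac{a^2}{2n}\le(\text{shrink term})+\frac{1}{n}|\inner{z}{X(\tilde\theta_t-\theta_0)}|$.

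The core step is controlling the cross term. Since $\tilde\theta_t$ is $s$-sparse and $\theta_0$ is $s_0$-sparse with $s_0\le s$, the difference has support $S$ of size at most $2s$, so $\inner{z}{X(\tilde\theta_t-\theta_0)}=\inner{P_{X_S}z}{X(\tilde\theta_t-\theta_0)}\le \norm{P_{X_S}z}_2\cdot\norm{X(\tilde\theta_t-\theta_0)}_2$, where $P_{X_S}$ is the orthogonal projection onto the column span of $X_S$. For any fixed $S$ with $|S|\le 2s$, $\norm{P_{X_S}z}^2_2/\sigma^2$ is $\chi^2$ with at most $2s$ degrees of freedom. Using the standard tail bound $\pr(\chi^2_k\ge k+2\sqrt{kt}+2t)\le e^{-t}$ with $t=\log(\binom{d}{2s}/\delta)$, together with $\log\binom{d}{2s}\le 2s\log d$ and a union bound over supports, yields with probability at least $1-\delta$ a bound of the form
\[\max_{|S|\le 2s}\norm{P_{X_S}z}^2_2 \le 7\sigma^2 s\log(d)+3\sigma^2\log(1/\delta).\]

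Plugging this into the quadratic inequality and applying $\frac{2ab}{n}\le\frac{a^2}{2n}+\frac{2b^2}{n}$ to absorb the $a$-linear term produces
\[\frac{1}{n}\norm{X(\tilde\theta_t-\theta_0)}^2_2\le 2\beta\left(\frac{1-1/\kappa}{1-2/(C\kappa)}\right)^t\norm{\widehat\theta_0-\theta_0}^2_2+\frac{4}{n}\max_{|S|\le 2s}\norm{P_{X_S}z}^2_2,\]
and substituting $s=C\kappa s_0$ and the high-probability bound gives exactly $\kappa\cdot\frac{28C\sigma^2 s_0\log(d)}{n}+\frac{12\sigma^2\log(1/\delta)}{n}$ plus the optimization error, as claimed. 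The main obstacle I expect is the cross-term bound: because $\tilde\theta_t$ depends on $z$ one cannot treat $\inner{z}{X(\tilde\theta_t-\theta_0)}$ as a single Gaussian, so the uniform control over all $\binom{d}{2s}$ candidate supports is essential and is precisely what dictates the $s_0\log(d)$ statistical rate.
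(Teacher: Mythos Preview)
Your argument is correct and follows essentially the same route as the paper: apply Theorem~\ref{thm:upperbd} with $y=\theta_0$, expand $\loss(\tilde\theta_t)-\loss(\theta_0)$ into prediction error minus a Gaussian cross term, control the cross term uniformly over supports via a $\chi^2$ tail plus union bound, and absorb with Young's inequality. One small quantitative point: to obtain the exact constant $7$ in the $\chi^2$ bound you should take the union over the $\binom{d}{s}\le d^s$ possible supports of $\tilde\theta_t$ (the support of $\theta_0$ is fixed, so the projection subspace is determined by $\textnormal{supp}(\tilde\theta_t)\cup\textnormal{supp}(\theta_0)$), rather than over all $\binom{d}{2s}$ subsets; your looser count would inflate the constant unless $d$ is fairly large.
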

\noindent  Since $t$ can be taken to be large (each iteration is very cheap), the dominant term is the first one, so we essentially have
\[\frac{1}{n}\norm{X(\tilde\theta_t - \theta_0)}^2_2 \lesssim \kappa\cdot \frac{\sigma^2 s_0\log(d) }{n}.\]
Comparing with the upper bound for iterative hard thresholding, we see that we now attains the ideal $\kappa$, rather than $\kappa^2$, scaling.

\paragraph{Comparison with Lasso}
The Lasso estimate of $\theta_0$, given by the convex optimization problem \[\widehat{\theta}\in\arg\min_{\theta\in\R^d}\left\{\frac{1}{2}\norm{y-X\theta}^2_2+\lambda\norm{\theta}_1\right\},\]
is proved in \citet{bickel2009simultaneous} to achieve a squared prediction error bounded as
\begin{equation}\label{eqn:lasso_upperbd}\frac{1}{n}\norm{X(\widehat{\theta}-\theta_0)}^2_2 \lesssim \kappa \cdot \frac{\sigma^2 s_0\log(d)}{n}\end{equation}
with a penalty parameter value $\lambda\sim \sigma\sqrt{\frac{\log(d)}{n}}$, under the assumption that $X\in \Xcal_{\textnormal{RE}}(\kappa,s_0)$. Compared with Lasso, due to Theorem~\ref{thm:RT_linearreg}, iterative thresholding algorithms with proper thresholding operators, for example the simple and efficient reciprocal thresholding, achieve the same squared prediction error bound. Moreover, both Lasso and iterative reciprocal thresholding method are guaranteed to give an estimator that is $\mathcal{O}(\kappa s_0)$ sparse (this sparsity level for Lasso is proved in \citet[Eqn. (7.9)]{bickel2009simultaneous}), and thus nearly match the computational lower bound with a gap in sparsity. An open question for future work is whether the larger sparsity level, i.e.~$\mathcal{O}(\kappa s_0)$ rather than $s_0$, is unavoidable to achieve the squared prediction error $\kappa \cdot \frac{\sigma^2 s_0\log(d)}{n}$, or whether there may be an $\mathcal{O}(s_0)$-sparse and computationally efficient estimator that achieves this bound.

\section{Discussion}

Relative concavity offers a framework for comparing theoretical properties of thresholding operators. Under this framework, we find a general class of optimal and near-optimal thresholding operators, among which is the new reciprocal thresholding operator, an alternative to hard and soft thresholding with tighter theoretical guarantees that is able to achieve better dependence on condition number for sparse and low-rank optimization problems. 

Nonetheless, many open questions remain for these problems. For example, our upper and lower bounds on $\lim_{t\rightarrow \infty}\loss(x_t)$ are proved relative to a broad class of functions satisfying (restricted) convexity and smoothness properties, with no underlying statistical model. In a statistical framework, we may be able to make additional assumptions, for instance, assuming that $\nabla\loss(y)$ is small at some highly sparse $y$ (e.g.~if $y$ is the true model parameter vector, while $\loss$ is the negative log-likelihood on the observed data)---is the relative concavity still necessary and sufficient for optimization guarantees, or would we observe different behavior of the various thresholding operators in this statistical setting? 

Relatedly, the relative concavity characterizes the restricted optimality guarantee of a thresholding operator on the worst-case objective function. 
In practice we may be more interested in the average-case convergence behavior of a thresholding operator, if the objective function $\loss$ arises from
some underlying statistical model or random process. Furthermore, how does the choice of the thresholding operator interact with modifications of the gradient descent algorithm, such as decreasing step size, choosing the step size via backtracking or another adaptive method, acceleration of the gradient descent step, replacing gradients with stochastic gradients, or using second-order information? We hope to address these directions in future work.

\subsection*{Acknowledgements}
R.F.B.~was partially supported by the National Science Foundation via grant DMS-1654076, and by an Alfred
P.~Sloan fellowship. The authors are grateful
to Chao Gao for helpful discussions and feedback on this work.

\bibliographystyle{plainnat}
\bibliography{iterativethresholding}

\appendix

\section{Proofs}

\subsection{Proofs of upper and lower bounds on convergence}

In this section, we prove our upper and lower bounds on convergence for the sparse setting, Theorems~\ref{thm:upperbd} and~\ref{thm:lowerbd}. The results for the matrix setting, Theorems~\ref{thm:matrix_upperbd} and~\ref{thm:matrix_lowerbd}, are proved identically, so we do not give those proofs here.

\begin{proof}[Proof of Theorem~\ref{thm:upperbd}]
Fix any $t\in\{1,\dots,T\}$. Since $x_{t-1}$ and $x_t$ are $s$-sparse by definition of the algorithm, and $\loss$ satisfies $(\alpha,s)$-RSC and $(\beta,s)$-RSM, we have
\begin{align*}
\loss(y)&\geq \loss(x_{t-1})+\langle \nabla \loss(x_{t-1}),y-x_{t-1}\rangle+\frac{\alpha}{2}\norm{y-x_{t-1}}^2_2,\\
\loss(x_t)&\leq \loss(x_{t-1})+\langle \nabla \loss(x_{t-1}),x_t-x_{t-1}\rangle+\frac{1}{2\eta_t}\norm{x_t-x_{t-1}}^2_2,
\end{align*}
where $\eta_t = \eta = \frac{1}{\beta}$ for the fixed step size algorithm~\eqref{eqn:iterative_thresh}, or $\eta_t$ is the adaptive step size defined in the algorithm~\eqref{eqn:iterative_thresh_eta_adaptive}---note that in this second case, since $\loss$ satisfies $(\beta,s)$-RSM, we see that $\eta_t \geq \frac{1}{\beta}$ since the step size is chosen by backtracking. Combining these two inequalities, we obtain
\begin{equation}\label{eqn:converge_step1}
\loss(x_t) - \loss(y)
 \leq \langle \nabla \loss(x_{t-1}), x_t - y\rangle +\frac{1}{2\eta_t}\norm{x_t-x_{t-1}}^2_2-  \frac{\alpha}{2}\norm{y-x_{t-1}}^2_2 .\end{equation}
We can also calculate
 \begin{align}
\notag&\frac{1}{2\eta_t}\norm{x_t - y}^2_2\\
\notag&=\frac{1}{2\eta_t}\norm{x_{t-1} - y}^2_2 -\frac{1}{2\eta_t}\norm{x_t - x_{t-1}}^2_2 + \frac{1}{\eta_t}\langle  x_{t-1} -x_t , y -x_t\rangle\\
\notag&=\frac{1}{2\eta_t}\norm{x_{t-1} - y}^2_2 -\frac{1}{2\eta_t}\norm{x_t - x_{t-1}}^2_2 + \frac{1}{\eta_t}\left\langle \left(x_{t-1} - \eta_t\nabla \loss(x_{t-1})\right) - x_t, y- x_t\right\rangle - \langle \nabla \loss(x_{t-1}),x_t-y\rangle \\
\label{eqn:converge_step2}&\leq\frac{1}{2\eta_t}\norm{x_{t-1} - y}^2_2 -\frac{1}{2\eta_t}\norm{x_t - x_{t-1}}^2_2 + \frac{1}{\eta_t} \cdot \RC_{s,\rho}(\thr)\cdot \norm{x_t - y}^2_2-\langle \nabla \loss(x_{t-1}),x_t-y\rangle,
\end{align}
where the last step applies the definition of restricted concavity, since $x_t = \thr\left(x_{t-1}-\eta_t\nabla \loss(x_{t-1})\right)$ by definition of the algorithm.

Combining steps~\eqref{eqn:converge_step1} and~\eqref{eqn:converge_step2}, then,
\[\loss(x_t) - \loss(y) \leq  \frac{1}{2\eta_t}\left[\left(1-\eta_t\alpha\right)\norm{x_{t-1} - y}^2_2 - \left(1 - 2\RC_{s,\rho}(\thr)\right)\norm{x_t - y}^2_2 \right].\]
Since $\eta_t\alpha\geq \frac{1}{\beta}\cdot \alpha = \frac{1}{\kappa}$, this implies
\[\loss(x_t) - \loss(y) \leq  \frac{1}{2\eta_t}\left[\left(1-\frac{1}{\kappa}\right)\norm{x_{t-1} - y}^2_2 - \left(1 - 2\RC_{s,\rho}(\thr)\right)\norm{x_t - y}^2_2 \right].\]
Taking a weighted sum over $t=1,\dots,T$, we obtain
\begin{align*}
&\sum_{t=1}^{T} 2\eta_t \left(\frac{1-1/\kappa}{1-2\RC_{s,\rho}(\thr)}\right)^{T-t} \cdot(\loss(x_t)-\loss(y))\\
&\leq \sum_{t=1}^{T} \left(\frac{1-1/\kappa}{1-2\RC_{s,\rho}(\thr)}\right)^{T-t} \cdot\left[\left(1-\frac{1}{\kappa}\right)\norm{x_{t-1} - y}^2_2 - \left(1 - 2\RC_{s,\rho}(\thr)\right)\norm{x_t - y}^2_2 \right]\\
&= \left(1 - 2\RC_{s,\rho}(\thr)\right) \cdot \sum_{t=1}^{T} \left[\left(\frac{1-1/\kappa}{1-2\RC_{s,\rho}(\thr)}\right)^{T-t+1} \norm{x_{t-1} - y}^2_2 - \left(\frac{1-1/\kappa}{1-2\RC_{s,\rho}(\thr)}\right)^{T-t} \norm{x_t - y}^2_2 \right]\\
&= \left(1 - 2\RC_{s,\rho}(\thr)\right) \cdot \left[\left(\frac{1-1/\kappa}{1-2\RC_{s,\rho}(\thr)}\right)^T \norm{x_0 - y}^2_2 -  \norm{x_T - y}^2_2 \right]\\
&\leq  \left(\frac{1-1/\kappa}{1-2\RC_{s,\rho}(\thr)}\right)^T \norm{x_0 - y}^2_2 ,
\end{align*}
where the next-to-last step simply cancels terms in the telescoping sum.
After rescaling, we have
\[\frac{\sum_{t=1}^{T} 2\eta_t\left(\frac{1-1/\kappa}{1-2\RC_{s,\rho}(\thr)}\right)^{T-t} \loss(x_t)}{\sum_{t=1}^{T} 2\eta_t\left(\frac{1-1/\kappa}{1-2\RC_{s,\rho}(\thr)}\right)^{T-t}} \leq \loss(y) + \frac{\left(\frac{1-1/\kappa}{1-2\RC_{s,\rho}(\thr)}\right)^T \norm{x_0 - y}^2_2 }{\sum_{t=1}^{T} 2\eta_t\left(\frac{1-1/\kappa}{1-2\RC_{s,\rho}(\thr)}\right)^{T-t}}.
\]
The left-hand side is a weighted average of $\loss(x_1),\loss(x_2),\dots,\loss(x_T)$, and is therefore lower-bounded by $\min_{t=1,\dots,T}\loss(x_t)$, while the denominator on the right-hand side is lower-bounded as
\[\sum_{t=1}^{T} 2\eta_t\left(\frac{1-1/\kappa}{1-2\RC_{s,\rho}(\thr)}\right)^{T-t}\geq 2\eta_T \geq \frac{2}{\beta}.\] After simplifying, we therefore have
\[\min_{t=1,\dots,T}\loss(x_t)\leq \loss(y) + \frac{\beta}{2}\cdot \left(\frac{1-1/\kappa}{1-2\RC_{s,\rho}(\thr)}\right)^T \cdot \norm{x_0 - y}^2_2 ,\]
as desired.
\end{proof}

\begin{proof}[Proof of Theorem~\ref{thm:lowerbd}]
By definition of $\RC_{s,\rho}(\thr)$, for any $\delta>0$, there exist some $s'$-sparse $y\in\R^d$ and some $z\in\R^d$ such that $x = \thr(z)\neq y$ and
\[\langle y - x, z-x\rangle \geq \RC_{s,\rho}(\thr)\cdot \norm{y-x}^2_2\cdot (1-\delta).\]
Let $U\in\R^{d\times d}$ be any orthogonal matrix with its first column equal to $\frac{y-x}{\norm{y-x}_2}$. We now define an objective function as
\[\loss(w) = -\beta \langle z-x, w-x\rangle + \frac{1}{2}(w-x)^\top UDU^\top (w-x)\text{ where }D=\left(\begin{array}{cccc}\alpha&0&\dots&0\\0&a_2&\dots&0\\\dots&\dots&\dots&\dots\\0&0&\dots&a_d\end{array}\right),\]
for some $a_2,\dots,a_d\in [\alpha,\beta]$.
Clearly, $\loss$ satisfies $(\alpha,s)$-RSC and $(\beta,s)$-RSM. Next, we can check that $\loss(x)=0$, while
\begin{align*}
\loss(y)& = -\beta \langle z-x, y-x\rangle + \frac{1}{2}(y-x)^\top UDU^\top (y-x)\\
&= -\beta \langle z-x, y-x\rangle +\frac{\alpha}{2}\norm{y-x}^2_2\\
&\leq -\beta \RC_{s,\rho}(\thr)\cdot \norm{y-x}^2_2\cdot (1-\delta) + \frac{\alpha}{2}\norm{y-x}^2_2\\
&= - \beta\norm{y-x}^2_2\cdot \left(\RC_{s,\rho}(\thr)\cdot (1-\delta) - \frac{1}{2\kappa}\right),
\end{align*}
where the first step uses the definition of $U$, while the inequality follows from the definition of $x,y,z$. Since $\gamma_{s,\rho}(\thr)>\frac{1}{2\kappa}$ by assumption, and $\delta$ can be chosen to be arbitrarily small, we therefore have $\loss(y)<0=\loss(x)$.

Finally, computing $\nabla\loss(w) = -\beta(z-w) + UDU^\top(w-x)$, suppose that we run the iterated thresholding algorithm~\eqref{eqn:iterative_thresh} with step size $\eta = \frac{1}{\beta}$, initialized at the point $x_0=x$. Since we have $\nabla\loss(x) = -\beta(z-x)$, the first update step is given by
\[x_1 =\thr\left(x - \frac{1}{\beta}\nabla \loss(x)\right) = \thr(z) = x.\]
This proves that $x$ is a stationary point of the algorithm---in other words, if the algorithm is initalized at $x_0=x$, then $x_t=x$ for all $t\geq 1$. Therefore, $\lim_{t\rightarrow\infty}\loss(x_t) = \loss(x) >\loss(y)$, as desired.
\end{proof}

\subsection{Proof for regularized minimization}
In this section we prove the result for the regularized rather than sparsity-constrained case given in Section~\ref{sec:softthresh}, i.e., using a proximal
map in place of a sparse thresholding operator.

\begin{proof}[Proof of Theorem~\ref{thm:prox_worstcase}]
Without loss of generality, take $\alpha=1$. 
Let $v,w$ be dense vectors satisfying $w\in\partial\reg(v)$, which are assumed to exist by the conditions of the theorem.
Define
\[\loss(z) = \frac{1}{2}\norm{z - (v + cw)}^2_2,\]
where $c$ is chosen to be large enough to satisfy
\[c^2\norm{w}^2_{\infty} > 2c \norm{w}_2\norm{v}_2 + \norm{v}^2_2.\]
This function is $\alpha$-strongly convex and $\beta$-smooth (since $\beta\geq \alpha\geq 1$), and therefore
satisfies $(\alpha,d)$-RSC and $(\beta,d)$-RSM.

Let $i\in\{1,\dots,d\}$ be the index of the largest-magnitude entry in $w$, and let $y = cw_i \cdot \mathbf{e}_i$, 
 where $\mathbf{e}_i$ is the vector with a $1$ in entry $i$ and zeros elsewhere. The condition on $c$ implies that
\[\loss(v) = c^2\norm{w}^2_2 = c^2 w_i^2 + c^2 \norm{w_{-i}}^2_2 \\
>c^2\norm{w_{-i}}^2_2 + 2c\norm{w_{-i}}_2\norm{v}_2 + \norm{v}^2_2 \geq\norm{cw_{-i} +v}^2_2 = \loss(y).\]

Next fix any $\lambda\geq0$. If $\norm{\widehat{x}_\lambda}_0=d$ then this proves our claim for this $\lambda$.
Otherwise, assume that $\norm{\widehat{x}_\lambda}_0<d$. By definition of $\widehat{x}_\lambda$, it must be the case
that 
\[\widehat{x}_\lambda \in \arg\min_{z\in\R^d}\left\{\frac{1}{2}\norm{z - (v+cw)}^2_2+ \lambda\reg(z)\right\} \quad \Rightarrow \quad \widehat{x}_{\lambda} = \textnormal{Prox}_{\lambda\reg}(v+cw).\] On the other hand, since $w\in\partial\reg(v)$,
this means that
\[v =\textnormal{Prox}_{c\reg}(v+cw) \quad \Rightarrow \quad v \in \arg\min_{z\in\R^d}\left\{\frac{1}{2}\norm{z - (v+cw)}^2_2+ c\reg(z)\right\}.\]
Thus we have
\[\loss(\widehat{x}_\lambda) + \lambda\reg(\widehat{x}_\lambda) \leq \loss(v) + \lambda\reg(v) \text{\quad and \quad}\loss(\widehat{x}_\lambda) + c\reg(\widehat{x}_\lambda) \geq \loss(v) + c\reg(v).\]
Rearranging terms, we obtain
\[\big(\loss(\widehat{x}_\lambda) - \loss(v)\big) \cdot (\lambda - c) \geq 0.\]
Furthermore, since  $v$ is dense but $\widehat{x}_{\lambda}$ is not, by our assumptions on the proximal map, this implies
that we must have $c<\lambda$, and therefore,
\[\loss(\widehat{x}_\lambda) \geq \loss(v) > \loss(y),\]
where the last step was proved previously. This completes the proof of the theorem.
\end{proof}

\subsection{Proofs for calculating relative concavity}
In this section we give the proofs for all lemmas from Sections~\ref{sec:RC} and~\ref{sec:matrix}, calculating upper and lower bounds on relative concavity in the vector and matrix setting.

\begin{proof}[Proof of Lemma~\ref{lem:RC_HT}]
Fix any $z\in\R^d$ and any $s'$-sparse $y\in\R^d$. Let $x=\HT(z)$. Let $S = \textnormal{Support}(x)$ and $S' = \textnormal{Support}(y)$.
We can write
\begin{equation}\label{eqn:HT_proof_start}
\frac{\inner{y-x}{z-x}}{\norm{y-x}^2_2}
=\frac{\inner{y_{S'\backslash S}}{z_{S'\backslash S}}}{\norm{y_{S'\backslash S}}^2_2 + \norm{(y-z)_S}^2_2},\end{equation}
since $x_S=z_S$ by definition of hard thresholding.
Next, let $\tau = \max_{i\not\in S} |z_i|$, i.e.~the $(s+1)$-st largest magnitude entry of $z$.
Then $|z_i|\geq \tau$ for all $i\in S$ by definition of the method, and so $|(y-z)_i|\geq \tau$ for all $i\in S\backslash S'$. Therefore,
\[\norm{(y-z)_S}^2_2\geq \tau^2 \cdot (s-\ell),\]
where $\ell = |S\cap S'|$. We also have
\[\inner{y_{S'\backslash S}}{z_{S'\backslash S}}\leq \norm{y_{S'\backslash S}}_2\cdot \tau\sqrt{s'-\ell},\]
since $\norm{z_{S'\backslash S}}_2\leq \tau \sqrt{|S'\backslash S|}\leq \tau\sqrt{s'-\ell}$. Combining everything and returning to~\eqref{eqn:HT_proof_start}, we have
\[\frac{\inner{y-x}{z-x}}{\norm{y-x}^2_2}
\leq \frac{\norm{y_{S'\backslash S}}_2\cdot \tau\sqrt{s'-\ell}}{\norm{y_{S'\backslash S}}^2_2 + \tau^2 \cdot (s-\ell)} \leq \max_{t\geq 0} \frac{t\sqrt{s'-\ell}}{t^2 + s-\ell},\]
where for the last step we consider $t = \frac{\norm{y_{S'\backslash S}}_2}{\tau}$. This quantity is maximized at $t = \sqrt{s-\ell}$, so we obtain
\[\frac{\inner{y-x}{z-x}}{\norm{y-x}^2_2}\leq \frac{\sqrt{s-\ell}\cdot\sqrt{s'-\ell}}{2(s-\ell)} = \frac{1}{2}\sqrt{\frac{s'-\ell}{s-\ell}}. \]
Finally, by definition, we must have $\ell\in\{0,1,\dots,s'\}$, so we obtain
\[\frac{\inner{y-x}{z-x}}{\norm{y-x}^2_2}\leq \max_{\ell\in\{0,1,\dots,s'\}} \frac{1}{2}\sqrt{\frac{s'-\ell}{s-\ell}} = \frac{1}{2} \sqrt{\rho},\]
where the maximum is obtained at $\ell=0$. This proves that $\RC_{s,\rho}(\HT)\leq \frac{\sqrt{\rho}}{2}$.

To prove a matching lower bound, consider $z = \mathbf{1}_d$. Then $x = \HT(z) = \mathbf{1}_S$, for some subset $S\subset\{1,\dots,d\}$ of cardinality $|S|=s$. Let $S'\subset\{1,\dots,d\}\backslash S$ be a disjoint set of cardinality $|S'|=s'$ (recall that we have assumed $s+s'\leq d$), and let $y = \frac{1}{\sqrt{\rho}} \cdot \mathbf{1}_{S'}$. Then
\[\frac{\inner{y-x}{z-x}}{\norm{y-x}^2_2} = \frac{\frac{1}{\sqrt{\rho}} \cdot s'}{\frac{1}{\rho}\cdot s' + s} = \frac{\sqrt{\rho}}{2},\]
thus proving that $\RC_{s,\rho}(\HT)\geq \frac{\sqrt{\rho}}{2}$.
\end{proof}

\begin{proof}[Proof of Lemma~\ref{lem:RC_contin}]
We consider two cases. If there exists $z\neq 0$ such that $\thr(z)=0$, then fix any such $z$ and fix an index $i$ such that $z_i \neq 0$. Let $y = \epsilon\cdot \textnormal{sign}(z_i)\cdot \mathbf{e}_i$, where $\mathbf{e}_i$ is the vector with a $1$ in entry $i$ and zeros elsewhere. $y$ is $s'$-sparse since $s'\geq 1$. Then
\[\frac{\langle y-\thr(z),z-\thr(z)}{\norm{y-\thr(z)}^2_2}
=\frac{\langle y,z\rangle}{\norm{y}^2_2}
= \frac{\epsilon |z_i|}{\epsilon^2} = \frac{|z_i|}{\epsilon}.
\]
Since $|z_i|>0$ and $\epsilon>0$ can be taken to be arbitrarily small, this shows that $\RC_{s,\rho}(\thr) = \infty\geq 1$.

On the other hand, if $\thr(z)\neq 0$ for any $z\neq 0$, then define $g: \mathbb{S}^{d-1}\rightarrow \mathbb{S}^{d-1}$ by $g(x)=\frac{\thr(x)}{\norm{\thr(x)}_2}$, where $\mathbb{S}^{d-1}$ is the unit sphere in $\R^d$. Since $\norm{\thr(x)}_2$ is a continuous function on a compact space and takes only positive values, $\norm{\thr(x)}_2$ is lower-bounded by a positive value, which then implies $g$ is continuous. Since $g(x)$ inherits the sparsity of $\thr(x)$ for all $x$, we see that $g(\mathbb{S}^{d-1})\subset \mathbb{S}^{d-1}\backslash\{x_0\}$, where $x_0 =\mathbf{1}_d/\sqrt{d}$ is a dense point on the sphere. Now let $h$ be a homeomorphism from $\mathbb{S}^{d-1}\setminus\{x_0\}$ to $\mathbb{R}^{d-1}$ (for example, take $h$ to be the stereographic projection from the point $x_0$). Then $h\circ g: \mathbb{S}^{d-1}\rightarrow \mathbb{R}^{d-1}$ is continuous. By the Borsuk-Ulam theorem, there exist two antipodal point being mapped to the same point, i.e.~there exists $z\in\mathbb{S}^{d-1}$ such that $h\circ g(z)=h\circ g(-z)$, and thus $g(z)=g(-z)$ since $h$ is bijective. Now, there are two possibities---either $\inner{z}{g(z)}\leq 0$, or alternately $\inner{z}{g(z)}>0$ in which case $\inner{-z}{g(z)}=\inner{-z}{g(-z)}<0$. Replacing $z$ with $-z$ if needed, then, we have some $z\in\mathbb{S}^{d-1}$ such that $\inner{z}{g(z)}\leq 0$. Then by definition of $g$, we have $\inner{z}{\thr(z)}\leq 0$. Setting $y=\mathbf{0}_d$, we then calculate
\[\frac{\inner{y-\thr(z)}{z-\thr(z)}}{\norm{y-\thr(z)}^2_2} =\frac{\norm{\thr(z)}^2_2 - \inner{z}{\thr(z)}}{\norm{\thr(z)}^2_2} \geq\frac{\norm{\thr(z)}^2_2 - 0}{\norm{\thr(z)}^2_2} = 1,\]
proving that $\RC_{s,\rho}(\thr)\geq 1$, as desired.
\end{proof}

\begin{proof}[Proof of Lemmas~\ref{lem:RC_RT} and~\ref{lem:RC_LQ}]
These lemmas are special cases of the general result, Lemma~\ref{lem:unifying_RC}, proved below.
\end{proof}

\begin{proof}[Proof of Lemma~\ref{lem:RC_lowerbd}]
Let $z = \mathbf{1}_d$ and let $x=\thr(z)$. Let $S=\textnormal{Support}(x)$, with $|S|\leq s$, and let $S'\subset \{1,\dots,d\}\backslash S$ be any set disjoint from $S$, with cardinality $|S'| = s'$ (recall that we have assumed $s+s'\leq d$). Let $y = t \cdot \mathbf{1}_{S'}$, where
\[t = \frac{r}{\rho}\left(1 - r + \sqrt{r^2-2r+ 1+\rho}\right), \text{ for }r = \frac{\norm{x}_2}{\sqrt{s}}.\]
Then $\norm{y}_0=s'$, and we can calculate
\begin{align}
\frac{\inner{y-x}{z-x}}{\norm{y-x}^2_2}
\notag&=\frac{\inner{y}{z} - \inner{x}{z}+\norm{x}^2_2}{\norm{y}^2_2 + \norm{x}^2_2}\text{ since $x,y$ have disjoint supports $S$ and $S'$}\\
\notag&=\frac{t\cdot s' - \inner{x}{\mathbf{1}_d}+\norm{x}^2_2}{t^2\cdot s' + \norm{x}^2_2}\text{ by definition of $y$ and $z$}\\
\label{eqn:RC_lowerbd_step}&\geq\frac{t\cdot s' - \sqrt{s}\cdot \norm{x}_2+\norm{x}^2_2}{t^2\cdot s' + \norm{x}^2_2}\text{ since $x$ is $s$-sparse}\\
\notag&=\frac{t\cdot s' - s \cdot r + s\cdot r^2}{t^2\cdot s' + s\cdot r^2}\\
\notag&=\frac{t\cdot \rho - r + r^2}{t^2\cdot \rho + r^2}.
\end{align}
Plugging in the value of $t$ that we chose above, we continue:
\begin{align*}
\frac{\inner{y-x}{z-x}}{\norm{y-x}^2_2}
&\geq\frac{\frac{r}{\rho}\left(1 - r + \sqrt{r^2-2r+ 1+\rho}\right)\cdot \rho - r + r^2}{\frac{r^2}{\rho^2}\left(1 - r + \sqrt{r^2-2r+ 1+\rho}\right)^2\cdot \rho + r^2}\\
&=\frac{r \sqrt{r^2-2r+1+\rho}}{\frac{r^2}{\rho}\left(1 - r + \sqrt{r^2-2r+ 1+\rho}\right)^2 + r^2 }\\
&=\frac{\rho }{r\left(2\sqrt{r^2-2r+1+\rho} + 2(1-r) \right)},
\end{align*}
where the last few steps are just simplifying the expression.
Next, we consider the denominator. It can easily be verified that
\[r\left(2\sqrt{r^2-2r+1+\rho} + 2(1-r) \right)\leq 1+\rho\]
for all $r\geq 0$, which we check by verifying that the left-hand side is maximized when $r=\frac{1+\rho}{2}$. Therefore,
\[\frac{\inner{y-x}{z-x}}{\norm{y-x}^2_2} \geq \frac{\rho}{1+\rho},\]
which proves that
\[\RC_{s,\rho}(\thr)\geq \frac{\rho}{1+\rho},\]
as desired.
\end{proof}

\begin{proof}[Proof of Lemma~\ref{lem:unifying_RC}]
We first show the upper bound. Fix any $z\in\R^d$ and any $s'$-sparse $y\in\R^d$. Let $x=\thrsig(z)$. Let $S = \textnormal{Support}(x)$ and $S' = \textnormal{Support}(y)$, and let $\ell = |S\cap S'|$.  Then we have
\begin{align*}
\frac{\inner{y-x}{z-x}}{\norm{y-x}^2_2}
&=\frac{\inner{(y-x)_{S'}}{(z-x)_{S'}} - \inner{x_{S\backslash S'}}{(z-x)_{S\backslash S'}}}{\norm{(y-x)_{S'}}^2_2  + \norm{x_{S\backslash S'}}^2_2}\\
&\leq\frac{\norm{(y-x)_{S'}}_2\norm{(z-x)_{S'}}_2 - \inner{x_{S\backslash S'}}{(z-x)_{S\backslash S'}}}{\norm{(y-x)_{S'}}^2_2  + \norm{x_{S\backslash S'}}^2_2}.
\end{align*}
Let $\tau = \max_{i\not\in S} |z_i|$, i.e. the thresholding level. Due to the definition of $\thrsig$ and the assumptions on $\sigma$, it is direct to verify the following bounds: if $i\not\in S$, then $|(z-x)_i|=|z_i|\leq \tau$; if $i\in S$, then $|(z-x)_i| \leq \tau\sigma(1)$, $|x_i|\geq \tau(1-\sigma(1))$, and $x_i(z-x)_i\geq \tau^2 \sigma(1)\big(1-\sigma(1)\big)$. Plugging these bounds back in our calculation above, we get:
\begin{multline*}
\frac{\inner{y-x}{z-x}}{\norm{y-x}^2_2}
\leq\frac{\norm{(y-x)_{S'}}_2\sqrt{(s'-\ell)\cdot \tau^2+\ell\cdot \tau^2\sigma(1)^2}-(s-\ell)\cdot \tau^2\sigma(1)\big(1-\sigma(1)\big)}{\norm{(y-x)_{S'}}_2^2+(s-\ell)\cdot \tau^2\big(1-\sigma(1)\big)^2}\\
\leq\max_{t\geq 0}\frac{t\sqrt{\frac{s'-\ell}{s-\ell}+\frac{\ell}{s-\ell}\cdot \sigma(1)^2}-\sigma(1)\big(1-\sigma(1)\big)}{t^2+\big(1-\sigma(1)\big)^2},
\end{multline*}
where the last step holds by considering $t=\frac{\norm{(y-x)_{S'}}_2}{\tau\sqrt{s-\ell}}$. Next, we can calculate
\[
\max_{\ell \in\{0,\dots,s'\}}\sqrt{\frac{s'-\ell}{s-\ell}+\frac{\ell}{s-\ell}\cdot\sigma(1)^2}
=\max_{\ell \in\{0,\dots,s'\}}\sqrt{\frac{s' - \big(1-\sigma(1)^2\big) \ell}{s-\ell}} = \sqrt{\frac{\rho}{\min\big\{1,(1-\rho)/\sigma(1)^2\big\}}},\]
where the maximum is attained at $\ell=0$ if $\rho\leq 1-\sigma(1)^2$, and at $\ell=s'$ otherwise.
It therefore follows that
\begin{multline}\label{eqn:max_RC_thrsig}
\frac{\inner{y-x}{z-x}}{\norm{y-x}^2_2}\leq \max_{t\geq 0}\frac{t\sqrt{\frac{\rho}{\min\{1,(1-\rho)/\sigma(1)^2\}}}-\sigma(1)\big(1-\sigma(1)\big)}{t^2+\big(1-\sigma(1)\big)^2}\\
=\frac{\frac{\rho}{\min\{1,(1-\rho)/\sigma(1)^2\}}}{2\sigma(1)\big(1-\sigma(1)\big)\left(1 + \sqrt{1 + \frac{\rho/\sigma(1)^2}{\min\{1,(1-\rho)/\sigma(1)^2\}}}\right)},
\end{multline}
where to compute the last step we can check that the maximum is achieved at
\[t = \frac{\sigma(1)\big(1-\sigma(1)\big) + \sqrt{\sigma(1)^2\big(1-\sigma(1)\big)^2 + \frac{\rho\big(1-\sigma(1)\big)^2}{\min\{1,(1-\rho)/\sigma(1)^2\}}}}{\sqrt{\frac{\rho}{\min\{1,(1-\rho)/\sigma(1)^2\}}}}.\] This proves the upper bound. To prove the lower bound, we simply choose $y$ and $z$ so that the inequalities above become equalities. Set $z = \mathbf{1}_d$ and $x=\RT(z)$, and let $S = \textnormal{Support}(x)$. Due to the definition of $\thrsig$, we see that $x = (1-\sigma(1))\cdot\mathbf{1}_S$. To construct $y$, we consider two cases. If $\rho\leq 1-\sigma(1)^2$, we let $S'\subset\{1,\dots,d\}\backslash S$ be any set disjoint from $S$ with cardinality $|S'|=s'$ (recall that $s+s'\leq d$ by assumption). Then let $y=\frac{t}{\sqrt{\rho}}\cdot\mathbf{1}_{S'}$, where $t\geq 0$ is arbitrary, so that we have
\[\frac{\inner{y-x}{z-x}}{\norm{y-x}^2_2} = \frac{\frac{t}{\sqrt{\rho}}\cdot s' - \sigma(1)\big(1-\sigma(1)\big)\cdot s}{\frac{t^2}{\rho}\cdot s' + \big(1-\sigma(1)\big)^2\cdot s} = \frac{t\sqrt{\rho} - \sigma(1)\big(1-\sigma(1)\big)}{t^2 + \big(1-\sigma(1)\big)^2}.\]
Alternately, if $\rho>1-\sigma(1)^2$, let $S'\subset S$ be any set of cardinality $|S'|=s'$, and set $y =\left(1-\sigma(1)+t\sqrt{\frac{1-\rho}{\rho}}\right)\cdot\mathbf{1}_{S'}$, where again $t>0$ is arbitrary. For this second case, we calculate
\[\frac{\inner{y-x}{z-x}}{\norm{y-x}^2_2} = \frac{t\sigma(1)\sqrt{\frac{1-\rho}{\rho}}\cdot s' - \sigma(1)\big(1-\sigma(1)\big)\cdot (s-s')}{t^2\cdot\frac{1-\rho}{\rho}\cdot s' + \big(1-\sigma(1)\big)^2\cdot (s-s')} = \frac{t\sqrt{\frac{\rho}{(1-\rho)/\sigma(1)^2}} - \sigma(1)\big(1-\sigma(1)\big)}{t^2 + \big(1-\sigma(1)\big)^2}.\]
Combining the two cases, and recalling that $t\geq 0$ is arbitrary, we see that
\[\RC_{s,\rho}(\thrsig)\geq \max_{t\geq 0}\frac{t\sqrt{\frac{\rho}{\min\{1,(1-\rho)/\sigma(1)^2\}}}-\sigma(1)\big(1-\sigma(1)\big)}{t^2+\big(1-\sigma(1)\big)^2} ,\]
which matches the upper bound calculated in~\eqref{eqn:max_RC_thrsig} above.
\end{proof}

\begin{proof}[Proof of Lemma~\ref{lem:RC_lowerbd_matrix}]
Without loss of generality, let $n\geq m$. Let $Z=\left(\begin{array}{c}\mathbf{I}_m \\\mathbf{0}_{(n-m)\times m}\end{array}\right)$, and let $X=\mthr(Z)$. Let $X=UDV^\top$ be a singular value decomposition of $X$, with $U\in\R^{n\times s},V\in\R^{m\times s}$. Let $V_{\perp}\in\R^{m\times s'}$ be an orthonormal matrix that is orthogonal to $V$ (recall that $s+s'\leq m$ by assumption), and let $Y=t\cdot \left(\begin{array}{c}V_{\perp}V_{\perp}^\top \\ \mathbf{0}_{(n-m)\times m}\end{array}\right)$, for some $t\geq 0$.
Then $\rank(Y) = s'$, and we can calculate
\begin{align*}
\frac{\inner{Y-X}{Z-X}}{\fronorm{Y-X}^2}
&=\frac{\inner{Y}{Z} - \inner{X}{Z}+\fronorm{X}^2}{\fronorm{Y}^2 + \fronorm{X}^2}\text{ since $X$ and $Y$ have orthogonal row spaces by def.~of $V_{\perp}$}\\
&=\frac{t\cdot s' - \norm{X}_*\norm{Z}+\fronorm{X}^2}{t^2\cdot s' + \fronorm{X}^2}\text{ by def.~of $Y$ and $Z$ (here $\norm{\cdot}_*$ is the nuclear norm)}\\
&\geq\frac{t\cdot s' - \sqrt{s}\cdot \fronorm{X}+\fronorm{X}^2}{t^2\cdot s' + \fronorm{X}^2}\text{ since $\rank(X)\leq s$}.
\end{align*}
Comparing to~\eqref{eqn:RC_lowerbd_step}, we see that the remainder of the argument is identical to the proof of Lemma~\ref{lem:RC_lowerbd}.
\end{proof}

\begin{proof}[Proof of Lemma~\ref{lem:matrix_RC}]
First, fix any $y,z\in\R^d$. Let $Y=\textnormal{diag}(y)$ and  $Z=\textnormal{diag}(z)$, so that $\mthr(Z) = \textnormal{diag}(\thr(z))$ and $\rank(Y)=\norm{y}_0$. Then we trivially have $\frac{\inner{Y-\mthr(Z)}{Z-\mthr(Z)}}{\fronorm{Y-\mthr{Z}}^2} = \frac{\inner{y-\thr(z)}{z-\thr(z)}}{\norm{y-\thr{z}}^2_2}$, and maximizing over all $y,z$ yields the restricted concavity, $\RC_{s,\rho}(\thr)$. This proves that $\mRC_{s,\rho}(\mthr)\geq \RC_{s,\rho}(\thr)$.

 Next we show the reverse inequality. Consider any $Y,Z\in\R^{n\times m}$ with $\rank(Y)\leq s'$, and let $X=\mthr(Z) =U\cdot\textnormal{diag}(\thr(d))\cdot V^\top$, where $Z=U\cdot\textnormal{diag}(d)\cdot V^\top$ is the singular value decomposition. We want to prove the claim that
\[\inner{Y-X}{Z-X}\leq \RC_{s,\rho}(\thr)\fronorm{Y-X}^2.\]
In other words, defining
\[\mathsf{h}(Y) =  \RC_{s,\rho}(\thr)\fronorm{Y-X}^2 - \inner{Y-X}{Z-X} = \RC_{s,\rho}(\thr)\Fronorm{Y - \left(X + \frac{Z-X}{2\RC_{s,\rho}(\thr)}\right)}^2 -\frac{\fronorm{Z-X}^2}{4\RC_{s,\rho}(\thr)},\]
we'd like to show that $\mathsf{h}(Y)\geq 0$ for all rank-$s'$ matrices $Y$. Now, by definition of $X$, we can see that $U$ and $V$ are the left and right singular vector matrices for $X + \frac{Z-X}{2\RC_{s,\rho}(\thr)}$, and therefore $\mathsf{h}(Y)$ is minimized by some matrix $Y$ of the form $Y=U\cdot \textnormal{diag}(y)\cdot V^\top$, for some $s'$-sparse vector $y$. Now, for any matrix of this form, we have
\begin{multline*}\inner{Y-X}{Z-X}=\inner{U\cdot \textnormal{diag}(y)\cdot V^\top - U\cdot \textnormal{diag}(\thr(d))\cdot V^\top}{U\cdot \textnormal{diag}(d)\cdot V^\top - U\cdot \textnormal{diag}(\thr(d))\cdot V^\top}\\
=\inner{y-\thr(d)}{d-\thr(d)}\leq \RC_{s,\rho}(\thr)\norm{y-\thr(d)}^2_2 = \RC_{s,\rho}(\thr)\fronorm{Y-X}^2,\end{multline*}
by using the definition of relative concavity for sparse vectors. This proves that
\[\min_{\rank(Y)\leq s'}\mathsf{h}(Y) = \min_{Y=U\cdot\textnormal{diag}(y)\cdot V^\top,\norm{y}_0\leq s}\mathsf{h}(Y) \geq 0,\]
thus proving that $\mRC_{s,\rho}(\mthr)\leq \RC_{s,\rho}(\thr)$, as desired.
\end{proof}

\subsection{Proofs for prediction error in linear regression}
In this section we prove our prediction error bounds for the linear regression setting.

\begin{proof}[Proof of Theorem~\ref{thm:RT_linearreg}]
Since $s=C\kappa s_0$ and so our sparsity ratio is $\rho = \frac{1}{C\kappa}\leq \frac{1}{2}$, Lemma~\ref{lem:unifying_RC} with the conditions on $\sigma$ proves that $\RC_{s,\rho}(\thrsig)\leq \rho = \frac{1}{C\kappa}$. Since this is strictly smaller than $\frac{1}{2\kappa}$, Theorem~\ref{thm:upperbd} proves that
\[\loss(\tilde\theta_t) \leq \loss(\theta_0) + \left(\frac{1-1/\kappa}{1-2/C\kappa}\right)^t \cdot\frac{\beta}{2}\norm{\widehat\theta_0 -\theta_0}^2_2.\]
Next, recalling the definition of $\loss(\theta)$, this is equivalent to
\[\frac{1}{2n}\norm{\sigma z - X(\tilde\theta_t - \theta_0)}^2_2 \leq \frac{1}{2n}\norm{\sigma  z}^2_2 + \left(\frac{1-1/\kappa}{1-2/C\kappa}\right)^t \cdot\frac{\beta}{2}\norm{\widehat\theta_0 -\theta_0}^2_2,\]
where $z\sim N(0,\mathbf{I}_n)$ and $y=X\theta_0 + \sigma z$. Rearranging terms,
\[\frac{1}{2n}\norm{X(\tilde\theta_t - \theta_0)}^2_2 \leq \frac{\sigma }{n}\inner{z}{X(\widehat\theta_t - \theta_0)} + \left(\frac{1-1/\kappa}{1-2/C\kappa}\right)^t \cdot\frac{\beta}{2}\norm{\widehat\theta_0 -\theta_0}^2_2.\]
Now, by Lemma~\ref{lem:sparse_eig} below, with probability at least $1-\delta$, we have
\begin{multline*}\inner{z}{X(\tilde\theta_t - \theta_0)}\leq \norm{X(\tilde\theta_t - \theta_0)}_2 \cdot \sqrt{7s\log(d) + 3\log(1/\delta)}\\
\leq \frac{1}{4\sigma}\norm{X(\tilde\theta_t - \theta_0)}^2_2 + \sigma (7s\log(d) + 3\log(1/\delta)),\end{multline*}
and so combining everything,
\[\frac{1}{2n}\norm{X(\tilde\theta_t - \theta_0)}^2_2 \leq \frac{1}{4n}\norm{X(\tilde\theta_t - \theta_0)}^2_2 + \sigma^2\cdot \frac{7s\log(d) + 3\log(1/\delta)}{n} + \left(\frac{1-1/\kappa}{1-2/C\kappa}\right)^t \cdot\frac{\beta}{2}\norm{\widehat\theta_0 -\theta_0}^2_2.\]
Rearranging terms, then,
\[\frac{1}{n}\norm{X(\tilde\theta_t - \theta_0)}^2_2 \leq \sigma^2 \cdot \frac{28s\log(d) + 12\log(1/\delta)}{n} + \left(\frac{1-1/\kappa}{1-2/C\kappa}\right)^t \cdot2\beta \norm{\widehat\theta_0 -\theta_0}^2_2.\]
Plugging in $s = C\kappa s_0$, this proves the theorem.
\end{proof}

\begin{lemma}\label{lem:sparse_eig}
Fix any sparsity level $s$, dimension $d\geq 3$, and sample size $n$. Fix any $s$-sparse $\theta_0\in\R^d$, and any matrix $X\in\R^{n\times d}$ such that $X\in\Xcal(\alpha,\beta,s)$ for some parameters $1\leq\alpha\leq \beta$. Let $z\sim N(0,\mathbf{I}_n)$. Then for any $\delta>0$,
\[\mathbb{P}\left\{\inner{z}{X(\theta - \theta_0)}\leq \norm{X(\theta-\theta_0)}_2\cdot \sqrt{7s\log(d) + 3\log(1/\delta)}\text{ for all $s$-sparse $\theta\in\R^d$}\right\} \geq 1-\delta.\]
\end{lemma}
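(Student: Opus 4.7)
The plan is to reduce the uniform bound over $s$-sparse $\theta$ to a uniform bound over a finite collection of projections of the Gaussian vector $z$, and then apply standard chi-squared concentration together with a union bound. Since both $\theta$ and $\theta_0$ are $s$-sparse, writing $S=\textnormal{Support}(\theta)$ and $S_0=\textnormal{Support}(\theta_0)$, the difference $\theta-\theta_0$ is supported on $T := S \cup S_0$, which has size at most $2s$. In particular, $X(\theta-\theta_0)$ lies in the column span of $X_T$, so if $P_T$ denotes the orthogonal projection onto that span, Cauchy--Schwarz gives
\[
\inner{z}{X(\theta-\theta_0)} = \inner{P_T z}{X(\theta-\theta_0)} \leq \norm{P_T z}_2 \cdot \norm{X(\theta-\theta_0)}_2.
\]
Thus it suffices to bound $\norm{P_T z}_2$ uniformly over all such $T$. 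Crucially, since $S_0$ is fixed, the set $T$ is determined by $S$, and the number of possible $S$ is at most $\binom{d}{s}\leq d^s$.

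For any fixed $T$, $\norm{P_T z}_2^2 \sim \chi^2_r$ where $r = \rank(X_T)\leq 2s$. I would invoke the Laurent--Massart chi-squared tail bound, $\pr\{\chi^2_r \geq r + 2\sqrt{r u}+2u\}\leq e^{-u}$, to obtain
\[
\pr\left\{\norm{P_T z}_2^2 \geq 2s + 2\sqrt{2s\,u}+2u\right\} \leq e^{-u}.
\]
Union-bounding over the at most $d^s$ choices of $T$ and choosing $u = s\log(d)+\log(1/\delta)$, the failure probability becomes $d^s \cdot e^{-u} = \delta$, so with probability at least $1-\delta$ the above inequality holds simultaneously for every such $T$.

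To finish, I would simplify the random-variable bound using the AM-GM estimate $2\sqrt{2s\,u}\leq 2s+u$, giving
\[
\norm{P_T z}_2^2 \leq 2s + (2s+u) + 2u = 4s + 3u = 4s + 3s\log(d) + 3\log(1/\delta).
\]
Finally, since $d\geq 3$ gives $\log(d)\geq 1$, we have $4s \leq 4s\log(d)$, yielding the clean bound $\norm{P_T z}_2^2 \leq 7s\log(d)+3\log(1/\delta)$. Combining this with the Cauchy--Schwarz inequality above completes the proof.

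The argument is conceptually routine; the only non-obvious step is realizing that one should fix $\theta_0$'s support and union-bound only over $\theta$'s support (not over all $2s$-element subsets), which keeps the union-bound contribution at $s\log(d)$ rather than $2s\log(d)$ and is exactly what is needed to match the constant $7$ in the stated bound.
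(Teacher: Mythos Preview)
Your proof is correct and follows essentially the same approach as the paper: reduce to a bound on $\norm{P_T z}_2$ via Cauchy--Schwarz, apply the Laurent--Massart chi-squared tail with degrees of freedom at most $2s$, and union-bound over the at most $d^s$ choices of support for $\theta$ (with $\theta_0$'s support fixed). Your explicit AM--GM simplification and use of $\log(d)\geq 1$ to absorb the $4s$ term are exactly the steps the paper leaves implicit in its final inequality.
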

\begin{proof}[Proof of Lemma~\ref{lem:sparse_eig}]
Let $A_0\subset\{1,\dots,d\}$ be the support of $\theta_0$. We take a union bound over all sets $A\subset\{1,\dots,d\}$ of size $|A|=s$. First, for any fixed $A$, let $U^A\in\R^{n\times |A\cup A_0|}$ be an orthogonal basis for the column space of $X_{A\cup A_0} = (X_{ij}){j\in A\cup A_0}\in\R^{n\times |A\cup A_0|}$. Then
\begin{multline*}
\inner{z}{X(\theta - \theta_0)} = \inner{z}{X_{A\cup A_0}(\theta - \theta_0)_{A\cup A_0}} = \inner{U^AU^A{}^\top z}{X_{A\cup A_0}(\theta - \theta_0)_{A\cup A_0}}\\
\leq \norm{U^AU^A{}^\top z}_2\norm{X_{A\cup A_0}(\theta - \theta_0)_{A\cup A_0}}_2 =  \norm{U^A{}^\top z}_2\norm{X_{A\cup A_0}(\theta - \theta_0)_{A\cup A_0}}_2.
\end{multline*}
Next, $\norm{U^A{}^\top z}^2_2\sim \chi^2_{|A\cup A_0|}\leq \chi^2_{2s}$. By \citet[Lemma 1]{laurent2000adaptive}, then,
\[\mathbb{P}\left\{\norm{U^A{}^\top z}^2_2 \geq 2s + 2\sqrt{2s t} + 2t\right\} \leq e^{-t}. \]
Taking $t = \log(d^s / \delta)$, we see that
\[\max_{|A|=s}\norm{U^A{}^\top z}^2_2 \leq 2s + 2\sqrt{2s\log(d^s / \delta)} + 2\log(d^s/\delta)\leq 7s \log(d) + 3\log(1/\delta)\]
with probability at least $1-\delta$, proving the lemma.
\end{proof}

\end{document}